\newcommand{\no}[1]{}
\newcommand{\dd}{\mathinner{.\,.}}
\newcommand{\Sf}{\mathcal{S}}
\newcommand{\eps}{\varepsilon}
\newcommand{\Symb}{\mathcal{A}}
\newcommand{\Act}{\mathcal{B}}
\newcommand{\Pres}{\mathcal{S}}
\newcommand{\Left}{\mathcal{L}}
\newcommand{\Right}{\mathcal{R}}
\newcommand{\rle}{\mathsf{rle}}
\newcommand{\pc}{\mathsf{pc}}
\newcommand{\Zz}{\mathbb{Z}_{\ge 0}}
\newcommand{\Zp}{\mathbb{Z}_{+}}
\newcommand{\val}{\exp}
\newcommand{\sub}{\subseteq}
\newcommand{\sm}{\setminus}
\newcommand{\ceil}[1]{\left\lceil #1 \right\rceil}
\newcommand{\floor}[1]{\left\lfloor #1 \right\rfloor}
\newcommand{\Exp}{\mathbb{E}}
\newcommand{\Oh}{O}
\newcommand{\id}{\mathsf{id}}
\newcommand{\per}{\mathsf{per}}
\newtheorem{theorem}{Theorem}[section]
\newtheorem{definition}[theorem]{Definition}
\newtheorem{lemma}[theorem]{Lemma}
\newtheorem{proposition}[theorem]{Proposition}
\newtheorem{corollary}[theorem]{Corollary}
\newtheorem{claim}[theorem]{Claim}
\newtheorem{construction}[theorem]{Construction}
\begin{document}


\title{Towards a Definitive Compressibility Measure \\ for Repetitive Sequences}

\author{
Tomasz Kociumaka,
%
Gonzalo Navarro,
%
and Nicola Prezza
\thanks{A previous partial version of this article appeared in \emph{Proc. LATIN 2020}~\cite{KNP20}.}
\thanks{Tomasz Kociumaka is with
IEOR Dept., University of California, Berkeley, U.S.
Gonzalo Navarro is with Millennium Institute for Foundational Research on Data (IMFD), Dept.\ of Computer Science, University of Chile, Chile.
Nicola Prezza is with Ca’ Foscari University of Venice, Italy.}
}

\maketitle

\begin{abstract}
Unlike in statistical compression, where Shannon's entropy is a definitive lower bound, no such clear measure exists for the compressibility of repetitive sequences. Since statistical entropy does not capture repetitiveness, ad-hoc measures like the size $z$ of the Lempel--Ziv parse are frequently used to estimate it.
The size $b \le z$ of the smallest bidirectional macro scheme captures better what can be achieved via copy-paste processes, though it is NP-complete to compute and it is not monotonic upon symbol appends.
Recently, a more principled measure, the size $\gamma$ of the smallest string \emph{attractor}, was introduced.
The measure $\gamma \le b$ lower bounds all the previous relevant ones, yet length-$n$ strings can be represented and efficiently indexed within space $O(\gamma\log\frac{n}{\gamma})$, which also upper bounds most measures.
While $\gamma$ is certainly a better measure of repetitiveness than $b$, it is also NP-complete to compute and not monotonic, and it is unknown if one can always represent a string in $o(\gamma\log n)$ space.

In this paper, we study an even smaller measure, $\delta \le \gamma$, which can be computed in linear time, is monotonic, and allows encoding every string in $O(\delta\log\frac{n}{\delta})$ space because $z = O(\delta\log\frac{n}{\delta})$. We show that $\delta$ better captures the compressibility of repetitive strings. Concretely, we show that (1) $\delta$ can be strictly smaller than $\gamma$, by up to a logarithmic factor; (2) there are string families needing $\Omega(\delta\log\frac{n}{\delta})$ space to be encoded, so this space is optimal for every $n$ and $\delta$; (3) one can build run-length context-free grammars of size $O(\delta\log\frac{n}{\delta})$, whereas the smallest (non-run-length) grammar can be up to $\Theta(\log n/\log\log n)$ times larger; and (4) within $O(\delta\log\frac{n}{\delta})$ space we can not only represent a string, but also offer logarithmic time access to its symbols and efficient indexed searches for pattern occurrences.
\end{abstract}

\begin{IEEEkeywords}
Data compression; 
Lempel--Ziv parse; 
Repetitive sequences;
String attractors;
Substring complexity
\end{IEEEkeywords}



\section{Introduction}

The recent rise in the amount of data we aim to handle is driving research into compressed data representations that can be used directly in compressed form~\cite{Nav16}. Interestingly, much of today's fastest-growing data is highly repetitive: genome collections, versioned text and software repositories, periodic sky surveys, and other sources produce data where each element in the collection is very similar to others.

Since a significant fraction of the data of interest consists of sequences, compression of highly repetitive text collections is gaining attention, as it enables space reductions of orders of magnitude~\cite{Nav20}. Statistical compression, however, is unable to capture repetitiveness~\cite{KN13}. Large space reductions are instead achieved with other kinds of compressors, such as Lempel--Ziv~\cite{LZ76}, grammar compression~\cite{KY00}, and the run-length-compressed Burrows--Wheeler transform~\cite{MNSV09}.

A fundamental question is \emph{how much compression can be achieved on repetitive collections}, or alternatively, \emph{how to measure data (compressibility by exploiting) repetitiveness}.
Unlike statistical compression, where Shannon's notion of entropy~\cite{Sha48} is a clear lower bound to what compressors can achieve, a similar notion capturing repetitiveness has been elusive. Beyond Kolmogorov's complexity~\cite{Kol65}, which is uncomputable, repetitiveness is measured in ad-hoc terms, as the result of what specific compressors achieve. A list of such measures on a string $S[1\dd n]$ includes:

\begin{LaTeXdescription}
\item[Lempel--Ziv compression~\cite{LZ76}] parses $S$ into a sequence of \emph{phrases}, each phrase being the longest string that occurs starting to the left in $S$. The associated measure is the number $z$ of phrases produced, which can be computed in $O(n)$ time~\cite{RPE81}.
\item[Bidirectional macro schemes~\cite{SS82}] extend Lempel--Ziv so that the source of each phrase may precede or follow it, as long as no circular dependencies are introduced. 
The associated measure $b$ is the number of phrases of the smallest parsing. It satisfies $b \le z = O(b\log\frac{n}{b})$~\cite{NOP20}, but computing $b$ is NP-complete~\cite{Gal82}.
\item[Grammar-based compression~\cite{KY00}] builds a context-free grammar that generates (only) $S$. 
The associated measure is the size $g$ of the smallest grammar (i.e., the total length of the right-hand sides of the rules). 
It satisfies $z \le g = O(z\log\frac{n}{z})$ and, while it is NP-complete to compute $g$, grammars of size $O(z\log\frac{n}{z})$ can be constructed in linear time~\cite{Ryt03,CLLPPSS05,Jez16}.
\item[Run-length grammar compression~\cite{NIIBT16}] allows in addition rules $A \rightarrow B^t$ ($t$ repetitions of $B$) of constant size. The measure is the size $g_{rl}$ of the smallest run-length grammar, and it satisfies $z \le g_{rl} \le g$ and $g_{rl}=O(b\log\frac{n}{b})$~\cite{NOP20}.
\item[Collage systems~\cite{KMSTSA03}] extend run-length grammars by allowing truncation: in constant space, we can refer to a prefix or a suffix of another nonterminal. The associated measure $c$ satisfies $c \le g_{rl}$, $b = O(c)$, and $c=O(z)$~\cite{NOP20}.
\item[Burrows--Wheeler transform (BWT)~\cite{BW94}]
is a permutation of $S$ that tends to have long runs of equal letters if $S$ is repetitive. The number $r$ of maximal equal-letter runs in the BWT can be found in linear time. It is known that $\frac{b}{2} \le r=O(b\log^2 n)$~\cite{NOP20,KK19}.
\item[CDAWGs~\cite{BBHMCE87}] are automata that recognize every substring of $S$.
The associated measure of repetitiveness is $e$, the size of the smallest such automaton (compressed by dissolving states of in-degree and out-degree one), which can be built in linear time~\cite{BBHMCE87}.
The measure $e$ is always larger than $r$, $g$, and $z$~\cite{BCGPR15,BC17}.
\item[Lex parsing~\cite{NOP20}] is analogous to Lempel--Ziv parsing, but each phrase must point to a lexicographically smaller source. The lex parsing is computed in linear time. Its number of phrases, $v$, satisfies $\frac{b}{2} \le v \le 2r$ and $v \le g_{rl}$~\cite{NOP20}.
\end{LaTeXdescription}

For each measure $x$ above, we can represent $S[1\dd n]$ in space $O(x)$ (meaning $O(x\log n)$ bits in this article). As seen, the measures form a complex hierarchy of dominance relations~\cite{Nav20}, where $b$ asymptotically dominates all the others. A problem with $b$ (and also $z$, $c$, $r$, and $v$) is that it is unknown how to \emph{access} $S$ (i.e., extract any character $S[i]$) efficiently (say, in $n^{o(1)}$ time, i.e., without decompressing much of it) within space $O(b)$ (or $O(\max(z,c,r,v))$). This has been achieved in time $O(\log n)$, but only within space $O(z\log\frac{n}{z})$~\cite{CLLPPSS05,Ryt03}, $O(e)$~\cite{BCspire17}, $O(r\log\frac{n}{r})$~\cite{GNP18}, $O(g)$~\cite{BLRSRW15}, and even $O(g_{rl})$~\cite{CEKNP19}, the latter of which is $O(b\log\frac{n}{b})$ and subsumes all the other spaces. Providing direct access to the sequences is essential for manipulating them in compressed form, without ever having to decompress them.

Just accessing the string is not sufficient, however, for many applications. One of the most fundamental text processing tasks is \emph{string matching}: find all the occurrences in $S$ of a short string $P$. This is particularly challenging when the string $S$ is large and scanning it sequentially is not viable. We then resort to \emph{indexes}, which are data structures offering $n^{o(1)}$-time string matching (and possibly other more sophisticated capabilities) over a collection of strings.
Statistically compressed text indexes are already mature~\cite{NM07} but, as explained earlier, are insensitive to repetitiveness. 
Various more recent compressed indexes build on the repetitiveness measures above; see a thorough review~\cite{Nav20.2}.
The smallest of those find the $occ$ occurrences of $P[1\dd m]$ in $O(g_{rl})$ space and $O(m\log n + occ \log^\epsilon n)$ time, for any constant $\epsilon>0$ \cite{CEKNP19}, or $O(r)$ space and $O(m\log\log\sigma+occ)$ time over an alphabet of size $\sigma$~\cite{GNP18,NT20}.
Just counting the number of occurrences can be done in space $O(g)$ (not $O(g_{rl})$) and time $O(m\log^{2+\epsilon} n)$ \cite{CEKNP19}, or space $O(r)$ and time $O(m\log\log\sigma)$ \cite{GNP18,NT20}. 

A relevant recent development in measuring repetitiveness is the concept of string \emph{attractor}~\cite{KP18}.
An attractor $\Gamma$ is a set of positions in $S$ such that any substring of $S$ has an occurrence covering a position in $\Gamma$. Since $\gamma = O(b)$ \cite{KP18}, the size of the smallest attractor asymptotically lower bounds all the repetitiveness measures listed above; however, it is unknown if one can represent any string in $O(\gamma)$ space. We can in space $O(\gamma\log\frac{n}{\gamma})$, within which we can also access any symbol of $S$ in time $O(\log \frac{n}{\gamma})$ \cite{KP18}, and even support indexed text searching \cite{NP18} within time as low as $O(m + (occ+1)\log^\epsilon n)$ for locating all the occurrences and $O(m+\log^{2+\epsilon} n)$ time for counting them \cite{CEKNP19}. It is known that $g_{rl}=O(\gamma\log\frac{n}{\gamma})$~\cite{CEKNP19}, though $g_{rl}$ can be smaller than $\gamma\log\frac{n}{\gamma}$ by up to a logarithmic factor, $\log\frac{n}{\gamma}$, so the slower index of size $O(g_{rl})$ offers better space in general.

In terms of measuring repetitiveness, both $\gamma$ and $b$ share some unsatisfactory aspects. Both are NP-hard to compute~\cite{Gal82,KP18}, and both are non-monotone when $S$ grows by appending characters at the endpoints of the string~\cite{Nav20,MRRRS20}.

\subsection{Our contributions}

In this paper, we study a new measure of repetitiveness, $\delta$, which arguably captures better the concept of compressibility in repetitive strings and is more convenient to deal with.
Although this measure was already introduced in a stringology context~\cite{RRRS13} and used to build indexes of size $O(\gamma\log\frac{n}{\gamma})$ without knowing $\gamma$~\cite{CEKNP19}, its properties and full potential have not been explored.
It is known that $\delta \le \gamma$ for every string, that $\delta$ can be computed in $O(n)$ time~\cite{CEKNP19}, and that $z=O(\delta\log\frac{n}{\delta})$~\cite{RRRS13}. Further, $\delta$ is insensitive to string reversals and alphabet permutations, and monotone upon appending symbol, unlike $\gamma$, $b$, or~$z$~\cite{Nav20,MRRRS20}. We prove several further properties related to $\delta$:

\begin{enumerate}
\item In Section~\ref{sec:lb-attr}, we show that $\delta$ can be strictly smaller than $\gamma$, by up to a logarithmic factor. More precisely, for any $n$ and $\delta$, there are strings with $\gamma=\Omega(\delta\log\frac{n}{\delta})$. We therefore show that the already known upper bounds $\gamma, b, c, z = O(\delta\log\frac{n}{\delta})$ are tight for every $n$ and $\delta$. 
\item In Section~\ref{sec:lb-entr}, we show that $O(\delta\log\frac{n}{\delta}\log n)$ bits, a space one can reach by Lempel--Ziv compression due to $z = O(\delta\log\frac{n}{\delta})$, is indeed tight: for every $n$ and $\delta$, there are string families that need $\Omega(\delta\log\frac{n}{\delta}\log n)$ bits to be represented. Instead, the upper bound $O(\gamma\log\frac{n}{\gamma})$~\cite{KP18} is not known to be tight.
\item In Section~\ref{sec:ub-gram}, we show that not only the Lempel--Ziv parsing, but also run-length context-free grammars, can always represent a string within $O(\delta\log\frac{n}{\delta})$ space, thus also $v,g_{rl} = O(\delta\log\frac{n}{\delta})$. However, standard context-free grammars cannot: for every $n$ and $\delta$, there strings satisfying $g = \Omega(\delta\log^2\frac{n}{\delta}/\log\log\frac{n}{\delta})$. In particular, if $\delta = n^{1-\Omega(1)}$, this lower bound simplifies to $g = \Omega(\delta\log^2 n /\log\log n)$, which is almost a logarithmic factor away from $\delta\log\frac{n}{\delta}=\Theta(\delta \log n)$.
\item In Section~\ref{sec:bt}, we combine our preceding result with previous ones on run-length grammars~\cite{CEKNP19} to show that, within space $O(\delta\log\frac{n}{\delta})$, we can not only represent a string but also provide access to any position of it in time $O(\log n)$, compute substring fingerprints in time $O(\log n)$, find the $occ$ occurrences of any pattern string $P[1\dd m]$ in time $O(m\log n + occ \log^\epsilon n)$ for any constant $\epsilon>0$, and count them in time $O(m\log^{2+\epsilon} n)$. 
Furthermore, we show that the block tree data structure~\cite{BGGKOPT15}, which provides access to string symbols and substring fingerprints in time $O(\log\frac{n}{z})$, is of size $O(\delta\log\frac{n}{\delta})$, improving upon our first result and on previous analyses and variants of block trees~\cite{KP18,NP18,prezza2019optimal}.
\end{enumerate}


\section{Basic concepts and the measure $\delta$}\label{sec:delta}

We consider strings $S[1\dd n]$ as sequences of $|S|=n$ symbols $S[1],\ldots,S[n]$, each drawn from an alphabet $\Sigma=\{1,\ldots,\sigma\}$. For simplicity, we assume that every symbol of $\Sigma$ appears in $S$, though our results hold as long as $\sigma=n^{O(1)}$. The concatenation of strings $S$ and $S'$ is denoted $S \cdot S'$; we can also identify individual symbols of $\Sigma$ with the corresponding string of length $1$. A substring of $S$ is denoted $S[i\dd j] = S[i] \cdots S[j]$ and the empty string is denoted $\varepsilon$.

We assume the transdichotomous RAM model, which is a word RAM model on a machine word of $\Theta(\log n)$ bits. Consequently, when we measure the space in words (the default), we consider that each word holds $\Theta(\log n)$ bits. Therefore $O(x)$ space is equivalent to $O(x\log n)$ bits of space.

The measure $\delta$ was recently defined by Christiansen et al.~\cite[Sec.~5.1]{CEKNP19}, though it is based on the expression $d_k(S)/k$, introduced by Raskhodnikova et al.~\cite{RRRS13} to approximate $z$. The set of values $d_k(S)$ are known as the substring complexity of $S$, so $\delta$ is a function of it.
In this section, we summarize what is known about $\delta$.

\begin{definition}\label{def:delta}
Let $d_k(S)$ be the number of distinct length-$k$ substrings in 
$S$.~Then \[\delta = \max \{ d_k(S)/k : k\in [1\dd n]\}.\]
\end{definition}

\begin{lemma}[cf.~{\cite[Lemma~3]{RRRS13}}]\label{lem:z}
It always holds that $z = O(\delta\log\frac{n}{\delta})$.
\end{lemma}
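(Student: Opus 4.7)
The plan is to adapt the argument of Raskhodnikova et al.\ by bounding the number of Lempel--Ziv phrases using the substring complexity $d_k$ at multiple length scales simultaneously.

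First, I would establish the key preliminary inequality: for every $k \geq 1$, the number of LZ phrases of length strictly less than $k$ is at most $d_k(S) + O(k)$. The reasoning is that if the self-referencing LZ phrase starting at position $p$ has length $\ell_p < k$, then by the maximality of the LZ greedy choice, the substring $S[p\dd p+k-1]$ (whenever it fits inside $S$) must not occur at any earlier position $q < p$; otherwise we could have extended the phrase to length at least $k$ by copying from $q$. Hence $p$ is a first-occurrence position of its length-$k$ substring, of which there are at most $d_k(S)$; the additive $O(k)$ term accounts for positions near the right end of $S$ where the length-$k$ substring extends past $n$. Combined with the dual length-counting bound that at most $n/k$ phrases have length $\geq k$ (their total length cannot exceed $n$), this yields $z \leq d_k(S) + n/k + O(k)$ for every $k$.

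A single application with $d_k(S) \leq \delta k$ and $k = \sqrt{n/\delta}$ gives only the weaker $O(\sqrt{n\delta})$. To reach the tight bound, I would group the phrases by length into classes $C_i = \{p : 2^i \leq \ell_p < 2^{i+1}\}$ for $i = 0, 1, \ldots, \lceil \log_2(n/\delta) \rceil$, and show that $|C_i| = O(\delta)$ for each $i$. Summing across the $O(\log(n/\delta))$ classes would then give the claimed $z = O(\delta \log\frac{n}{\delta})$.

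The main obstacle is in establishing the per-class bound $|C_i| = O(\delta)$ in place of the naive estimate $|C_i| \leq d_{2^{i+1}}(S) = O(\delta \cdot 2^i)$, which only yields $O(\sqrt{n\delta})$ after summation. The tightening comes from exploiting that the first-occurrence positions at scale $2^i$ are already ``absorbed'' by short phrases at smaller scales, so only $O(\delta)$ genuinely new phrase positions appear per doubling of the scale; equivalently, one uses the near-monotonicity $d_{k+1}(S) \geq d_k(S) - 1$ of the substring complexity (from the standard extensions argument) to show that the increment $d_{2^{i+1}}(S) - d_{2^i}(S)$ across a single doubling is controlled sharply enough, after a careful telescoping, to contribute only $O(\delta)$ per class. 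This amortization is the technical crux of the Raskhodnikova--Ron--Rubinfeld--Smith argument, which I would carefully adapt to the present setting.
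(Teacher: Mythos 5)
The paper proves this lemma by citing \cite[Lemma~3]{RRRS13} as a black box (instantiated with $m=\delta$ and $\ell_0=n/\delta$); you instead try to reconstruct that structural lemma from scratch. Your scaffolding is sound: grouping phrases into length classes $C_i$, the preliminary observation that a phrase of length $<k$ starts at a leftmost-occurrence position of a length-$k$ substring, the resulting single-scale bound $z \le d_k(S)+n/k+O(k)$, and the target sub-claim $|C_i|=O(\delta)$ are all correct and do match the structure of the RRRS proof.

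The gap is in your justification of the per-class bound, which is precisely the step you flag as the crux. The ``absorption'' observation gives, at best, $C_i \subseteq L_{2^{i+1}} \setminus L_{2^i}$ (where $L_k$ denotes the set of leftmost-occurrence positions of length-$k$ substrings, with $|L_k|=d_k(S)$ up to boundary), hence $|C_i| \le d_{2^{i+1}}(S)-d_{2^i}(S)+O(2^i)$. But this difference need not be $O(\delta)$: for a near-random string it is exponential in $2^i$ at small scales, while $|C_i|=0$ there. Telescoping $\sum_i (d_{2^{i+1}}-d_{2^i})$ only recovers $d_{\ell_0}(S)\le \delta\ell_0=n$, a trivial bound, and the near-monotonicity $d_{k+1}\ge d_k-1$ plays no role; $(d_k)_k$ alone cannot record where the leftmost occurrences sit, which is what the argument needs. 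The missing ingredient is geometric: the starts $p_1<\cdots<p_t$ of $C_i$-phrases are pairwise at distance $\ge 2^i$ (their disjoint phrase intervals each have length $\ge 2^i$), and each $p_a$ is a leftmost occurrence of the length-$3\cdot 2^i$ substring beginning there. Sliding a window of length $3\cdot 2^i$ leftward by $r\in[0\dd 2^i)$ from each $p_a$, the $t\cdot 2^i$ substrings $S[p_a-r\dd p_a-r+3\cdot 2^i-1]$ are pairwise distinct: a coincidence between two of them would, by the spread-out property (or a periodicity argument when $a=b$), yield an earlier occurrence of a length-$\le 2\cdot 2^i$ prefix of some $C_i$-phrase, contradicting LZ maximality. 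Hence $t\cdot 2^i \le d_{3\cdot 2^i}(S)\le 3\delta\cdot 2^i$, i.e.\ $|C_i|\le 3\delta+O(1)$. Without this window-shifting argument, your sketch does not close the central step.
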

\begin{proof}
Raskhodnikova et al.~\cite{RRRS13} proved that if $d_{\ell}(S) \le m\cdot \ell$ for every $\ell\le \ell_0$,
then $z \le 4(m \log \ell_0  + \frac{n}{\ell_0})$. Plugging $\ell_0 = \frac{n}{\delta}$ and $m=\delta$,
we conclude that $z \le 4(\delta \log\frac{n}{\delta} + \delta)=O(\delta \log \frac{n}{\delta})$.
\end{proof}

Since $\gamma$, $b$, and $c$ are $O(z)$, these three measures are all upper bounded by $O(\delta \log \frac{n}{\delta})$.
Additionally, we conclude that $g_{rl}\le g = O(z \log \frac{n}{z})=O(\delta \log^2 \frac{n}{\delta})$,
and note that $r = O(\delta \log \delta \max(1,\log\frac{n}{\delta\log\delta}))$ has been proved recently~\cite{KK19};
the latter bound is also known to be tight for any $\delta$ between $\Omega(1)$ and $O(n)$.

Before we proceed, let us recall the concept of an attractor.

\begin{definition}[Kempa and Prezza~{\cite{KP18}}]
An \emph{attractor} of a string
$S[1\dd n]$ is a set of positions $\Gamma \subseteq [1\dd n]$ such that every substring
$S[i\dd j]$ has an occurrence $S[i'\dd j']=S[i\dd j]$ that covers an attractor
position $p\in \Gamma \cap [i'\dd j']$. 
\end{definition}

\begin{lemma}[{\cite[Lemma~5.6]{CEKNP19}}]\label{lem:delta}
    Every string $S$ satisfies $\delta \le \gamma$.
    \end{lemma}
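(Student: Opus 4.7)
The plan is to bound $d_k(S)$ in terms of the attractor size $\gamma$ for every $k$, and then divide by $k$ and take the maximum.

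Let $\Gamma$ be an attractor of $S$ of size $\gamma$ (which exists by definition of $\gamma$). Fix any $k \in [1\dd n]$. I would first observe that, by the attractor property, every distinct length-$k$ substring $S[i\dd j]$ of $S$ admits an occurrence $S[i'\dd j']$ containing some position $p \in \Gamma$, i.e., $i' \le p \le j' = i'+k-1$. This gives a map that sends each distinct length-$k$ substring to some pair $(p,t)$, where $p \in \Gamma$ is an attractor position and $t = p - i' \in [0\dd k-1]$ is the offset of $p$ within the occurrence.

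The key step is injectivity of this map: if two distinct length-$k$ substrings $w,w'$ were assigned the same pair $(p,t)$, then both would equal $S[p-t\dd p-t+k-1]$, hence $w = w'$, a contradiction. Therefore the number of distinct length-$k$ substrings satisfies
\[
d_k(S) \le |\Gamma| \cdot k = \gamma \cdot k.
\]
Dividing by $k$ gives $d_k(S)/k \le \gamma$ for every $k \in [1\dd n]$, and taking the maximum over $k$ yields $\delta \le \gamma$, as required.

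There is essentially no obstacle here: the argument is a direct counting application of the attractor definition, and the only thing to be careful about is the bookkeeping for substrings that occur near the boundary of $S$ (which only decreases the count and thus does not affect the upper bound). No auxiliary tools beyond \Cref{def:delta} and the attractor definition are needed.
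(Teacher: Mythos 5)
Your proof is correct and follows essentially the same route as the paper's: you bound $d_k(S)$ by $k\gamma$ using the attractor property and then divide by $k$. You simply make explicit the injection (substring $\mapsto$ (attractor position, offset)) that the paper's one-line argument leaves implicit.
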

    \begin{proof}
    Every length-$k$ substring 
    has an occurrence covering
    an attractor position, so there can be at most $k\gamma$ 
    distinct substrings of length $k$, that is, $d_k(S)/k \le \gamma$ for all $k\leq n$.
    \end{proof}

\begin{lemma}[{\cite[Lemma~5.7]{CEKNP19}}]\label{lem:compute delta}
The measure $\delta$ can be computed in $O(n)$ time and space given $S[1\dd n]$.
\end{lemma}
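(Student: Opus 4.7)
My plan is to reduce the computation of $\delta$ to evaluating $d_k(S)$ for every $k\in[1\dd n]$ via the suffix array $\SA$ and the longest-common-prefix array $\LCP$ of $S$, both of which can be built in $O(n)$ time and space by standard algorithms. Once the values $d_k(S)$ for all $k$ can be assembled in $O(n)$ total time, scanning for the maximum of $d_k(S)/k$ delivers $\delta$ within the stated budget.

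The key combinatorial identity I would establish is
\[ d_k(S) \;=\; (n-k+1) \;-\; \bigl|\{i \in [2\dd n] : \LCP[i] \ge k\}\bigr|. \]
Here $n-k+1$ is the total number of occurrences of length-$k$ substrings in $S$, equivalently the number of suffixes of length at least $k$. Traversing these suffixes in $\SA$ order, each one contributes a \emph{new} length-$k$ substring unless its length-$k$ prefix coincides with that of the previous suffix of length $\ge k$, which happens precisely when the two suffixes share a prefix of length $\ge k$. To certify that $\LCP$ entries $\ge k$ count exactly the duplicates, I would argue two points: (i) $\LCP[i]\ge k$ forces both $\SA[i-1]$ and $\SA[i]$ to have length $\ge k$, so no spurious contribution arises from short suffixes, and (ii) if two "long" suffixes are separated in $\SA$ order by a suffix of length $<k$, then an $\LCP$ entry adjacent to that short suffix is bounded by its length and hence is $<k$, so such a long-long pair is correctly \emph{not} counted as a duplicate. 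This bookkeeping with interleaved short suffixes is the main subtlety.

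Given the identity, it remains to evaluate $f(k) := |\{i \in [2\dd n] : \LCP[i] \ge k\}|$ for every $k\in[1\dd n]$ in $O(n)$ total time. Since $\LCP$ entries lie in $[0\dd n-1]$, I would build a histogram $C[0\dd n-1]$ of the $\LCP$ values in one linear scan and then compute the tail sums by the recurrence $f(k) = f(k+1) + C[k]$, starting from $f(n)=0$, in a single decreasing sweep. Substituting into the identity and scanning $((n-k+1) - f(k))/k$ for the maximum yields $\delta$ in $O(n)$ time and space overall.
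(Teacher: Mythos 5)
Your proposal is correct and follows essentially the same route the paper sketches: the paper's proof simply invokes the suffix tree or LCP table to obtain all $d_k(S)$ in linear time, and you supply exactly the details of the LCP-based realization (the counting identity via $\LCP$ entries $\ge k$, the handling of interleaved short suffixes, and the histogram/tail-sum evaluation of $f(k)$).
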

\begin{proof}
One can use the suffix tree or the LCP table of $S$ to retrieve $d_k(S)$ for all $k\in [1\dd n]$ in $O(n)$ time, and
then compute $\delta$ from this information.
\end{proof}

Finally, we note some obvious positive properties of $\delta$ as a compressibility measure: it is insensitive to reversing the string and to alphabet permutations, and it is monotone when we add/remove prefixes/suffixes to/from $S$. Other measures, like $z$, $b$, or $\gamma$, are not monotone, $z$ is sensitive to reversals, and $v$ and $r$ are also sensitive to alphabet permutations~\cite{Nav20,MRRRS20}.

\section{Lower bounds on attractors}\label{sec:lb-attr}

In this section,
we show that there exist string families where $\delta=o(\gamma)$; in fact, $\delta$ can be smaller by up to a logarithmic factor. More precisely, for any string length $n$ and value $\delta \in [2\dd n]$, we build a string satisfying $\gamma = \Omega(\delta\log\frac{n}{\delta})$. This shows that the bound $\gamma \le b \le z = O(\delta\log\frac{n}{\delta})$ is asymptotically tight.

We build our results on (variants of) the following string family.

\begin{definition}\label{def:S}
Consider an infinite string $S_\infty [1\dd ]$, where $S_\infty [i] = \mathtt{b}$ if $i=2^j$ for some integer $j\ge 0$, and $S_\infty[i]=\mathtt{a}$ otherwise.
For $n\ge 1$, let $S_n = S_\infty[1\dd n]$. We then define the string family $\Sf = \{ S_n : n \ge 1 \}$ as the set of the prefixes of $S_\infty$.
\end{definition}

We first prove that the strings in $\Sf$ satisfy $\delta=O(1)$ and $\gamma = \Omega(\log n)$.

\begin{lemma}\label{lem:Sn}
For every $n\ge 1$, the string $S_n$ satisfies $\delta \le 2$ and $\gamma \ge \frac12\lfloor\log n\rfloor$.
\end{lemma}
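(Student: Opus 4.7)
The plan is to establish the two inequalities separately.

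For the upper bound $\delta \le 2$, I would show that $d_k(S_n) \le 2k$ for every $k \in [1\dd n]$. Since every entry of $S_n$ that is not $\mathtt{b}$ is $\mathtt{a}$, a length-$k$ substring is fully determined by the set of relative positions of its $\mathtt{b}$'s, so I would split the length-$k$ substrings into three groups: (i) the all-$\mathtt{a}$ substring (contributing $1$); (ii) substrings containing exactly one $\mathtt{b}$, parameterized by the relative position of that $\mathtt{b}$ in $[1\dd k]$ (contributing at most $k$); and (iii) substrings containing at least two $\mathtt{b}$'s. For group (iii) I would characterize a substring by the pair $(q_1, j_1)$, where $q_1$ is the relative position of its first $\mathtt{b}$ and $2^{j_1}$ is that $\mathtt{b}$'s absolute position in $S_\infty$. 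Ruling out an earlier $\mathtt{b}$ in the window forces $q_1 \le 2^{j_1 - 1}$ (with $q_1 = 1$ when $j_1 = 0$), and fitting the second $\mathtt{b}$ forces $q_1 + 2^{j_1} \le k$, which also bounds $j_1$ by $\lfloor \log(k-1)\rfloor$. Summing $\min(2^{j_1 - 1}, k - 2^{j_1})$ via a geometric series bound gives at most $k - 1$ substrings of type (iii), yielding $d_k(S_n) \le 1 + k + (k - 1) = 2k$.

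For the lower bound $\gamma \ge \frac{1}{2}\lfloor \log n \rfloor$, I would exhibit a family of substrings each occurring uniquely in $S_n$ whose unique occurrences are nearly pairwise disjoint. For $j \in \{0, 1, \ldots, \lfloor \log n \rfloor - 1\}$, set $w_j = \mathtt{b}\mathtt{a}^{2^j - 1}\mathtt{b}$. Because the $\mathtt{b}$'s of $S_\infty$ sit precisely at the powers of two, and the gap between consecutive powers of two $2^\ell$ and $2^{\ell+1}$ equals $2^\ell$, the word $w_j$ appears in $S_n$ exactly once, namely as $S_n[2^j \dd 2^{j+1}]$. Every attractor $\Gamma$ must therefore meet each interval $[2^j \dd 2^{j+1}]$. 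Restricting to indices $j$ of a fixed parity makes these intervals pairwise disjoint, so $\gamma \ge \lceil \lfloor \log n \rfloor / 2 \rceil \ge \frac{1}{2}\lfloor \log n \rfloor$.

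The main obstacle is the $\delta \le 2$ direction, where one must simultaneously keep the $(q_1, j_1)$ parameterization injective (handled by the $q_1 \le 2^{j_1 - 1}$ restriction that rules out representatives whose window would actually contain a still earlier $\mathtt{b}$) and control the sum $\sum_{j_1 = 1}^{\lfloor \log(k - 1) \rfloor} 2^{j_1 - 1}$ tightly enough that $d_k/k$ lands at exactly the constant $2$ rather than $2 + o(1)$. The attractor lower bound is more direct, relying only on the uniqueness of each $w_j$ and the disjointness of its chosen occurrences.
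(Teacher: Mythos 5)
Your proposal is correct. The lower-bound argument for $\gamma$ is essentially the same as the paper's: both exhibit the unique occurrence of $\mathtt{b}\mathtt{a}^{2^j-1}\mathtt{b}$ at $S_\infty[2^j\dd 2^{j+1}]$ and pass to a single parity of $j$ to obtain disjoint intervals, each of which must contain an attractor position.

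For $\delta\le 2$, however, you take a genuinely different route. The paper fixes $j=\lceil\log k\rceil$ and splits the length-$k$ substrings by \emph{starting position}: those starting in $[1\dd 2^{j-1}]$, at most $2^{j-1}\le k$ of them, and those starting at position $2^{j-1}+1$ or later, each of which contains at most one $\mathtt{b}$ (since consecutive $\mathtt{b}$'s past that point are at distance at least $2^j\ge k$) and so is $\mathtt{a}^k$ or $\mathtt{a}^i\mathtt{b}\mathtt{a}^{k-i-1}$, contributing at most $k+1$ more. You instead classify by the \emph{number of $\mathtt{b}$'s}: one all-$\mathtt{a}$ string, at most $k$ single-$\mathtt{b}$ strings, and at most $k-1$ multi-$\mathtt{b}$ strings, the last via the injective $(q_1,j_1)$ parameterization with $q_1\le 2^{j_1-1}$ (forced by the first $\mathtt{b}$ being first) and $q_1+2^{j_1}\le k$ (forced by fitting the second $\mathtt{b}$), then the geometric series $1+\sum_{j_1\ge 1}2^{j_1-1}\le 2^{\lfloor\log(k-1)\rfloor}\le k-1$. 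Both reach $d_k\le 2k$ exactly. The paper's split is a little cheaper---the early-start count is immediate and the single-$\mathtt{b}$ property of the tail is a one-line gap observation---whereas yours needs the more delicate bookkeeping over $(q_1,j_1)$; on the other hand, your version is parameter-free (no threshold $j$ to choose at the end) and makes the provenance of the constant $2$ more explicit: roughly one $k$ from single-$\mathtt{b}$ strings and one $k$ from everything else. One tiny point in the $\gamma$ part: ``a fixed parity'' should be the even one (or the larger class) so that the count is indeed $\lceil\lfloor\log n\rfloor/2\rceil$; the paper explicitly takes even $j$.
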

\begin{proof}
For each $j\ge 1$, every pair of consecutive $\mathtt{b}$s in $S_\infty[2^{j-1}+1\dd]$ is at distance at least $2^j$. 
Therefore, the only distinct substrings of length $k\le 2^j$ in $S_\infty[2^{j-1}+1\dd]$ are of the form $\mathtt{a}^k$ or $\mathtt{a}^i \mathtt{b} \mathtt{a}^{k-i-1}$ for $i\in[0\dd k)$.
Hence, the distinct length-$k$ substrings of $S_\infty$ are those starting up to position $2^{j-1}$, $S_\infty[i\dd i+k)$ for $i\in [1\dd 2^{j-1}]$, and the $k+1$ already mentioned strings, for a total of $d_{k}(S_\infty) \le 2^{j-1}+k+1$. Choosing $j=\lceil \log k \rceil$, we get $d_k(S_\infty) \le 2^{\lceil \log k \rceil-1}+k+1 \le 2^{\log k}+k = 2k$, yielding that $\delta(S_n)\le 2$ holds for every $n$.

Next, observe that, for each $j\ge 0$, the substring $\mathtt{b} \mathtt{a}^{2^j-1} \mathtt{b}$ has its unique occurrence in $S_\infty$ at $S_\infty[2^{j}\dd 2^{j+1}]$.
The covered regions are disjoint across \emph{even} integers $j$, so each one requires a distinct attractor position.
Consequently, $\gamma(S_{n})\ge \frac{k}{2}$ holds for all integers $n\ge 2^{k}$ and $k\ge 0$. Choosing $k = \lfloor{\log n}\rfloor$, we get
$\gamma(S_{n}) \ge \frac12\lfloor\log n\rfloor$.
\end{proof}

We can now show that, for every integer $2\le \delta = o(n)$, there are strings satisfying $\gamma = \omega(\delta)$; that is, $\gamma$ can be asymptotically larger than~$\delta$.

\begin{theorem}\label{thm:lbgamma}
For every length $n$ and integer value $\delta\in [2\dd n]$, there is a string $S[1\dd n]$ with $\gamma = \Omega(\delta \log \frac{n}{\delta})$.
\end{theorem}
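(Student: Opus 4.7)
The plan is to amplify the construction from Lemma~\ref{lem:Sn} by concatenating $\Theta(\delta)$ independent ruler-like strings, each over its own pair of alphabet symbols, so that distinct pieces neither share substrings nor share attractor positions. Each piece will contribute $\Theta(\log\frac{n}{\delta})$ attractor positions on its own, and the total will be $\Omega(\delta\log\frac{n}{\delta})$.

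Concretely, I will set $k=\lfloor \delta/2 \rfloor$ and $m=\lfloor n/k \rfloor$, introduce pairwise disjoint symbol pairs $\{\mathtt{a}_i,\mathtt{b}_i\}$ for $i\in[1\dd k]$, and let $T_i$ denote the string obtained from $S_m$ by relabelling $\mathtt{a}\mapsto\mathtt{a}_i$ and $\mathtt{b}\mapsto\mathtt{b}_i$. Take $S:=T_1T_2\cdots T_k$, padded with a few copies of $\mathtt{a}_1$ if necessary so that $|S|=n$. The total alphabet size is $2k=O(\delta)\le n$, well within the $n^{O(1)}$ bound permitted by the model. The degenerate case $\delta\ge n/2$ is trivial because $\log\frac{n}{\delta}=O(1)$ and then any string with $\delta(S)=\Theta(\delta)$ satisfies the claim.

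To bound $\delta(S)$, I will count distinct length-$\ell$ substrings of $S$ in two groups. Those lying entirely inside some piece $T_i$ are at most $2\ell$ per piece, by the argument in the proof of Lemma~\ref{lem:Sn}; since different pieces use pairwise disjoint alphabets, these contributions are disjoint, totalling at most $2k\ell$. Substrings that cross at least one piece boundary are determined by their starting position in $S$, and at most $O(k\ell)$ starting positions yield such a substring (each boundary being the ``first one crossed'' for at most $\ell$ starts). Hence $d_\ell(S)=O(k\ell)=O(\delta\ell)$ for every $\ell$, giving $\delta(S)=O(\delta)$.

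For the lower bound on $\gamma(S)$, I will invoke the uniqueness argument of Lemma~\ref{lem:Sn} inside each piece separately, exploiting the alphabet disjointness. For every $i\in[1\dd k]$ and every even $j\ge 0$ with $2^{j+1}\le m$, the substring $\mathtt{b}_i\mathtt{a}_i^{2^j-1}\mathtt{b}_i$ has its unique occurrence in all of $S$ inside the region covered by $T_i$, and within that region the occurrences indexed by different even $j$ have pairwise disjoint covers. Thus each $T_i$ must contribute $\ge\tfrac12\lfloor\log m\rfloor$ attractor positions, disjoint from those contributed by the other pieces; summing over $i$ yields $\gamma(S)\ge k\cdot\tfrac12\lfloor\log m\rfloor=\Omega(\delta\log\tfrac{n}{\delta})$. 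The main obstacle is controlling the cross-boundary substrings when $\delta$ approaches $n$ and the pieces become very short, but the crude $O(k\ell)$ count per length is sufficient across all regimes, so no finer analysis is needed.
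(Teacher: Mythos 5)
Your proposal is correct and follows essentially the same strategy as the paper: concatenate $\Theta(\delta)$ disjoint-alphabet copies of the ruler string $S_m$ from \cref{lem:Sn} and sum their individual lower bounds on $\gamma$, exploiting alphabet disjointness to make the contributions additive. The principal technical difference is that the paper inserts $m-1$ distinct delimiters $\$_i$ between the pieces, whereas you drop them and bound cross-boundary substrings directly by counting starting positions (at most $O(k\ell)$ of them, one family per boundary). Both yield $d_\ell(S)=O(\delta\ell)$ for every length $\ell$. The paper's delimiters buy exact control, $\delta(S)=3m-1$, which it then tunes to hit each target $\delta$ precisely by appending one or two extra delimiters; your version yields only $\delta(S)=\Theta(\delta)$, which is enough for the asymptotic claim as written but loses the exact matching. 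Your reduction of the degenerate regime $\delta\ge n/2$ to $\gamma\ge\delta$ implicitly uses \cref{lem:delta}, which you should cite explicitly, and when $m\le 3$ the per-piece lower bound $\frac12\lfloor\log m\rfloor$ can vanish, which is exactly why that regime must be split off (as you do). Overall this is a correct blind reconstruction of the intended argument, with a small simplification (no delimiters) traded for slightly weaker control of $\delta(S)$.
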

\begin{proof}
Let us first fix an integer $m\ge 1$ such that $n\ge 4m-1$ and decompose $n-m+1 \ge 3m$ into $\sum_{i=1}^m n_i = n-m+1$ roughly equally (so that $n_i \ge 3$ and $n_i = \Omega(\frac{n}{m})$). We shall build a string $S$ over an alphabet consisting of $3m-1$ characters: $\mathtt{a}_i$ and $\mathtt{b}_i$ for $i\in [1\dd m]$
and delimiters $\$_i$ for $i\in [1\dd m)$.
For this, we take $S^{(i)}$ to be the string $S_{n_i}$ of Definition~\ref{def:S}, with the alphabet $\{\mathtt{a},\mathtt{b}\}$ replaced
by $\{\mathtt{a}_i,\mathtt{b}_i\}$, and we define $S = S^{(1)}\,\$_1\, S^{(2)}\,\$_2\cdots \$_{m-1}\,S^{(m)}$, which is of length $n$. 

Notice that, for each $k\in [1\dd n]$, we have $d_k(S)\le (m-1)k + \sum_{i=1}^m d_k(S^{(i)})$ because every substring contains $\$_i$
or is contained in $S^{(i)}$ for some $i$. Since $d_k(S^{(i)}) \le 2k$ by \cref{lem:Sn}, we have that $d_k(S) \le (3m-1)k$, and thus $\delta(S)\le 3m-1$.
In fact, $\delta(S)=3m-1$ because $d_1(S)=3m-1$.
Furthermore, $\gamma(S)\ge \sum_{i=1}^m \gamma(S^{(i)}) \ge \sum_{i=1}^m \frac12\lfloor{\log n_i}\rfloor = \Omega(m \log \frac{n}{m})=\Omega(\delta \log \frac{n}{\delta})$,
where the first inequality holds because the alphabets of $S^{(i)}$ are disjoint and the second is due to \cref{lem:Sn}.

This construction proves the theorem for $\delta = 3m-1$ and $n\ge 4m-1$.
If $\delta < \frac34n$ and $\delta \bmod 3 \ne 2$, we use $m = \lfloor\frac{\delta+1}{3}\rfloor$ and initially construct a string $S$ of length $n-(\delta+1)\bmod 3 \ge 4m-1$. Next, we append $(\delta+1)\bmod 3$ additional delimiters, which results in each of the measures
$\delta(S)$, $\gamma(S)$, and $n$ increased by $(\delta+1)\bmod 3$.
Finally, we note that if $\delta \ge \frac34n = \Omega(n)$, then the claim reduces to $\gamma = \Omega(\delta)$ and therefore follows directly from \cref{lem:delta}.
\end{proof}

\section{Lower bounds on text entropy}\label{sec:lb-entr}

In this section,
we prove that there are string families that cannot be encoded in $o(\delta\log n)$ space: for every length $n$ and every integer value $\delta \in [2\dd n]$, there is a string family whose elements require $\Omega(\delta\log\frac{n}{\delta})$ space, or $\Omega(\delta\log \frac{n}{\delta}\log n)$ bits, to be represented.
Therefore, using $O(\delta\log\frac{n}{\delta})$ space is worst-case optimal for every $\delta$. This is a reachable bound, because every string can be represented within $O(z) \subseteq O(\delta\log\frac{n}{\delta})$ space. 
In comparison, it is not known if the upper bound $O(\gamma\log\frac{n}{\gamma})$ to encode every string family~\cite{KP18} is also tight.

We consider a family of variants of the infinite string $S_{\infty}$ of Definition~\ref{def:S},
where the positions of $\mathtt{b}$s are further apart and slightly perturbed.

\begin{definition}\label{def:Sstar}
The family $\Sf^\mathtt{p}$ is formed by all the infinite strings $S$ over $\{ \mathtt{a},\mathtt{b}\}$ where the first $\mathtt{b}$ is placed at $S[1]$ and, for $j \ge 2$, the $j$th $\mathtt{b}$ is placed anywhere in $S[2\cdot 4^{j-2}+1\dd 4^{j-1}]$.
The family $\Sf_n^\mathtt{p}$ consists
of the length-$n$ prefixes of the infinite strings of the family $\Sf^p$, that is, $\Sf^\mathtt{p}_n = \{ S[1\dd n] : S \in \Sf^\mathtt{p} \}$.
\end{definition}

\begin{lemma}\label{lem:encode}
For every integer $n\ge 1$, the family $\Sf_n^\mathtt{p}$ needs $\Omega(\log^2 n)$ bits to be encoded.
\end{lemma}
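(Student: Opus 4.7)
The plan is a simple information-theoretic counting argument: I will lower-bound $|\Sf_n^\mathtt{p}|$ by $2^{\Omega(\log^2 n)}$ and conclude that any encoding scheme must, in the worst case, spend at least $\log_2|\Sf_n^\mathtt{p}|=\Omega(\log^2 n)$ bits to distinguish the members of this family.

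To set up the freedom, let $I_j := [2\cdot 4^{j-2}+1\dd 4^{j-1}]$ for $j\ge 2$; by definition of $\Sf^\mathtt{p}$, the $j$th $\mathtt{b}$ of any $S\in \Sf^\mathtt{p}$ lies in $I_j$, and $|I_j|=2\cdot 4^{j-2}$. The intervals $I_j$ are pairwise disjoint, since $I_{j+1}$ starts at $2\cdot 4^{j-1}+1$, strictly exceeding $\max I_j = 4^{j-1}$. Let $J$ be the largest integer with $4^{J-1}\le n$, so $J = 1+\lfloor \log_4 n\rfloor = \Theta(\log n)$, and every $I_j$ with $2\le j \le J$ is contained in $[1\dd n]$.

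Now I count the prefixes obtainable by varying only the placements of the first $J$ $\mathtt{b}$s (and filling in all subsequent $\mathtt{b}$s arbitrarily to produce a valid infinite string). For each $j\in[2\dd J]$ there are $|I_j|=2\cdot 4^{j-2}$ independent choices, and since the $I_j$ are disjoint and all fit in $[1\dd n]$, distinct tuples of placements produce distinct length-$n$ prefixes (the positions of the $\mathtt{b}$s can be read off directly). Hence
\[|\Sf_n^\mathtt{p}| \;\ge\; \prod_{j=2}^{J} 2\cdot 4^{j-2} \;=\; 2^{\sum_{j=2}^{J}(2j-3)} \;=\; 2^{(J-1)^2},\]
whose binary logarithm is $(J-1)^2 = \Omega(\log^2 n)$, as required. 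No step is especially delicate; the only bookkeeping is the disjointness of the intervals $I_j$ and the elementary identity $\sum_{j=2}^{J}(2j-3)=(J-1)^2$, both of which are immediate.
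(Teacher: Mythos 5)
Your proof is correct and takes essentially the same approach as the paper: count the independent choices for the position of the $j$th $\mathtt{b}$ (each contributing a factor $2\cdot 4^{j-2}$) over all $j$ with $4^{j-1}\le n$, conclude $|\Sf_n^\mathtt{p}|=2^{\Omega(\log^2 n)}$, and apply the standard information-theoretic lower bound. You merely spell out the disjointness of the intervals and the exact exponent $(J-1)^2$, which the paper leaves implicit.
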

\begin{proof}
In our definition of $\Sf^\mathtt{p}$, the location of the $j$th $\mathtt{b}$ can be chosen among $2\cdot 4^{j-2}$ positions, and each combination of
these choices generates a different string in $\Sf^\mathtt{p}_n$ as long as $n \ge 4^{j-1}$.
Hence, $|\Sf_n^\mathtt{p}| = \prod_{j=2}^{i+1} (2\cdot 4^{j-2}) = 2^{\Omega(i^2)}=2^{\Omega(\log^2 n)}$ for $i = \lfloor \log_4 n\rfloor$.
To distinguish strings in $\Sf_n^\mathtt{p}$, any encoding needs $\log |\Sf_n^\mathtt{p}| = \Omega(\log^2 n)$ bits.
\end{proof}

\begin{lemma}\label{lem:nsd}
For every length $n$ and integer $\delta \in [2\dd \lceil\frac{3n}{4}\rceil)$, there exists a family of length-$n$ strings of common measure~$\delta$
that needs $\Omega(\delta \log^2 \frac{n}{\delta})$ bits to be encoded.
\end{lemma}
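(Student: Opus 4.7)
The plan is to mimic the construction of \cref{thm:lbgamma}, but replace the single rigid string $S_{n_i}$ (\cref{def:S}) by an arbitrarily chosen string from the exponentially rich family $\Sf^\mathtt{p}_{n_i}$ (\cref{def:Sstar}), harvesting the $\Omega(\log^2 n_i)$-bit lower bound of \cref{lem:encode} at each of $\Theta(\delta)$ independent slots.

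First, I would establish the analog of the first half of \cref{lem:Sn}: every $S\in\Sf^\mathtt{p}$ satisfies $\delta(S)=O(1)$. Let $p_\ell$ denote the position of the $\ell$th $\mathtt{b}$ in $S$. \cref{def:Sstar} gives $p_\ell\in[2\cdot 4^{\ell-2}+1,\, 4^{\ell-1}]$ for $\ell\ge 2$, so $p_\ell - p_{\ell-1} > 4^{\ell-2}$. Fixing $k$ and choosing the smallest $j$ with $4^{j-2}\ge k$, any length-$k$ window starting after $p_{j-1}$ contains at most one $\mathtt{b}$ and thus equals $\mathtt{a}^k$ or $\mathtt{a}^i\mathtt{b}\mathtt{a}^{k-i-1}$ for some $i\in[0\dd k)$. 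Together with the $p_{j-1}=O(k)$ earlier starting positions, this yields $d_k(S)=O(k)$, hence $\delta(S)=O(1)$.

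Second, I would reuse the concatenation of \cref{thm:lbgamma}: fix $m=\Theta(\delta)$, split $n-m+1=\sum_{i=1}^m n_i$ with $n_i=\Theta(n/\delta)$, and for each $i$ pick any $S^{(i)}\in\Sf^\mathtt{p}_{n_i}$ over a private alphabet $\{\mathtt{a}_i,\mathtt{b}_i\}$; define $S=S^{(1)}\,\$_1\,S^{(2)}\,\$_2\cdots\$_{m-1}\,S^{(m)}$ with fresh delimiters. As in the proof of \cref{thm:lbgamma}, $d_k(S)\le (m-1)k+\sum_i d_k(S^{(i)})=O(mk)$ while $d_1(S)=3m-1$, so $\delta(S)=\Theta(m)=\Theta(\delta)$. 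The rounding so that $\delta(S)$ hits the prescribed value exactly is handled by appending a few extra delimiters, exactly as at the end of the proof of \cref{thm:lbgamma}.

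Third, I would count. Since the block alphabets are pairwise disjoint and the delimiters are distinct, the decomposition $S=S^{(1)}\,\$_1\cdots\$_{m-1}\,S^{(m)}$ can be recovered from $S$, so distinct tuples $(S^{(1)},\dots,S^{(m)})$ yield distinct strings $S$. Therefore the family has size at least $\prod_{i=1}^m |\Sf^\mathtt{p}_{n_i}|=2^{\Omega(m\log^2 n_i)}=2^{\Omega(\delta\log^2(n/\delta))}$ by \cref{lem:encode}, forcing any encoding to use $\Omega(\delta\log^2\frac{n}{\delta})$ bits, which is exactly the required bound.

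The main obstacle is Step~1: verifying that the perturbation of the $\mathtt{b}$-positions permitted by \cref{def:Sstar} does not inflate the substring complexity. The base-$4$ spacing in \cref{def:Sstar} is precisely what is needed to absorb this perturbation, since even the worst-case gap $p_\ell-p_{\ell-1}>4^{\ell-2}$ still grows geometrically and dominates $k$ after only $O(\log k)$ blocks. Everything else — the alphabet disjointness, the substring-count bound for the concatenation, and the rounding adjustment — is exactly the bookkeeping already carried out in the proof of \cref{thm:lbgamma}, so the two arguments fit together almost mechanically once the single-family bound $\delta(S)=O(1)$ is in hand.
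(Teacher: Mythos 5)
Your proposal follows the same route as the paper: run the concatenation construction of \cref{thm:lbgamma} but fill the $m=\Theta(\delta)$ slots with arbitrary members of $\Sf^{\mathtt{p}}_{n_i}$ over private alphabets, argue that $\delta$ of the concatenation stays $\Theta(m)$, and harvest the $\Omega(\log^2 n_i)$-bit entropy of each slot from \cref{lem:encode} via the recoverability of the decomposition. The paper's proof is exactly this, including the appeal to the \cref{thm:lbgamma} rounding and the disjoint-alphabet counting.

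There is one real gap, though. The lemma asks for a family of \emph{common} measure $\delta$, i.e., every string in the family must have the \emph{same} value of $\delta$. Your Step 1 only yields $d_k(S)=O(k)$; with your choice of $j$ (smallest with $4^{j-2}\ge k$), the leading term $p_{j-1}\le 4^{j-2}$ can be as large as roughly $4k$, so the constant you get is about $5$, not $2$. Plugging $d_k(S^{(i)})\le 5k$ into the concatenation gives only $d_k(S^*)\le (6m-1)k$, hence $3m-1\le\delta(S^*)\le 6m-1$, with the actual value potentially depending on which strings $S^{(i)}$ were chosen. The delimiter-appending trick from \cref{thm:lbgamma} shifts $\delta$ by a fixed amount uniformly; it cannot equalize a measure that varies with the choice of the $S^{(i)}$. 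The paper avoids this by proving the sharp bound $d_k(S)\le 2k$ for all $S\in\Sf^{\mathtt{p}}_n$ (choosing $j=\lceil\log_4 k\rceil$, so that the ``early positions'' contribution is $4^{j-1}\le k-1$), which yields $d_k(S^*)\le (m-1)k+2mk=(3m-1)k$ and therefore $\delta(S^*)=3m-1$ \emph{exactly}, for every choice of $(S^{(1)},\ldots,S^{(m)})$. The fix on your side is mechanical: tighten the choice of $j$ so the pre-$p_{j-1}$ starting positions number at most $k-1$, and state $d_k(S)\le 2k$ rather than $O(k)$; everything else you wrote then goes through verbatim.
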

\begin{proof}
As in the proof of \cref{lem:Sn}, we prove that the measure $\delta$ for any string $S \in \Sf_n^\mathtt{p}$ is at most 2. Starting from position $4^{j-1}+1$, the distance between any two consecutive $\mathtt{b}$s is at least $4^{j}$. Therefore, the 
distinct substrings of length $k\le 4^j$ are those that start at position $i\in [1\dd 4^{j-1}]$ and those of the form $\mathtt{a}^{k}$ or $\mathtt{a}^i \mathtt{b} \mathtt{a}^{k-i-1}$ for $i\in [0\dd k)$, which yields a total
of $d_{k}(S) \le 4^{j-1}+k+1$. Choosing $j=\lceil \log_4 k \rceil$, we get $d_k(S) \le 4^{\lceil \log_4 k \rceil-1}+k+1 \le 4^{\log_4 k}+k = 2k$. By definition of~$\delta$, we conclude that $\delta(S)\le 2$ for every $S\in \Sf^\mathtt{p}_n$. Thus, by Lemma~\ref{lem:encode}, encoding $\Sf^\mathtt{p}_n$ requires $\Omega(\delta\log^2 \frac{n}{\delta})$ bits.


We now generalize the result to larger $\delta$. As in the proof of \cref{thm:lbgamma}, let $m \ge 1$, $n \ge 4m-1$, and $n-m+1=\sum_{i=1}^m n_i$, where $n_i=\Omega(\frac{n}{m})$ and $n_i\ge 3$.
Let $S^{(i)}$, of length $n_i$, be built from some $S \in \Sf^\mathtt{p}_{n_i}$, with $\mathtt{a}$ replaced by $\mathtt{a}_i$ and $\mathtt{b}$ replaced by $\mathtt{b}_i$.
Finally, let $S^* = S^{(1)}\,\$_1\,S^{(2)}\,\$_2 \cdots \$_{m-1}\, S^{(m)}$. 
Since $d_k(S^{(i)}) \le 2k$ as per the previous paragraph, it holds, just as in the proof of \cref{thm:lbgamma}, that $\delta(S^*) = 3m-1$.
Let $\Sf^*_n$ be the set of possible strings $S^*$ of length $n$ we obtain by choosing the strings~$S^{(i)}$.
Even fixing the lengths $n_i$, we have $|\Sf^*_n|=\prod_{i=1}^m 2^{\Omega(\log^2 n_i)}$, and thus we need at least $\log |\Sf^*_n| = \sum_{i=1}^m \Omega(\log^2 n_i) = \Omega(m \log^2 \frac{n}{m}) = \Omega(\delta\log^2 \frac{n}{\delta})$ bits to encode a member of the family.
The case where $\delta < \frac34n$ is not of the form $3m-1$ is handled as in the proof of \cref{thm:lbgamma}.
\end{proof}


\begin{lemma}\label{lem:d}
For every length $n$ and integer $\delta \in [2\dd n]$, there exists a family of length-$n$ strings of common measure~$\delta$
that needs $\Omega(\delta \log \frac{n}{\delta}\log\delta)$ bits to be encoded.
\end{lemma}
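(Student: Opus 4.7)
The plan is to extend the block construction from the proof of \cref{lem:nsd} by colouring the $\mathtt{b}$-positions of the template $S_\infty$, but crucially letting all $m$ blocks share a single common alphabet. With disjoint alphabets (as in \cref{lem:nsd}), each block carries its own $\mathtt{a}_i,\mathtt{b}_i$ pair and a separator, so colouring the $\mathtt{b}$s with $\sigma$ colours would force $\delta=\Theta(m\sigma)$ and wipe out the gain; a shared alphabet $\{\mathtt{a},\mathtt{b}^{(1)},\dots,\mathtt{b}^{(\sigma)}\}$ instead keeps the contributions of $m$ and $\sigma$ additive in $\delta$, so taking $m=\sigma=\Theta(\delta)$ injects $\Theta(\log\delta)$ bits of colour information per $\mathtt{b}$ without blowing $\delta$ up.

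Concretely, I first define the family $\Sf^\mathtt{c}_\ell$ of length-$\ell$ strings $S$ with $S[2^j]=\mathtt{b}^{(c_j)}$ for $j\in[0\dd\lfloor\log\ell\rfloor]$, each colour $c_j\in[1\dd\sigma]$ chosen independently, and $S[i]=\mathtt{a}$ for all other positions. Repeating the counting argument of \cref{lem:Sn} gives $d_k(S)\le 2^{j-1}+\sigma k+1=O(\sigma k)$ for $j=\lceil\log k\rceil$, so $\delta(S)=O(\sigma)$; and $|\Sf^\mathtt{c}_\ell|=\sigma^{\lfloor\log\ell\rfloor+1}=2^{\Theta(\log\ell\log\sigma)}$ gives $\Theta(\log\ell\log\sigma)$ bits of entropy per block.

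Next I set $m,\sigma=\Theta(\delta)$, put $\ell=\lfloor n/m\rfloor$, and form $S=S^{(1)}\cdots S^{(m)}$ by concatenating $m$ independently chosen members of $\Sf^\mathtt{c}_\ell$ over the shared alphabet, padding with trailing $\mathtt{a}$s so that $|S|=n$. The crux is bounding $\delta(S)$. I split the length-$k$ substrings of $S$ into (i) those contained in a block and starting at a position $>2^{j-1}$ of it (with $j=\lceil\log k\rceil$), which contain at most one $\mathtt{b}^{(c)}$ and all lie in a shared pool of $\sigma k+1$ candidates; (ii) those contained in a block but starting in its first $2^{j-1}\le k$ positions, at most $mk$ in total across all blocks; and (iii) those crossing a block boundary, at most $(k-1)(m-1)\le mk$. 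Summing yields $d_k(S)\le\sigma k+1+2mk=O((m+\sigma)k)$, so $\delta(S)=O(\delta)$. Because block lengths are fixed, distinct tuples $(S^{(1)},\dots,S^{(m)})$ yield distinct concatenations $S$, and the family has $|\Sf^\mathtt{c}_\ell|^m=2^{\Omega(m\log\ell\log\sigma)}=2^{\Omega(\delta\log\frac{n}{\delta}\log\delta)}$ elements, requiring at least $\Omega(\delta\log\frac{n}{\delta}\log\delta)$ bits to encode.

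The main technical obstacle is the bound in item~(i): one must verify that the shared template pool has only $\sigma k+1$ elements across all $m$ blocks and does not grow with $m$. This is where the shared alphabet is essential, since otherwise each block would contribute its own family of $\sigma k$ distinct template substrings and $\delta(S)$ would scale like $m\sigma$. Two secondary issues remain: (a) forcing $\delta(S)$ to equal the target $\delta$ exactly rather than just $\Theta(\delta)$, which I handle with the delimiter-padding trick from the proofs of \cref{thm:lbgamma,lem:nsd} by appending a few fresh single-occurrence symbols, each raising $\delta$ by a controlled amount while leaving every $d_k/k$ with $k\ge 2$ essentially untouched; and (b) the corner regimes $\delta=O(1)$, where \cref{lem:nsd} already gives the stronger bound $\Omega(\delta\log^2\frac{n}{\delta})$ since $\log\delta\le\log\frac{n}{\delta}$, and $\delta=\Omega(n)$, where the target $\delta\log\frac{n}{\delta}\log\delta$ reduces to $O(n\log n)$ and is matched directly because the construction with blocks of constant length still produces $\sigma^{\Theta(m)}=2^{\Theta(\delta\log\delta)}$ distinct concatenations.
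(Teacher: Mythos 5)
Your core construction mirrors the paper's: color the $\mathtt{b}$ positions of the $S_\infty$ template with $\sigma$ colors drawn from a \emph{shared} alphabet, concatenate $m$ blocks (the paper inserts delimiters $\$_1,\ldots,\$_{m-1}$ between them, but this only makes the boundary-crossing count slightly cleaner), and set $m,\sigma=\Theta(\delta)$. Your split of the length-$k$ substrings into a shared pool of size $O(\sigma k)$, the block-prefix positions, and the boundary-crossing positions is exactly the paper's counting, and your remark that the shared alphabet is what keeps the pool from scaling with $m$ is the right insight.

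The genuine gap is in your item~(a). In Lemmas~\ref{lem:Sn} and~\ref{lem:nsd} the disjoint alphabets pin $d_1(S^*)$ to exactly $3m-1$ for \emph{every} member, so only the $\delta\bmod 3$ adjustment (a \emph{fixed} number of extra delimiters) is needed. Here the shared coloring alphabet makes $d_1(S^*)$ — and hence the initial $\delta(S^*)$ — \emph{vary across the family}, ranging over $[m+1\dd 3m-1]$ depending on how many of the $\sigma$ colors actually appear. Appending the same fixed number of fresh symbols to every string shifts each $\delta(S^*)$ by the same amount and cannot equalize them; the phrase "each raising $\delta$ by a controlled amount" glosses over this. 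The paper resolves it with a \emph{string-dependent} padding: reserve a length-$2m$ suffix $S'=\mathtt{a}^{2m}$, replace its symbols one at a time with fresh delimiters, and stop once some $d_k(S^*)\ge(3m-1)k$; since every $d_k$ grows by at most one per delimiter while $d_1$ grows by exactly one, the process halts within $2m-2$ steps with $\delta(S^*)=3m-1$ on the nose. Crucially, the paper then observes that $S'$ is a \emph{function of} the prefix, so distinct choices of $(S^{(1)},\ldots,S^{(m)})$ still produce distinct strings and the entropy count survives. You need this idea (or an equivalent). Also, your handling of the regime $\delta\ge\tfrac34 n$ is not justified as written: the paper abandons the block construction there and instead uses strings $S=\$_{\pi(1)}\cdots\$_{\pi(\delta)}\,\$_{\pi(\delta)}^{n-\delta}$ over permutations $\pi$ of $[1\dd\delta]$, for which $d_k(S)=\min(\delta,n-k+1)$ gives $\delta(S)=\delta$ exactly and $\delta!$ family members.
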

\begin{proof}
Recall the string $S_n$ of \cref{def:S} and fix an integer $m\ge 1$.
Let $\Sf^\mathtt{r}_n\subseteq\{ \mathtt{a}, \mathtt{b}_1,\ldots,\allowbreak\mathtt{b}_m\}^n$ be the family of strings obtained from $S_n$ by replacing every $\mathtt{b}$ with $\mathtt{b}_r$ for some $r \in [1\dd m]$.
Since $|\Sf^\mathtt{r}_n| = m^{1+\lfloor{\log n}\rfloor}$, we need at least $\log |\Sf^\mathtt{r}_n| = \Omega(\log n\log m)$ bits to represent a string in $\Sf^\mathtt{r}_n$. 

On the other hand, as in the proof of \cref{lem:Sn}, the distinct substrings of length $k \le 2^j$ in $S^\mathtt{r}_n \in \Sf^\mathtt{r}_n$ are those starting in $S^\mathtt{r}_n[1\dd 2^{j-1}]$ and those of the form $\mathtt{a}^k$ or $\mathtt{a}^i\mathtt{b}_r\mathtt{a}^{k-i-1}$ for $i \in [0\dd k)$ and $r \in [1\dd m]$. Thus, $d_k(S^\mathtt{r}_n) \le 2^{j-1} + 1 + km$. Choosing $j=\lceil \log k \rceil$, we get $d_k(S^\mathtt{r}_n) \le k(m+1)$, and therefore $\delta(S^\mathtt{r}_n) \le m+1$.
 
Similarly to the proof of \cref{thm:lbgamma}, let $n \ge 4m-1$ and $n-3m+1=\sum_{i=1}^m n_i$, where $n_i=\Omega(\frac{n}{m})$ are positive integers. 
Let us choose any $S_{n_i}^{(i)} \in \Sf^\mathtt{r}_{n_i}$ and define
$S^* = S_{n_1}^{(1)}\,\$_1\,S_{n_2}^{(2)}\,\$_2 \cdots \$_{m-1}\, S_{n_m}^{(m)}\, S'$, where, in principle, $S'=\mathtt{a}^{2m}$. 
Consider the distinct length-$k$ substrings of $S^*$ for $k\le 2^j$. 
These include the (at most) $k(m-1)$ substrings containing a delimiter $\$_i$ (with $i\in [1\dd k)$)
and the (at most) $2^{j-1}\cdot m$ substrings starting within the first $2^{j-1}$ position of some $S_{n_i}^{(i)}$ (with $i\in [1\dd k]$). As argued in the previous paragraph, the remaining substrings are of the form
$\mathtt{a}^k$ or $\mathtt{a}^i\mathtt{b}_r\mathtt{a}^{k-i-1}$ for $r\in [1\dd m]$ and $i\in [1\dd k)$;
note that this analysis includes substrings overlapping $S_{n_m}^{(m)}$ and $S'=\mathtt{a}^{2m}$
despite the lack of delimiter between them.
Consequently, we get $d_k(S^*)\le k(m-1)+2^{j-1}m+km+1 \le (3m-1)k$, setting $j = \lceil \log k \rceil$.
At the same time, $m+1 \le d_1(S^*)\le 2m$, because $S^*$ contains $m-1$ delimiters, $\mathtt{a}$,
and between $1$ and $m$ distinct symbols $\mathtt{b}_r$ with $r\in [1\dd m]$.

We conclude that $m+1\le \delta(S^*)\le 3m-1$. We now modify $S'$ so that $\delta(S^*)=3m-1$.
For this, we replace the subsequent symbols $S'[2m], S'[2m-1], \ldots$ by fresh delimiters $\$_{m}, \$_{m+1}, \ldots$,
stopping as soon as $d_k(S^*)\ge (3m-1)k$ holds for some $k\in [1\dd n]$.
Since each value $d_k(S^*)$ grows by at most $1$ per delimiter added, we have $\delta(S^*)=3m-1$ upon termination of the process. Moreover, $d_1(S^*)$ grows by exactly $1$ per delimiter added, so the process terminates in at most $3m-1 - (m+1)=2m-2$ steps, which means that $S'$ is long enough to fit the delimiters.

We then obtain a family of possible strings $S^*[1\dd n]$ of common measure $\delta=3m-1$. Note that the suffix $S'$ is uniquely determined by the prefix containing the strings $S_{n_i}^{(i)}$.
Even fixing the lengths $n_i$, the number of possible strings $S^*$ we obtain by choosing the strings $S_{n_i}^{(i)}$ is $\prod_{i=1}^m |\Sf^\mathtt{r}_{n_i}| = \prod_{i=1}^m m^{1+\lfloor{\log n_i}\rfloor} > \prod_{i=1}^m m^{\log n_i}$. We thus need at least $\sum_{i=1}^m \log n_i\log m = 
\Omega(m \log \frac{n}{m}\log m) = \Omega(\delta\log \frac{n}{\delta}\log\delta)$ bits to encode a member of the family. 

The case where $\delta < \frac34n$ is not of the form $3m-1$ is handled as in the proof of \cref{thm:lbgamma}. 
For $\delta \ge \frac34n$, on the other hand, we construct a different family of length-$n$ strings with common measure $\delta$, with strings of the form $S=\$_{\pi(1)}\cdots\$_{\pi(\delta)} \,\cdot\,
\$_{\pi(\delta)}^{n-\delta}$ for a permutation $\pi$ of $[1\dd \delta]$. Each length $k\in [1\dd n]$ satisfies $d_k(S)=\min(\delta,n-k+1)$, so $\delta(S)=\delta$. Since there are $\delta!$ possible strings $S$, we need $\Omega(\delta\log\delta)=\Omega(\delta\log\frac{n}{\delta}\log\delta)$ bits to represent family members. 
\end{proof}

From the two lemmas above, we obtain the desired result:

\begin{theorem}\label{thm:both}
For every length $n$ and integer $\delta \in [2\dd n]$, there exists a family of length-$n$ strings of common measure $\delta$
that needs $\Omega(\delta \log \frac{n}{\delta}\log n)$ bits to be encoded.
\end{theorem}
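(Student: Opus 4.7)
The plan is to combine Lemmas~\ref{lem:nsd} and~\ref{lem:d} by a simple case analysis based on whether $\delta$ is small or large relative to $\sqrt{n}$. The key identity driving the argument is
\[
\log n = \log\delta + \log\tfrac{n}{\delta} \le 2\max\!\left(\log\delta,\ \log\tfrac{n}{\delta}\right),
\]
which guarantees that at least one of the factors $\log\delta$ and $\log\frac{n}{\delta}$ is $\Omega(\log n)$. This matches exactly the two lower bounds we already have: Lemma~\ref{lem:d} gives $\Omega(\delta\log\frac{n}{\delta}\log\delta)$, while Lemma~\ref{lem:nsd} gives $\Omega(\delta\log^2\frac{n}{\delta})$. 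Depending on which factor is the larger one, we invoke the corresponding lemma and obtain the desired $\Omega(\delta\log\frac{n}{\delta}\log n)$ bound.

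Concretely, I would proceed as follows. In the case $\delta \ge \sqrt{n}$, we have $\log\delta \ge \tfrac12\log n$, hence $\log\delta = \Theta(\log n)$. Apply Lemma~\ref{lem:d} to obtain a family of length-$n$ strings of common measure $\delta$ needing $\Omega(\delta\log\frac{n}{\delta}\log\delta) = \Omega(\delta\log\frac{n}{\delta}\log n)$ bits. In the complementary case $\delta < \sqrt{n}$, we have $\log\frac{n}{\delta} > \tfrac12\log n$, hence $\log\frac{n}{\delta} = \Theta(\log n)$. Since $\delta < \sqrt{n} \le \lceil \tfrac{3n}{4}\rceil$ for $n \ge 2$, the hypothesis of Lemma~\ref{lem:nsd} is satisfied, and we obtain a family needing $\Omega(\delta\log^2\frac{n}{\delta}) = \Omega(\delta\log\frac{n}{\delta}\log n)$ bits. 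In both cases the same asymptotic lower bound is achieved, completing the proof.

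There is essentially no hard step: the entire argument is a two-line dichotomy once the two lemmas are in hand. The only thing to be slightly careful about is verifying that the applicability restriction $\delta < \lceil \tfrac{3n}{4}\rceil$ in Lemma~\ref{lem:nsd} is not violated in the small-$\delta$ case, but this is automatic from $\delta < \sqrt{n}$ (with the boundary $\delta = n = 1$ being trivial). No new construction is needed: the theorem is purely a packaging result that reorganises the previous two lemmas into a single unified statement whose bound is the pointwise maximum, which coincides up to a constant factor with $\delta\log\frac{n}{\delta}\log n$.
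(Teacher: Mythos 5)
Your proposal is correct and matches the paper's proof in essence: both combine Lemmas~\ref{lem:nsd} and~\ref{lem:d} via a dichotomy on which of $\log\delta$ and $\log\frac{n}{\delta}$ dominates (and hence is $\Omega(\log n)$), with the only cosmetic difference being that you split at $\sqrt{n}$ while the paper organizes the argument around the threshold $\lceil\frac{3n}{4}\rceil$ inherited from Lemma~\ref{lem:nsd}'s applicability range. Your verification that $\delta<\sqrt{n}$ implies $\delta<\lceil\frac{3n}{4}\rceil$ handles that applicability restriction cleanly, so the argument is complete.
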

\begin{proof}
Follows from Lemmas~\ref{lem:nsd} and~\ref{lem:d} since $\max(\log\delta,\log \frac{n}{\delta}) = \Omega(\log n)$
and since $\log \delta =\Omega(\log \frac{n}{\delta})$ for $\delta \ge \lceil\frac{3n}{4}\rceil$.
\end{proof}

\section{Bounds on grammar sizes}\label{sec:ub-gram}

In this section, we study the relation between $\delta$ and the sizes of the smallest context-free grammar and run-length context-free grammar generating $S$, denoted by $g$ and $g_{rl}$, respectively.
As observed in \cref{sec:delta}, \cref{lem:z} implies $g \le g_{rl} = O(\delta \log^2 \frac{n}{\delta})$.
Our first contribution is a lower bound construction for $g$:
for every length $n$ and value $\delta \in [2\dd n]$, we construct a string $S$
satisfying $g = \Omega(\delta \log^2\frac{n}{\delta} / \log \log \frac{n}{\delta})$.

Rather surprisingly, the situation for run-length context-free grammars is very different:
we prove that $g_{rl} = O(\delta \log \frac{n}{\delta})$, 
which is tight due to \cref{thm:lbgamma} and $g_{rl}=\Omega(\gamma)$.
Our argument is constructive and we derive a randomized algorithm
that, in $O(n)$ expected time, constructs a run-length context-free grammar of size $O(\delta\log\frac{n}{\delta})$ generating a given string $S$.

\subsection{A lower bound on grammar size}\label{sec:lb-gram}
\newcommand{\slp}{\mathsf{slp}}

For convenience, among all context-free grammars generating a single string $S$,
we only consider \emph{straight-line programs} (SLPs), where the right-hand side
of each production is of size exactly 2.
We denote by $\slp(S)$ the minimum number of symbols (terminals and non-terminals)
in any SLP generating $S$.
Each context-free grammar generating $S$ can be transformed to an SLP,
and thus $\slp(S) = \Theta(g(S))$ holds for every string $S$.

Our construction relies on the family $\Sf^{\mathtt{p}}_n$ of \cref{def:Sstar} and the following consequence of \cref{lem:encode}.
\begin{corollary}\label{cor:slp}
For every integer $n\ge 1$, there exists a string $S^{\mathtt{p}}_n \in \Sf^{\mathtt{p}}_n$
satisfying $\slp(S^{\mathtt{p}}_n) = \Omega(\log^2 n / \log \log n)$.
\end{corollary}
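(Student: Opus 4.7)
The plan is to prove the corollary by a counting/pigeonhole argument that combines \cref{lem:encode} with an upper bound on the number of distinct SLPs of a given size.

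First, I would bound the number of SLPs of size $s$ generating strings over the binary alphabet $\{\mathtt{a},\mathtt{b}\}$. Such an SLP has $s$ symbols in total: two terminals and $s-2$ non-terminals. Each non-terminal is associated with a production whose right-hand side is an ordered pair of symbols from the set of $s$ available symbols, so there are at most $s^2$ possible right-hand sides per non-terminal and $s$ choices for the start symbol. The number of distinct SLPs of size $s$ is therefore at most $s\cdot (s^2)^{s-2} \le 2^{O(s\log s)}$.

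Next, I would invoke \cref{lem:encode}: since distinguishing strings in $\Sf^{\mathtt{p}}_n$ requires $\Omega(\log^2 n)$ bits, we have $|\Sf^{\mathtt{p}}_n| = 2^{\Omega(\log^2 n)}$. If every $S \in \Sf^{\mathtt{p}}_n$ admitted an SLP of size at most $s$, then by pigeonhole
\[ 2^{O(s\log s)} \;\ge\; |\Sf^{\mathtt{p}}_n| \;=\; 2^{\Omega(\log^2 n)}, \]
which forces $s \log s = \Omega(\log^2 n)$. Hence some $S^{\mathtt{p}}_n \in \Sf^{\mathtt{p}}_n$ must require $\slp(S^{\mathtt{p}}_n) \ge s$ for every $s$ violating this inequality.

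Finally, to extract the claimed bound, I would observe that we may assume $s \le \log^2 n$ (otherwise $\slp(S^{\mathtt{p}}_n) = \Omega(\log^2 n) \supseteq \Omega(\log^2 n / \log\log n)$ is already stronger than needed), so $\log s = O(\log \log n)$ and $s \log s = \Omega(\log^2 n)$ yields $s = \Omega(\log^2 n/\log\log n)$. The main subtlety is making the counting of SLPs precise enough to get the right denominator $\log\log n$ rather than $\log n$: one must count SLPs of exact size $s$ (not encode them as bit strings of length $O(s\log n)$) so that the encoding cost per production is $O(\log s)$ rather than $O(\log n)$; using a smarter encoding does not help asymptotically, since even with $s = \Theta(\log^2 n /\log\log n)$ both $\log s$ and $\log\log n$ are of the same order, so the bound is tight up to constants.
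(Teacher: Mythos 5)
Your proof is correct and is essentially the paper's argument in its dual form: where the paper constructs an $O(\slp(S)\log\slp(S))$-bit encoding of each $S$ (assigning $\lceil\log\slp(S)\rceil$-bit identifiers to symbols) and invokes the $\Omega(\log^2 n)$-bit lower bound of \cref{lem:encode}, you count the $2^{O(s\log s)}$ SLPs of size at most $s$ and compare against $|\Sf^{\mathtt{p}}_n| = 2^{\Omega(\log^2 n)}$. Both routes hinge on the same observation that you correctly flag as the key subtlety -- pointers within an SLP cost $O(\log s)$ bits rather than $O(\log n)$ -- and both derive $\slp\log\slp = \Omega(\log^2 n)$ before extracting the final bound.
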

\begin{proof}
    Recall that each string $S\in \Sf^{\mathtt{p}}_n$ is binary and therefore
    can be encoded in $O(\slp(S) \log \slp(S))$ bits: every symbol is assigned
    a $\lceil \log \slp(S) \rceil$-bit identifier (with $0$ and $1$ reserved for $\mathtt{a}$ and $\mathtt{b}$, respectively) so that each production is encoded using $O(\log \slp(S))$ bits. 
    At the same time, \cref{lem:encode} proves that every encoding distinguishing members of  $\Sf^{\mathtt{p}}_n$
    requires $\Omega(\log^2 n)$ bits. In particular, our SLP-based encoding uses $\Omega(\log^2 n)$ bits
    for some string $S^{\mathtt{p}}_n \in \Sf^{\mathtt{p}}_n$.
    This string satisfies  $\slp(S^{\mathtt{p}}_n) \log \slp(S^{\mathtt{p}}_n) = \Omega(\log^2 n)$, and therefore
    $\slp(S^{\mathtt{p}}_n) = \Omega(\log^2 n / \log \log n)$.
\end{proof}
Note that the proof of \cref{cor:slp} does not apply to run-length context-free grammars because 
a production of the form $A\rightarrow B^t$ may need $\Theta(\log t)$ bits to represent the exponent $t$.
In fact, since $\delta(S^{\mathtt{p}}_n)=2$ holds for $n\ge 3$, the upper bound $g_{rl}=O(\delta \log \frac{n}{\delta})$ proved in the next section shows that $g_{rl}(S^{\mathtt{p}}_n)=O(\log n)$; generalizing \cref{cor:slp} to run-length context-free grammars is therefore inherently impossible.

\cref{cor:slp} shows that $g =\Omega(\delta \log^2\frac{n}{\delta} / \log \log \frac{n}{\delta})$
is possible for $\delta=2$ and any length $n\ge 3$. We generalize this construction to arbitrary $\delta\in [2\dd n]$ as in the proof of \cref{lem:nsd}. Our argument requires the following auxiliary lemma.

\begin{lemma}\label{lem:slps}
Consider a string $S = L \cdot R$. If the alphabets $\Sigma(L)$ of $L$ and $\Sigma(R)$ of $R$ are disjoint,
then $\slp(S) \ge \slp(L) + \slp(R)$.
\end{lemma}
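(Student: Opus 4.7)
The plan is to take an arbitrary SLP $G$ generating $S$ and transform it into SLPs $G_L$ and $G_R$ for $L$ and $R$ with $|G_L|+|G_R|\le |G|$, which immediately yields $\slp(L)+\slp(R)\le \slp(S)$. Since $\Sigma(L)$ and $\Sigma(R)$ are disjoint, every nonterminal $A$ of $G$ falls into exactly one of three classes: $N_L$ (with $\val(A)\in \Sigma(L)^*$), $N_R$ (with $\val(A)\in \Sigma(R)^*$), or $N_C$ (crossing, whose expansion contains symbols of both alphabets); every terminal appearing in $G$ similarly lies in $\Sigma(L)$ or in $\Sigma(R)$, so the terminal counts split additively.

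For every crossing nonterminal $A$ with production $A\to XY$, I would identify where the $\Sigma(L)/\Sigma(R)$ boundary within $\val(A)$ falls: \textbf{(c)} exactly between $\val(X)$ and $\val(Y)$, which forces $X\in N_L\cup\Sigma(L)$ and $Y\in N_R\cup\Sigma(R)$; \textbf{(a)} strictly inside $\val(X)$, which forces $X\in N_C$ and $Y\in N_R\cup\Sigma(R)$; or \textbf{(b)} strictly inside $\val(Y)$, which forces $X\in N_L\cup\Sigma(L)$ and $Y\in N_C$. I would then define, by induction along the DAG of $G$, a map $\phi_L$ that assigns to each $A\in N_L\cup N_C$ a symbol of $G_L$ expanding to the maximal $\Sigma(L)$-prefix of $\val(A)$: set $\phi_L(A):=A$ if $A\in N_L$; for crossing $A\to XY$ set $\phi_L(A):=X$ in case \textbf{(c)}, $\phi_L(A):=\phi_L(X)$ in case \textbf{(a)}, and in case \textbf{(b)} introduce a fresh nonterminal $\phi_L(A)$ with production $\phi_L(A)\to X\,\phi_L(Y)$. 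The grammar $G_L$ inherits all productions of $N_L$, adds one fresh production per case-\textbf{(b)} crossing nonterminal, and uses $\phi_L(\text{root})$ as its start symbol; the grammar $G_R$ is built symmetrically, with the roles of cases \textbf{(a)} and \textbf{(b)} swapped.

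Since each crossing nonterminal is in exactly one of the three cases, the total number of fresh nonterminals introduced across $G_L$ and $G_R$ is at most $|N_C|$. Consequently $|G_L|+|G_R|\le |N_L|+|N_R|+|N_C|+|\Sigma(L)|+|\Sigma(R)|\le |G|$, from which $\slp(L)+\slp(R)\le \slp(S)$ follows (and the same bound applies if $|\cdot|$ counts symbol occurrences on right-hand sides, since each nonterminal contributes $2$). The main obstacle is not the counting but the case analysis: verifying that each fresh production remains of size exactly $2$ and only references symbols already defined in its sub-grammar. This is resolved by processing the crossing nonterminals in topological order along the DAG of $G$, so that $\phi_L(Y)$ and $\phi_R(X)$ are always defined before they are used on a right-hand side.
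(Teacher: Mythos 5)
Your proof is correct and follows essentially the same strategy as the paper: starting from an optimal SLP for $S$, split it into SLPs for $L$ and $R$, and argue that each original rule contributes to at most one of the two resulting grammars. The paper phrases the construction more tersely (delete terminals of the other alphabet, remove nonterminals whose right-hand side becomes empty, suppress those of length one, and observe that in a rule $A\to BC$ either $C$ vanishes on the $L$-side or $B$ vanishes on the $R$-side), while your version makes the same bookkeeping explicit via the classification $N_L,N_R,N_C$ and the recursively defined maps $\phi_L,\phi_R$; the three cases (a)/(b)/(c) are precisely the cases hidden in the paper's "suppress/remove" step.
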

\begin{proof}
Our goal is to construct SLPs generating $L$ and $R$ with $\slp(S)$ symbols in total.
Let us fix an SLP generating $S$ with $\slp(S)$ symbols.
To generate $L$ ($R$), we start from the SLP of $S$ and remove every terminal not in $\Sigma(L)$ ($\Sigma(R)$).
Then, transitively remove nonterminals with empty right-hand sides and suppress nonterminals with right-hand sides of length 1. 

Now consider a rule $A \rightarrow BC$ in the SLP of $S$: either $C$ disappears in the SLP that generates $L$, or $B$ disappears in the SLP that generates $R$, or both. In all cases, the rule survives in at most one of the two SLPs; therefore the total number of symbols in the SLPs we build for $L$ and $R$ is at most $\slp(S)$. 
Consequently, $\slp(L) + \slp(R) \le \slp(S)$.
\end{proof}

\begin{theorem}\label{thm:gvsdelta}
For every length $n$ and integer value $\delta\in [2\dd n]$, there is a string $S[1\dd n]$ with
$g=\Omega(\delta\log^2\frac{n}{\delta}/\log\log\frac{n}{\delta})$. 
\end{theorem}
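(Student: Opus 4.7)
The plan is to follow the blueprint of \cref{lem:nsd} and \cref{thm:lbgamma}, but replace the information-theoretic counting argument (\cref{lem:encode}) with the SLP-based counting argument already packaged in \cref{cor:slp}, using \cref{lem:slps} in place of trivial additivity of bit-length. Concretely, I would fix $m = \lfloor\frac{\delta+1}{3}\rfloor = \Theta(\delta)$ and decompose $n - m + 1$ into $m$ parts $n_i = \Omega(n/m)$ summing to $n-m+1$. For each $i\in[1\dd m]$, I take $S^{(i)}$ to be the string $S^{\mathtt{p}}_{n_i}$ furnished by \cref{cor:slp}, but relabelled over a fresh binary alphabet $\{\mathtt{a}_i,\mathtt{b}_i\}$, and define
\[
S^* \;=\; S^{(1)}\,\$_1\,S^{(2)}\,\$_2\,\cdots\,\$_{m-1}\,S^{(m)}
\]
over the $3m-1$-symbol alphabet that also includes the delimiters $\$_1,\ldots,\$_{m-1}$.

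Next I would bound $\delta(S^*)$ exactly as in the proofs of \cref{thm:lbgamma} and \cref{lem:nsd}: every length-$k$ substring either contains a delimiter (contributing at most $(m-1)k$ choices), or is entirely contained in some $S^{(i)}$ (contributing at most $\sum_i d_k(S^{(i)}) \le 2mk$ choices), so $d_k(S^*) \le (3m-1)k$, while $d_1(S^*)=3m-1$. Hence $\delta(S^*)=3m-1=\Theta(\delta)$.

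The core step is the lower bound on $g(S^*)\ge \slp(S^*)/O(1)$. I would apply \cref{lem:slps} inductively: splitting $S^*=S^{(1)}\cdot(\$_1\,S^{(2)}\cdots S^{(m)})$ yields disjoint alphabets on the two sides, so $\slp(S^*)\ge \slp(S^{(1)})+\slp(\$_1\,S^{(2)}\cdots S^{(m)})$, and another application peels off $\$_1$ from $S^{(2)}\cdots S^{(m)}$ on the right (again with disjoint alphabets). Iterating gives $\slp(S^*)\ge \sum_{i=1}^m \slp(S^{(i)})$. Plugging in $\slp(S^{(i)}) = \Omega(\log^2 n_i/\log\log n_i) = \Omega(\log^2\frac{n}{m}/\log\log\frac{n}{m})$ from \cref{cor:slp}, and summing over the $m$ parts, I obtain
\[
\slp(S^*) \;=\; \Omega\!\left(m\,\frac{\log^2\frac{n}{m}}{\log\log\frac{n}{m}}\right) \;=\; \Omega\!\left(\delta\,\frac{\log^2\frac{n}{\delta}}{\log\log\frac{n}{\delta}}\right).
\]
Since $g=\Theta(\slp)$, this is the bound claimed for the regime $\delta = 3m-1$ with $n\ge 4m-1$. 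I expect the iterated application of \cref{lem:slps} to be the only mildly subtle step; the relabelling ensures the alphabets are genuinely disjoint at every peel, so the induction goes through cleanly.

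Finally I would dispose of the edge cases exactly in parallel with \cref{thm:lbgamma} and \cref{lem:nsd}: if $\delta<\frac{3n}{4}$ but $\delta\not\equiv 2\pmod{3}$, take $m=\lfloor\tfrac{\delta+1}{3}\rfloor$ and append $(\delta+1)\bmod 3$ additional fresh delimiters, which shifts $n$, $\delta$, and $g$ by a constant each and does not affect the asymptotics. For $\delta\ge \lceil\tfrac{3n}{4}\rceil$ the target bound simplifies to $g=\Omega(\delta)$; here I would use the separate family from the last paragraph of \cref{lem:d}, namely $S=\$_{\pi(1)}\cdots\$_{\pi(\delta)}\,\$_{\pi(\delta)}^{n-\delta}$ for a permutation $\pi$ of $[1\dd\delta]$, for which $\delta(S)=\delta$ and which contains $\delta$ distinct terminals, so any grammar generating $S$ has size at least $\delta$, completing the theorem.
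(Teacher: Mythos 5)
Your proof is correct and follows essentially the same route as the paper's: same concatenated construction $S^*=S^{(1)}\$_1\cdots\$_{m-1}S^{(m)}$ with relabelled copies of $S^\mathtt{p}_{n_i}$, same bound $\delta(S^*)=3m-1$, same iterated application of \cref{lem:slps} to obtain $\slp(S^*)\ge\sum_i\slp(S^{(i)})$, and same use of \cref{cor:slp} to lower bound each summand. The only divergence is your treatment of the corner case $\delta\ge\lceil\tfrac{3n}{4}\rceil$: you exhibit the explicit permutation family $\$_{\pi(1)}\cdots\$_{\pi(\delta)}\$_{\pi(\delta)}^{n-\delta}$ and argue $g\ge\delta$ via the alphabet size, whereas the paper simply observes that $g=\Omega(\delta)$ holds for \emph{any} string of measure $\delta$ by the chain $\delta\le\gamma=O(g)$ (\cref{lem:delta}); both are valid, yours is more self-contained, the paper's is shorter.
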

\begin{proof}
As in the proof of \cref{lem:nsd}, let $m \ge 1$, $n \ge 4m-1$, and $n-m+1=\sum_{i=1}^m n_i$, where $n_i=\Omega(\frac{n}{m})$ and $n_i\ge 3$.
Moreover, let $S = S^{(1)}\,\$_1\,S^{(2)}\,\$_2 \cdots \$_{m-1}\, S^{(m)}$,
where $S^{(i)}$ is obtained from the string $S^\mathtt{p}_{n_i}\in \Sf^\mathtt{p}_{n_i}$ of \cref{cor:slp} by replacing every $\mathtt{a}$ with $\mathtt{a}_i$ and every $\mathtt{b}$ with $\mathtt{b}_i$.
Note that $S$ belongs to the family constructed in the proof of \cref{lem:nsd}, and thus $\delta(S)=3m-1$.
Furthermore, since the alphabets of strings $S^{(i)}$ (for $i\in [1\dd m]$) and $\$_i$ (for $i\in [1\dd m)$)
are pairwise disjoint, \cref{lem:slps} yields $\slp(S)\ge \sum_{i=1}^m \slp(S^{(i)}) = \sum_{i=1}^m \slp(S^\mathtt{p}_{n_i}) = \Omega(m\log^2\frac{n}{m}/\log\log\frac{n}{m})$. 

This construction proves the theorem for $\delta = 3m-1$ and $n\ge 4m-1$.
The case where $\delta < \frac34n$ is not of the form $3m-1$ is handled as in the proof of \cref{thm:lbgamma}.
Finally, we note that if $\delta \ge \frac34n = \Omega(n)$, then the claim reduces to $g = \Omega(\delta)$ and therefore follows from \cref{lem:delta} due to $\gamma = O(g)$.
\end{proof}

\subsection{An upper bound on run-length grammar size}\label{sec:ub}
In this section, we prove that every string $S\in \Sigma^n$ can be generated using
a run-length context-free grammar of size $\Oh(\delta \log \frac{n}{\delta})$.
We obtain our grammar in the process of building a locally consistent parsing on top of $S$.
Our parsing is based on the recompression technique by Jeż~\cite{Jez16},
who used it to design a simple $\Oh(n)$-time algorithm constructing a (standard)
context-free grammar of size $\Oh(z \log \frac{n}{z})$.
More specifically, we rely on a very recent \emph{restricted recompression} by Kociumaka et al.~\cite{IPM},
which delays processing symbols generating long substrings.
Birenzwige et al.~\cite{Birenzwige2020} applied a similar idea to transform the locally consistent parsing
of Sahinalp and Vishkin~\cite{FirstConsistentParsing}.

\subsubsection{Run-length grammar construction via restricted recompression}
Both recompression and restricted recompression, given a string $S\in \Sigma^+$,
construct a sequence of strings $(S_k)_{k=0}^\infty$ over the alphabet $\Symb$ of \emph{symbols}
 defined as the least fixed point of the following equation:
 \[\Symb = \Sigma \cup (\Symb \times \Symb)\cup (\Symb \times \mathbb{Z}_{\ge 2} ).\]
Symbols in $\Symb \sm \Sigma$ are \emph{non-terminals}
with \emph{productions} $(A_1,A_2)\to A_1A_2$ for $(A_1,A_2)\in \Symb \times \Symb$ and $(A_1,m) \to A_1^m$
for $(A_1,m)\in \Symb\times \mathbb{Z}_{\ge 2}$.
With any $A\in \Symb$ designated as the start symbol, this yields a run-length straight-line program (RLSLP).
The following \emph{expansion} function $\val: \Symb \to \Sigma^+$ retrieves the string generated by this RLSLP:
\[\val(A) = \begin{cases}
  A & \text{if $A\in \Sigma$},\\
  \val(A_1)\cdot \val(A_2) & \text{if $A=(A_1,A_2)$ for $A_1,A_2\in \Symb$},\\
  \val(A_1)^m & \text{if $A=(A_1,m)$ for $A_1\in \Symb$ and $m\in \mathbb{Z}_{\ge 2}$}.
\end{cases}\]
Intuitively, $\Symb$ forms a \emph{universal} RLSLP:
for every RLSLP with symbols $\Pres$ and non-terminals $\Sigma\sub \Pres$, there is a unique homomorphism $f:\Pres \to \Symb$ such that $f(A)=A$ if $A\in \Sigma$, $f(A)\to f(A_1)f(A_2)$ if $A\to A_1A_2$, and $f(A)\to f(A_1)^m$ if $A\to A_1^m$.
As a result, $\Symb$ provides a convenient formalism to argue about procedures generating RLSLPs.

The main property of strings $(S_k)_{k=0}^\infty$ generated using (restricted) recompression
is that $\val(S_k)=S$ holds for all $k\in \Zz$, where $\val: \Symb\to \Sigma^+$ is lifted to $\val : \Symb^*\to \Sigma^*$ by setting $\val(A_1\cdots A_a)=\val(A_1)\cdots \val(A_a)$ for $A_1\cdots A_a\in \Symb^*$.
The subsequent strings $S_k$, starting from $S_0=S$, are obtained by alternate applications 
of the following two functions which
decompose a string $T\in \Symb^+$ into \emph{blocks} and then \emph{collapse} blocks
into appropriate symbols. In \cref{it:rle}, all blocks of length at least $2$ are maximal blocks of the same symbol,
and they are collapsed to symbols in $\Symb\times \mathbb{Z}_{\ge 2}$.
In \cref{it:pair}, there are no blocks of length more than 2, and all blocks of length 2 are collapsed to symbols in $\Symb \times \Symb$.

\begin{definition}[Restricted run-length encoding~\cite{IPM}]\label{it:rle}
Given $T\in \Symb^+$ and $\Act \sub \Symb$, we define $\rle_{\Act}(T)\in \Symb^+$
to be the string obtained as follows by decomposing $T$ into blocks and collapsing these blocks:
\begin{enumerate}
  \item For $i\in [1\dd |T|)$, place a \emph{block boundary} between $T[i]$ and $T[i+1]$
  unless $T[i]=T[i+1]\in \Act$.
  \item Replace each block $T[i\dd i+m)= A^m$ of length $m\ge 2$ with a symbol $(A,m)\in \Symb$.
\end{enumerate}
\end{definition}

\begin{definition}[Restricted pair compression~\cite{IPM}]\label{it:pair}
  Given $T\in \Symb^+$ and disjoint sets $\Left,\Right \sub \Symb$, we define $\pc_{\Left,\Right}(T)\in \Symb^+$
  to be the string obtained as follows by decomposing $T$ into blocks and collapsing these blocks:
  \begin{enumerate}
    \item For $i\in [1\dd |T|)$, place a \emph{block boundary} between $T[i]$ and $T[i+1]$
    unless $T[i]\in \Left$ and $T[i+1]\in \Right$.
    \item Replace each block $T[i\dd i+1]$ of length $2$ with a symbol $(T[i],T[i+1])\in \Symb$.
  \end{enumerate}
\end{definition}
The original recompression uses $\rle_{\Act}$ with $\Act = \Symb$ and $\pc_{\Left,\Right}$ with $\Symb = \Left\cup \Right$. 
In the restricted version, symbols in $\Symb\sm \Act$ and $\Symb\sm (\Left \cup \Right)$, respectively,
are forced to form length-$1$ blocks.
In the $k$th round of restricted recompression, this mechanism is applied to symbols $A$ whose expansion $\val(A)$ is longer than a certain threshold $\ell_k$. 

With this intuition, we are now ready to formally define the sequence $(S_k)_{k=0}^\infty$
constructed through restricted recompression.

\begin{construction}[Restricted recompression~\cite{IPM}]\label{constr:Jez}
Given a string $S\in \Sigma^+$, the strings $S_k$ for $k\in \Zz$ are constructed as follows,
based on $\ell_k := (\tfrac{8}{7})^{\ceil{k/2}-1}$ and $\Symb_k := \{A\in \Symb : |\val(A)|\le \ell_k\}$:
\begin{itemize}
  \item If $k=0$, then $S_k = S$.
  \item If $k>0$ is odd, then $S_{k}=\rle_{\Symb_k}(S_{k-1})$.
  \item If $k>0$ is even, then $S_{k}=\pc_{\Left_k,\Right_k}(S_{k-1})$, where $\Symb_k = \Left_k\cup \Right_k$ is a uniformly random partition into disjoint classes.
\end{itemize}
\end{construction}

It is easy to see that $\val(S_k)=S$ indeed holds for all $k\in \Zz$.
As we argue below, almost surely (with probability~1),
there exists $h\in \Zz$ such that $|S_h|=1$. 
In particular, an RLSLP generating $S$ can be obtained by setting $S_h[1]$ as the starting symbol of the RLSLP derived from by $\Symb$.
While this RLSLP contains infinitely many symbols, it turns out that we can remove symbols that do not occur in any string $S_k$.
Formally, for each $k\in \Zz$, let us define the family $\Pres_k:=\{S_k[j] : j\in [1\dd |S_k|]\}\sub \Symb$
of symbols occurring in $S_k$. 
Observe that each symbol in $S_0$ belongs to $\Sigma$ and, for $k\in \Zp$, each symbol in $S_k$
was either copied from $S_{k-1}$ or obtained by collapsing a block in $S_{k-1}$.
Consequently, the family $\Pres := \bigcup_{k=0}^\infty \Pres_k$ satisfies $\Pres \sub \Sigma \cup (\Pres \times \Pres)\cup (\Pres\times \mathbb{Z}_{\ge 2})$.
In other words, for every non-terminal $A\in \Pres$, the symbols on the right-hand side of the production of $A$ also belong to $\Pres$. Hence, the remaining symbols can indeed be removed from the RLSLP generating $S$.
As  the size of resulting run-length context-free grammar is proportional to $|\Pres|$,
we are left with the task of bounding $\Exp[|\Pres|]$.

\subsubsection{Analysis of the grammar size}
Our argument relies on several properties of restricted recompression proved in~\cite{IPM}.
Due to the current status of~\cite{IPM} being an unpublished manuscript, the proofs are provided in \cref{app:proofs}
for completeness.

\begin{restatable}[\cite{IPM}]{fact}{fctcons}\label{fct:cons}
  For every $k\in \Zz$, if $\val(x)=\val(x')$ holds for two fragments of $S_k$,
  then $x= x'$.
 \end{restatable}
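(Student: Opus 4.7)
The plan is to prove \cref{fct:cons} by induction on $k$. The base case $k=0$ is immediate: $S_0=S$ consists entirely of terminals, and $\val$ acts as the identity on $\Sigma^*$, so $\val(x)=\val(x')$ forces $x=x'$ directly.

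For the inductive step, fix $k\ge 1$ and assume the claim for $k-1$. Let $x$ and $x'$ be fragments of $S_k$ with $\val(x)=\val(x')$. By \cref{constr:Jez}, each symbol $S_k[i]$ arises by collapsing exactly one block of $S_{k-1}$: of length $1$ if the symbol was merely copied, or of length $\ge 2$ if it was collapsed into a symbol of the form $(A,m)$ or $(A_1,A_2)$. Associate with $x=S_k[i\dd j]$ the fragment $y$ of $S_{k-1}$ formed by concatenating the blocks corresponding to $S_k[i],\ldots,S_k[j]$, and define $y'$ analogously for $x'$. Since collapsing preserves expansions (by the definition of $\val$ on the new non-terminals), we have $\val(y)=\val(x)=\val(x')=\val(y')$, so the inductive hypothesis yields $y=y'$.

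It then suffices to show that $x$ is uniquely determined by $y$ together with the knowledge that the endpoints of $y$ lie at block boundaries of $S_{k-1}$. The endpoint property holds by construction: because $y$ is assembled from whole blocks, it starts at the beginning of the first block and ends at the end of the last one. The internal block boundaries within $y$ are placed according to \cref{it:rle} (for odd $k$) or \cref{it:pair} (for even $k$), and each such rule inspects only a local condition on two adjacent symbols, using the globally fixed sets $\Symb_k$ or $\Left_k,\Right_k$ that do not depend on position. Consequently, the block decomposition of $y$, and hence the sequence of symbols obtained by collapsing its blocks, is determined by $y$ alone. Since $y=y'$, the same decomposition and the same collapsing apply to both, giving $x=x'$.

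The main subtlety is justifying that the endpoints of $y$ and $y'$ coincide with block boundaries of $S_{k-1}$; once this is granted, the local nature of the block-placement rules completes the argument. Note that the randomness in the partitions $\Left_k,\Right_k$ at even levels plays no role here: the statement is deterministic and the inductive argument works verbatim for any fixed outcome of the random choices.
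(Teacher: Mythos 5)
Your proof is correct and follows essentially the same route as the paper's: induct on $k$, expand $x,x'$ to fragments $y,y'$ of $S_{k-1}$ made of full blocks, apply the inductive hypothesis to get $y=y'$, and then observe that the block decomposition within a fragment is determined locally by the adjacent-symbol rules (using that both endpoints sit on block boundaries, which holds by construction), so collapsing $y$ and $y'$ yields $x=x'$. The paper phrases the same idea as "block boundaries are placed consistently" and "$\bar x$, $\bar x'$ consist of full blocks," but the content is identical.
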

  
\begin{restatable}[\cite{IPM}]{corollary}{cordistinct}\label{cor:distinct}
  For every odd $k\in \Zz$, there is no $j\in [1\dd |S_k|)$ such that $S_k[j]=S_k[j+1]\in \Symb_{k+1}$.
\end{restatable}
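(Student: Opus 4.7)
The plan is to derive a contradiction from assuming $S_k[j] = S_k[j+1] = A$ for some $j \in [1\dd |S_k|)$ and $A \in \Symb_{k+1}$. I would first observe that for odd $k$ we have $\lceil k/2\rceil = \lceil(k+1)/2\rceil = (k+1)/2$, so $\ell_k = \ell_{k+1}$ and hence $\Symb_k = \Symb_{k+1}$. This lets me work with $A \in \Symb_k$, which is exactly the class governing $\rle_{\Symb_k}$, the function that produced $S_k$ from $S_{k-1}$ according to \cref{constr:Jez}.

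I would then trace back to the two adjacent blocks $B_j$ and $B_{j+1}$ of $S_{k-1}$ that collapsed into $S_k[j]$ and $S_k[j+1]$. Per \cref{it:rle}, each block is either a singleton $[A]$ (if $A$ already occurs in $S_{k-1}$) or a maximal run $C^m$ with $m \ge 2$ and $A = (C,m)$; in the latter case the bound $|\val(A)| = m|\val(C)| \le \ell_k$ together with $m \ge 2$ forces $|\val(C)| \le \ell_k/2$, so $C \in \Symb_k$ as well.

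A three-way case analysis then follows. If both blocks are singletons $[A]$, then $S_{k-1}$ contains $AA$ at two adjacent positions; since $A \in \Symb_k$, the rule of $\rle_{\Symb_k}$ forbids a block boundary there, a contradiction. If both are runs $C^m$, then their concatenation $C^{2m}$ sits inside $S_{k-1}$ and should have formed a single maximal block, since $C \in \Symb_k$, again a contradiction. The remaining mixed case, where without loss of generality $B_j = [A]$ and $B_{j+1} = C^m$ with $A=(C,m)$, cannot be resolved by the $\rle$ rule alone: the flanking symbols $A$ and $C$ are genuinely distinct and so naturally induce a block boundary. Here I would invoke \cref{fct:cons}: the singleton $[A]$ at position $(i,i)$ and the run $C^m$ at position $(i+1,i+m)$ are two fragments of $S_{k-1}$ of lengths $1$ and $m\ge 2$, respectively, with identical expansion $\val(A)=\val(C)^m$, contradicting the injectivity of $\val$ on fragments of $S_{k-1}$.

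The main obstacle is the mixed case, which is the only one not settled by the purely local rule of $\rle_{\Symb_k}$ and instead depends on the global consistency of restricted recompression encoded in \cref{fct:cons}.
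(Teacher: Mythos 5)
Your proposal is correct, and it rests on the same two ingredients as the paper's proof: \cref{fct:cons} and the observation that two equal adjacent symbols in $\Symb_k$ cannot be separated by a boundary of $\rle_{\Symb_k}$. The organization differs, though, and the paper's version is a bit more economical. Where you do a three-way case analysis on the types of the two blocks $B_j, B_{j+1}$ and invoke \cref{fct:cons} only to rule out the mixed singleton/run case, the paper applies \cref{fct:cons} immediately to the two blocks $x, x'$ (which have the same expansion since they collapse to equal symbols) and concludes $x = x'$; this forces equal lengths and equal contents, so the mixed case never arises and the remaining two-way split on $\ell = 1$ vs.\ $\ell \ge 2$ collapses to a single observation that $S_{k-1}[i] = S_{k-1}[i+1] \in \Symb_k$, contradicting the placed boundary. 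Your route buys a clearer picture of exactly which case genuinely needs the global consistency property (the mixed case), at the cost of some redundancy: once you accept \cref{fct:cons}, the mixed case is vacuous, so your cases (a) and (b) could also be handled in one stroke as the paper does. One small simplification in your run case: you derive $C \in \Symb_k$ from $|\val(C)| \le \ell_k/2$, but it follows directly from the fact that $\rle_{\Symb_k}$ only merges adjacent symbols that belong to $\Symb_k$, so a run block of length $\ge 2$ can only be formed from a symbol in $\Symb_k$.
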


Recall that $\val(S_k)=S$ for every $k\in \Zz$. Hence, for every $j\in [1\dd |S_k|]$,
we can associate $S_k[j]$ with a fragment $S(|\val(S_k[1\dd j))|\dd |\val(S_k[1\dd j])|]= \val(S_k[j])$;
these fragments are called \emph{phrases} (of $S$) induced by $S_k$.
We also define a set $B_k$ of \emph{phrase boundaries} induced by $S_k$:
\[B_k = \{|\val(S_{k}[1\dd j])| : j\in [1\dd |S_k|)\}.\]

\begin{restatable}[\cite{IPM}]{lemma}{lemrecompr}\label{lem:recompr1}
Let $\alpha \in \mathbb{Z}_{\ge 1}$ and let $i,i'\in [\alpha \dd n-\alpha]$ be such that $S(i-\alpha\dd i+\alpha]= S(i'-\alpha\dd i'+\alpha]$.
For every $k\in \Zz$, if $\alpha \ge 16\ell_k$, then $i\in B_k \Longleftrightarrow i'\in B_k$.
\end{restatable}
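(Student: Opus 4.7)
The plan is to prove the claim by induction on $k$, writing $\sigma(p) := p - i + i'$ so that the hypothesis says $S(p-\alpha\dd p+\alpha]=S(\sigma(p)-\alpha\dd \sigma(p)+\alpha]$ at $p=i$. The base case $k=0$ is immediate: $S_0=S$ gives $B_0=[1\dd n)$, and both $i,i'$ sit in this interval.

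For the inductive step, I first use the inductive hypothesis at a level $k'<k$ to control $B_{k'}$ inside a ``safe zone'' $[i-\alpha+16\ell_{k'}\dd i+\alpha-16\ell_{k'}]$: for every $p$ there, $p\in B_{k'}\Leftrightarrow \sigma(p)\in B_{k'}$. Applying this successively for $k'=k-1,k-2,\ldots$ yields the analogous correspondence for $B_{k-1}$ on a slightly shrunken safe zone. The choice of $k'$ is subtle because $\ell_k/\ell_{k-1}\in\{1,8/7\}$: when the ratio is $1$, the step from $k-1$ alone leaves no slack, so the argument has to descend one more level and exploit $\ell_k/\ell_{k-2}=8/7$ in order to obtain slack proportional to $\ell_k$.

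Once $B_{k-1}$ is synchronized inside the safe zone, I identify the contiguous fragment $x$ of $S_{k-1}$ whose expansion is the maximal subinterval of $S(i-\alpha+16\ell_{k-1}\dd i+\alpha-16\ell_{k-1}]$ aligned with $B_{k-1}$-boundaries, and the analogous fragment $x'$ around $i'$. The matched endpoints and matching content of $S$ give $\val(x)=\val(x')$, so \cref{fct:cons} yields $x=x'$. Whether the boundary at $i$ survives to $B_k$ depends only on the pair $S_{k-1}[j],S_{k-1}[j+1]$ straddling $i$ together with the shared round-$k$ data ($\Symb_k$ for odd $k$, the random partition $\Left_k,\Right_k$ for even $k$). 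If both of these symbols have expansion shorter than the safe radius, they occur inside $x=x'$ and therefore equal their counterparts at $i'$, giving the same round-$k$ decision. Otherwise some symbol, say $S_{k-1}[j]$, has $|\val(S_{k-1}[j])|>\ell_k$, so $S_{k-1}[j]\notin\Symb_k$ and the boundary at $i$ automatically survives; the long expansion is witnessed locally as an absence of $B_{k-1}$-boundaries within distance $\ell_k$ to the left of $i$, and the IH transfers this absence to $i'$, forcing $S_{k-1}[j']\notin\Symb_k$ and the boundary at $i'$ to survive as well.

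The technical heart of the proof is the bookkeeping behind the constant $16$ in $\alpha\ge 16\ell_k$: it has to absorb simultaneously (i) the shrinkage $16\ell_{k'}$ incurred when descending to apply the IH, (ii) the radius $\ell_k$ needed to certify the long-expansion case from the local absence of $B_{k-1}$-boundaries, and (iii) edge effects near positions $1$ and $n$. These balance out thanks to the geometric growth $\ell_k=(8/7)^{\lceil k/2\rceil-1}$ and the two-round period in which the scale strictly increases, which is the main obstacle I expect to have to verify carefully.
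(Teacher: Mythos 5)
Your approach mirrors the paper's: induct on $k$, use the IH at level $k-1$ to synchronize the $B_{k-1}$-boundaries around $i$ and $i'$, invoke \cref{fct:cons} to conclude that the symbols of $S_{k-1}$ straddling $i$ equal those straddling $i'$, and observe that the local rule deciding whether a boundary survives to $B_k$ depends only on this straddling pair plus the shared round-$k$ data, hence gives the same answer at both positions. Your concern about the case $\ell_k=\ell_{k-1}$ (even $k$) is well founded, but the paper resolves it more cleanly than by descending two levels: it strengthens the induction hypothesis asymmetrically, proving the statement for $\alpha\ge 15\ell_k$ when $k$ is odd while keeping $\alpha\ge 16\ell_k$ when $k$ is even. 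This works because the shrinkage per inductive step is only $\floor{\ell_k}$, not $16\ell_{k-1}$ as your ``safe zone'' description suggests: the IH at level $k-1$ is needed only at the shifted pairs $(i+\delta,i'+\delta)$ for $\delta\in[-\floor{\ell_k}\dd\floor{\ell_k}]$, which suffices because in the contradiction framing (suppose $i\in B_k$ but $i'\notin B_k$) the symbols straddling $i'$ must lie in $\Symb_k$ and hence have expansions of length at most $\floor{\ell_k}$. With this accounting, $16\ell_k - \floor{\ell_k}\ge 15\ell_{k-1}$ for even $k$ and $15\ell_k-\floor{\ell_k}\ge 16\ell_{k-1}$ for odd $k$, and the induction closes. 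Your explicit short-versus-long case split is folded into the contradiction: assuming $i'\notin B_k$ immediately rules out the long case at $i'$, and the matching of symbols then transfers shortness to $i$, yielding the contradiction $i\notin B_k$.
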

\begin{restatable}[\cite{IPM}]{lemma}{lemrand}\label{lem:rand0}
  For every $k\in \Zz$, we have $\Exp[|S_k|]< 1+\frac{4n}{\ell_{k+1}}$.
\end{restatable}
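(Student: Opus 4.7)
The plan is to prove the bound by induction on $k$. The base case $k=0$ is immediate: $|S_0|=n$ and $\ell_1 = (8/7)^0 = 1$, so $\Exp[|S_0|]=n<1+4n=1+4n/\ell_1$. For odd $k\ge 1$, the definition $\ell_k=(8/7)^{\lceil k/2\rceil-1}$ gives $\ell_k=\ell_{k+1}$, and since $S_k=\rle_{\Symb_k}(S_{k-1})$ can only shorten a string, $|S_k|\le |S_{k-1}|$ and the bound at $k$ follows immediately from the inductive hypothesis at~$k-1$.

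The substantive case is even $k\ge 2$, where $S_k=\pc_{\Left_k,\Right_k}(S_{k-1})$. I would introduce $X$, the number of indices $j\in[1\dd |S_{k-1}|)$ with $S_{k-1}[j]\in \Left_k$ and $S_{k-1}[j+1]\in \Right_k$. Because $\Left_k$ and $\Right_k$ are disjoint, two such collapsing pairs cannot overlap, so $|S_k|=|S_{k-1}|-X$. Call index $j$ \emph{compressible} if both $S_{k-1}[j]$ and $S_{k-1}[j+1]$ belong to $\Symb_k$. The symbols $A\notin \Symb_k$ satisfy $|\val(A)|>\ell_k$, and their expansions in $S_{k-1}$ sum to at most $n$, so at most $n/\ell_k$ such ``large'' symbols appear in $S_{k-1}$, each touching at most two indices; hence there are at most $2n/\ell_k$ incompressible indices. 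Conditioning on $S_{k-1}$, the partition $\Symb_k=\Left_k\cup\Right_k$ chosen in round $k$ is independent and uniformly random, and by Corollary~\ref{cor:distinct} applied at the odd index $k-1$, the two adjacent symbols at any compressible index are distinct; consequently, that index collapses with probability exactly $\tfrac14$.

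Combining these observations yields $\Exp[X\mid S_{k-1}]\ge \tfrac14(|S_{k-1}|-1-2n/\ell_k)$ and therefore
\[\Exp[|S_k|]\le \tfrac34\Exp[|S_{k-1}|]+\tfrac14+\tfrac{n}{2\ell_k}.\]
Plugging in the inductive hypothesis $\Exp[|S_{k-1}|]<1+4n/\ell_k$ and using the identity $\ell_{k+1}=\tfrac87\ell_k$, which holds for even $k\ge 2$, the right-hand side simplifies to
\[\tfrac34\Bigl(1+\tfrac{4n}{\ell_k}\Bigr)+\tfrac14+\tfrac{n}{2\ell_k}=1+\tfrac72\cdot\tfrac{n}{\ell_k}=1+\tfrac{4n}{\ell_{k+1}},\]
closing the induction with the required strict inequality.

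The hard part will be the probabilistic step: one must carefully justify that conditioning on $S_{k-1}$ makes the round-$k$ partition independent and uniformly random over $\Symb_k$, and, crucially, that Corollary~\ref{cor:distinct} forbids equal adjacent compressible symbols (without this, such a pair could never occupy the $\Left_k\times\Right_k$ configuration and the $\tfrac14$ bound would collapse). Tracking the parity-dependent growth $\ell_{k+1}/\ell_k\in\{1,\tfrac87\}$ is a smaller but easy-to-miss bookkeeping issue.
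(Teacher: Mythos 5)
Your proof is correct and follows essentially the same induction and probabilistic argument as the paper: base case $k=0$, trivial odd step using $\ell_k=\ell_{k+1}$, and for even $k$ a $\tfrac14$ collapse probability per compressible index (via Corollary~\ref{cor:distinct}) together with an $O(n/\ell_k)$ bound on incompressible indices. The only cosmetic difference is that you count the collapses $X$ directly (so $|S_k|=|S_{k-1}|-X$), whereas the paper counts the complementary set $J$ of forced non-collapses; the resulting inequality and arithmetic are identical.
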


\cref{lem:rand0} can be used to confirm that  almost surely $|S_k|=1$ holds for some $k\in \Zz$.
\begin{corollary}\label{cor:fin}
With probability $1$, there exists $k\in \Zz$ such that $|S_k|=1$.
\end{corollary}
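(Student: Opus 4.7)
The plan is to combine the expectation bound of Lemma~\ref{lem:rand0} with a monotonicity argument for the sequence of events $\{|S_k|=1\}$. The first step is to verify that once $|S_k|=1$, all subsequent strings are also singletons. Indeed, if $S_k$ consists of a single symbol, then the decomposition step in Definition~\ref{it:rle} (resp.\ Definition~\ref{it:pair}) places no block boundaries, leaves the sole block of length $1$ intact (since blocks of length $1$ are not collapsed), and hence $S_{k+1}=S_k$. Thus the events $A_k := \{|S_k|=1\}$ form an increasing chain $A_0 \subseteq A_1 \subseteq \cdots$, and by continuity of probability,
\[\Pr\!\left[\bigcup_{k\in \Zz} A_k\right] = \lim_{k\to\infty} \Pr[A_k].\]

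Next, I would bound $\Pr[|S_k|\ge 2]$ using Markov's inequality. Since $|S_k|\ge 1$ always, the random variable $|S_k|-1$ is non-negative and integer-valued, so
\[\Pr[|S_k|\ge 2] = \Pr[|S_k|-1 \ge 1] \le \Exp[|S_k|]-1 < \frac{4n}{\ell_{k+1}},\]
where the last inequality is Lemma~\ref{lem:rand0}. Because $\ell_{k+1}=(\tfrac{8}{7})^{\lceil (k+1)/2\rceil-1}$ tends to infinity with $k$, we obtain $\Pr[A_k]=1-\Pr[|S_k|\ge 2] \to 1$, and consequently $\Pr[\exists k : |S_k|=1]=1$.

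I do not foresee a serious obstacle here: the only point that requires care is the monotonicity claim, which hinges on checking that a length-$1$ string is a fixed point of both $\rle_{\Symb_k}$ and $\pc_{\Left_k,\Right_k}$. Once that is in place, Lemma~\ref{lem:rand0} together with Markov's inequality does the rest.
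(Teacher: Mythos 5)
Your proof is correct and rests on the same key facts as the paper's: Lemma~\ref{lem:rand0} plus a Markov-type bound showing $\Pr[|S_k|\ge 2]\to 0$. The paper phrases this as a proof by contradiction (assuming $\Pr[\min_k|S_k|>1]=\eps>0$ forces $\Exp[|S_k|]\ge 1+\eps$ for all $k$), whereas you argue directly via monotonicity and continuity of probability; the monotonicity observation is valid but not strictly necessary, since $\Pr[\exists k:|S_k|=1]\ge\sup_k\Pr[|S_k|=1]$ already suffices.
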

\begin{proof}
For a proof by contradiction, suppose that $\eps := \Pr[\min_{k=0}^\infty |S_k| > 1] > 0$.
In particular, this yields $\Exp[|S_k|]\ge 1+\eps$ for every $k\in \Zz$.
However, \cref{lem:rand0} implies $\lim_{k\to \infty} |S_k| \le 1 + \lim_{k\to \infty} \frac{4n}{\ell_{k+1}}=1$,
a contradiction.
\end{proof}

Our main goal is to prove that $\Exp[|\Pres|]=\Oh(\delta \log \frac{n}{\delta})$ (\cref{cor:all}).
As a stepping stone, we show that $\Exp[|\Symb_{k+1}\cap \Pres_k|] = \Oh(\delta)$ holds for all $k\in \Zz$ (\cref{lem:level}).
The restriction to symbols in $\Symb_{k+1}$ is not harmful because every symbol in $S_k$ that does not belong to $\Symb_{k+1}$ forms a length-$1$ block that gets propagated to $S_{k+1}$.
The strategy behind the proof of \cref{lem:level} is to consider the phrases induced by the leftmost occurrences of all symbol in $\Symb_{k+1} \cap \Pres_k$. 
Using \cref{lem:recompr1}, we construct $\Oh(\delta)$ fragments of $S$ of total length $\Oh(\ell_k\delta)$ guaranteed to overlap all these phrases, and we show that these fragments in expectation overlap $\Oh(\delta)$
phrases induced by $S_k$.
The latter claim is a consequence of the following generalization of \cref{lem:rand0}.
\begin{lemma}\label{lem:rand}
For every $k\in \Zz$ and every interval $I\sub [1\dd n)$,
we have \[\Exp[|B_k\cap I|]< 1+\tfrac{4|I|}{\ell_{k+1}}.\]
\end{lemma}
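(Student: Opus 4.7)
The plan is to generalize the proof of \cref{lem:rand0}, which established $\Exp[|S_k|]<1+\tfrac{4n}{\ell_{k+1}}$ (equivalently $\Exp[|B_k|]<\tfrac{4n}{\ell_{k+1}}$, since $|B_k|=|S_k|-1$), from the full string to an arbitrary sub-interval. The key observation is that the shrinkage analysis behind \cref{lem:rand0} is local: the expected disappearance of a $B_{k-1}$-boundary when passing from level $k-1$ to level $k$ depends only on a constant-radius neighborhood of the boundary. Restricting attention to $B_k\cap I$ therefore runs the same telescoping argument on a segment of length $|I|$, with only an $\Oh(1)$ endpoint correction absorbed by the additive $+1$ in the stated bound.

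Concretely, I would proceed by induction on $k$, matching the induction behind \cref{lem:rand0}. The base case $k=0$ gives $|B_0\cap I|=|I|$, and since $\ell_1=1$ the target $1+4|I|/\ell_1=1+4|I|$ holds trivially. For the inductive step, I would exploit the inclusion $B_k\subseteq B_{k-1}$: every phrase of $S_k$ is obtained by collapsing a maximal block of consecutive phrases of $S_{k-1}$ under $\rle_{\Symb_k}$ (odd $k$) or $\pc_{\Left_k,\Right_k}$ (even $k$), so every $B_k$-boundary is inherited from $B_{k-1}$. It then suffices to bound the expected number of $B_{k-1}$-boundaries inside $I$ that survive the recompression step. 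For even $k$, each such boundary flanked by two consecutive $\Symb_k$-symbols is destroyed with probability $\tfrac14$ under the uniformly random partition $\Symb_k=\Left_k\cup\Right_k$; \cref{cor:distinct} ensures that the flanking symbols are distinct, so the pair is not excluded by equality. For odd $k$, the boundary is destroyed deterministically whenever the two flanking symbols are equal elements of $\Symb_k$. Since $\ell_{k+1}$ grows by a factor of $\tfrac{8}{7}$ every two levels, the combined per-pair shrinkage across a two-level cycle matches this growth, yielding the $\tfrac{4|I|}{\ell_{k+1}}$ term.

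The main obstacle is that boundaries flanked by symbols outside $\Symb_k$ are not touched at level $k$, so the shrinkage analysis does not apply to them directly. This is handled by a length argument: any such symbol $A$ satisfies $|\val(A)|>\ell_k$, so at most $|I|/\ell_k=\Oh(|I|/\ell_{k+1})$ of them can intersect $I$. This length-versus-probability balance is the core of the proof of \cref{lem:rand0}, and verifying that it remains valid when $n$ is replaced by $|I|$ is the technical content of the generalization; the additive $+1$ accounts for the at most one $S_k$-phrase that may straddle an endpoint of $I$ without being fully contained, which contributes to the surviving boundary count but is otherwise not captured by the bulk shrinkage estimate.
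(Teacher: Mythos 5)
Your proposal follows essentially the same route as the paper's proof: induction on $k$ with the trivial base case $|B_0\cap I|=|I|$, the monotonicity $B_k\subseteq B_{k-1}$ disposing of odd $k$ (where $\ell_k=\ell_{k+1}$), and for even $k$ the combination of a $\tfrac34$ survival probability per boundary flanked by $\Symb_k$-symbols (using \cref{cor:distinct}) with an $O(|I|/\ell_k)$ length bound on boundaries adjacent to symbols outside $\Symb_k$. The only small imprecision is your attribution of the additive $+1$ to a phrase straddling an endpoint of $I$; in the paper's computation it originates from the $+1$ in the bound $|J_I|<1+\tfrac{2|I|}{\ell_k}$ and is carried as an inductive invariant, but this does not affect the validity of your argument.
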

\begin{proof}
  We proceed by induction on $k$. For $k=0$, we have $|B_k\cap I|=|I| < 1+4|I| = 1+\frac{4|I|}{\ell_1}$.
  If $k$ is odd, we note that $B_{k}\sub B_{k-1}$ and therefore $\Exp[|B_k\cap I|] \le \Exp[|B_{k-1}\cap I|]
  < 1+\frac{4|I|}{\ell_{k}} = 1+\frac{4|I|}{\ell_{k+1}}$.
  Thus, it remains to consider even values $k>0$.

  \begin{claim}\label{clm:rand}
    If $k>0$ is even, then, conditioned on any fixed $S_{k-1}$, we have $
      \Exp\left[|B_{k}\cap I| \bigm| S_{k-1}\right]< \tfrac14 + \tfrac{|I|}{2\ell_k}+\tfrac34|B_{k-1}\cap I|$.
  \end{claim}
  \begin{proof}
    Let us define 
    \begin{align*} 
      J &= \{j\in [1\dd |S_{k-1}|) :S_{k-1}[j]\notin\Symb_k\text{ or }S_{k-1}[j+1]\notin\Symb_k\},\\
      J_I &= \{j \in J : |\val(S_{k-1}[1\dd j])|\in I\}\sub B_{k-1}\cap I.
    \end{align*}
  Since $A\notin \Symb_k$ yields $|\val(A)|>\ell_k$, we have $|J_I| < 1+\tfrac{2|I|}{\ell_k}$.
  Moreover, observe that if $j\in [1\dd |S_{k-1}|)\sm J$, then $S_{k-1}[j]$ and $S_{k-1}[j+1]$ are,
  by \cref{cor:distinct}, distinct symbols in $\Symb_k$. Consequently,
  \[\Pr[S_{k-1}[j]\in \Left_k \text{ and } S_{k-1}[j+1]\in \Right_k] = \tfrac14.\]
  Thus, the probability that $\pc_{\Left_k,\Right_k}(S_{k-1})$ places a block boundary after position $j\in [1\dd |S_{k-1}|)\sm J$
  is $\frac34$. Therefore,
  \[\Exp\left[|B_{k}\cap I| \bigm| S_{k-1}\right] =  |J_I| + \tfrac34(|B_{k-1}\cap I|-|J_I|)
     = \tfrac14|J_I|+\tfrac34|B_{k-1}\cap I|  <\tfrac14 + \tfrac{|I|}{2\ell_k}+\tfrac34|B_{k-1}\cap I|.\qedhere
    \]
  \end{proof}

  Since the partition $\Symb_k=\Left_k\cup\Right_k$ is independent of $S_{k-1}$,
  \cref{clm:rand} and the inductive assumption yield
  \[\Exp[|B_k\cap I|] < \tfrac14 + \tfrac{|I|}{2\ell_k}+\tfrac34\Exp[|B_{k-1}\cap I|] < \tfrac14 + \tfrac{|I|}{2\ell_k} + \tfrac34 + \tfrac{3|I|}{\ell_k}=1+\tfrac{7|I|}{2\ell_k}=1+\tfrac{4|I|}{\ell_{k+1}}.\qedhere\]
\end{proof}

Next, we apply \cref{lem:rand,lem:recompr1} to bound the expected 
size of $\Pres_k\cap \Symb_{k+1}$.

\begin{lemma}\label{lem:level}
  For every $k\in \Zz$ and every string $S\in \Sigma^+$ with measure $\delta$, we have $\Exp[|\Pres_k\cap \Symb_{k+1}|]=\Oh(\delta)$.
\end{lemma}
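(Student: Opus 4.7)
The plan is to bound $\Exp[|\Pres_k\cap\Symb_{k+1}|]$ by identifying each such symbol with the leftmost phrase it induces in $S$, proving via local consistency that these phrases occupy very constrained positions, and then bounding their expected count using \cref{lem:rand} on a deterministic cover of $S$. For each $A\in\Pres_k\cap\Symb_{k+1}$, set $j_A=\min\{j:S_k[j]=A\}$ and $s_A=|\val(S_k[1\dd j_A-1])|+1$, so that the induced phrase $P_A=S[s_A\dd s_A+|\val(A)|-1]$ has length at most $\ell_{k+1}$ and $s_A-1\in B_k\cup\{0\}$. Distinct $A$'s give distinct $s_A$'s, so counting $A$'s amounts to counting these starting positions.

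Setting $\alpha:=16\ell_k+\ell_{k+1}$, local consistency pins each $s_A$ down. If some $i'\in[\alpha\dd n-\alpha]$ satisfies $S(i'-\alpha\dd i'+\alpha]=S(s_A-\alpha\dd s_A+\alpha]$, then for every $p\in[s_A-1, s_A+|\val(A)|]$ the $16\ell_k$-neighborhood of $p$ matches that of $p-s_A+i'$, so \cref{lem:recompr1} yields $p\in B_k\iff p-s_A+i'\in B_k$. The entire phrase-boundary pattern of $S_k$ in $[s_A-1, s_A+|\val(A)|]$ is therefore replicated starting at $i'-1$, producing a phrase of $S_k$ of the same length with content $\val(A)$. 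By \cref{fct:cons} this phrase corresponds to the symbol $A$, so leftmost-ness of $j_A$ forces $i'\ge s_A$. Consequently each $s_A$ is the leftmost position of its $2\alpha$-context in $S$, and in particular $|\{s_A\}|\le d_{2\alpha}(S)\le 2\alpha\delta=O(\ell_k\delta)$ deterministically.

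The main obstacle is to improve this deterministic $O(\ell_k\delta)$ bound to $O(\delta)$ in expectation. The plan is to construct a deterministic family $\mathcal{F}$ of $O(\delta)$ intervals of $S$ of total length $O(\ell_k\delta)$ such that every $P_A$ is contained in some $F\in\mathcal{F}$. The construction must exploit $d_\ell(S)\le\delta\ell$ at an appropriate scale, in the spirit of the counting argument behind \cref{lem:z}, selecting $O(\delta)$ anchor regions whose $O(\ell_k)$-neighborhoods already capture every leftmost-phrase start; squeezing the $O(\ell_k\delta)$ candidate positions from the previous paragraph into only $O(\delta)$ intervals of controlled total length is the technical crux.

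Once $\mathcal{F}$ is in hand, \cref{lem:rand} applied to each $F\in\mathcal{F}$ gives $\Exp[|B_k\cap F|]<1+4|F|/\ell_{k+1}$, and summation yields
\[
\Exp\Bigl[\sum_{F\in\mathcal{F}}|B_k\cap F|\Bigr]<|\mathcal{F}|+\frac{4}{\ell_{k+1}}\sum_{F\in\mathcal{F}}|F|=O(\delta)+O(\ell_k\delta/\ell_{k+1})=O(\delta),
\]
since $\ell_{k+1}/\ell_k=\Theta(1)$ by Construction~\ref{constr:Jez}. Every distinct $A$ contributes a distinct boundary $s_A-1$ lying in $(B_k\cap F)\cup\{0\}\cup\{F\text{-endpoints}\}$ for some $F\in\mathcal{F}$, so $\Exp[|\Pres_k\cap\Symb_{k+1}|]\le\Exp[\sum_F|B_k\cap F|]+O(|\mathcal{F}|)=O(\delta)$, as required.
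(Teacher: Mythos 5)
Your proof follows the same high-level strategy as the paper's: charge each symbol $A\in\Pres_k\cap\Symb_{k+1}$ to the left endpoint of its leftmost-induced phrase, use \cref{lem:recompr1} together with \cref{fct:cons} to show those left endpoints are confined to ``leftmost context'' positions, cover those positions by few intervals, and finish with \cref{lem:rand}. Steps one, two, and the final summation are essentially identical in spirit and correct. But the proposal has a genuine gap at exactly the place you flag as ``the technical crux'': you never construct the family $\mathcal{F}$ of $O(\delta)$ intervals. You derive a deterministic bound of $O(\ell_k\delta)$ on the number of candidate starting positions via $d_{2\alpha}(S)\le 2\alpha\delta$, but that bound alone does not give the interval structure, and you explicitly defer the squeezing into $O(\delta)$ intervals as an open step.

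The paper's resolution of this step is a short merging argument worth internalizing. Let $m=2\alpha+\lfloor\ell_{k+1}\rfloor$ be the window length. Each eligible boundary position $i$ contributes a window of length exactly $m$ around it; call $P'$ the union of these windows. Every $j\in P'\cap[m\dd n-m]$ lies inside the leftmost occurrence of some length-$m$ substring of $S$, which forces $S(j-m\dd j+m]$ to be the leftmost occurrence of the corresponding length-$2m$ substring; hence $|P'\cap[m\dd n-m]|\le 2m\delta$, and after adding the $O(m)$ boundary positions, $|P'|=O(m\delta)$. Now the crucial point you were missing: $P'$ is by construction a union of integer intervals of length exactly $m$, so when you merge overlapping intervals into disjoint ones, every merged interval has length at least $m$. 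Thus the number of merged intervals is at most $|P'|/m=O(\delta)$. This directly gives the family $\mathcal{F}$ you postulated, with $|\mathcal{F}|=O(\delta)$ and $\sum_{F\in\mathcal{F}}|F|=O(m\delta)=O(\ell_k\delta)$, after which your concluding computation with \cref{lem:rand} goes through. A small additional caveat: the per-interval $+1$ in \cref{lem:rand} accounts for the boundaries at interval endpoints, so you do not need the ad hoc inclusion of ``$F$-endpoints'' in your final counting; in fact each distinct $A$ with $j_A>1$ has $s_A-1\in B_k$, and $s_A-1$ lies in the same merged interval as $s_A$ up to a shift of the cover by $\alpha$, which the paper handles by defining $P'$ as a shifted union around $P$.
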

\begin{proof}
   Let us fix integers $\alpha \ge 16\ell_k$ and $m = 2\alpha + \floor{\ell_{k+1}}$.
  Moreover, define 
  \[L = \{i\in [0\dd n-m] : S(i\dd i+m] = S(i'\dd i'+m]\text{ for some }i'\in[0\dd i)\}\]
  and 
  \[P = \{i\in [1\dd n] : i-\alpha\notin L\}.\]

  \begin{claim}\label{clm:ub}
    We have $|\Pres_k\cap \Symb_{k+1}| \le 1+|B_k\cap P|$.
  \end{claim}
  \begin{proof}
    Let $S_k[j]$ be the leftmost occurrence in $S_k$ of $A\in \Symb_{k+1}\cap \Pres_k$.
    Moreover, let $p=|\val(S_k[1\dd j))|$ and $q=|\val(S_k[1\dd j])|$
    so that $S(p\dd q]=\val(A)$ is the phrase induced by $S_k$ corresponding to $S_k[j]$.

    We shall prove that $j=1$ or $p\in B_k\cap P$.
    This will complete the proof of the claim because 
    distinct 
    symbols $A$ yield distinct positions $j$ and $p$.

    For a proof by contradiction, suppose that $j\in (1\dd |S_k|)$
    yet $p\notin B_k\cap P$.
    Since $p\in B_k$ holds due to $j>1$, we derive $p\notin P$, which
    implies $p-\alpha \in L$.
    Consequently, there is a position $p'\in [\alpha\dd p)$
    such that $S(p-\alpha \dd p-\alpha+m]=S(p'-\alpha\dd p'-\alpha+m]$.
    In particular, $S(p-\alpha \dd p+\alpha]=S(p'-\alpha\dd p'+\alpha]$,
    so \cref{lem:recompr1} yields $p'\in B_k$.
    Similarly, due to $q-p=|\val(A)|\le \floor{\ell_{k+1}}=m-2\alpha$,
    we have $S(q-\alpha \dd q+\alpha]=S(q'-\alpha\dd q'+\alpha]$
    for $q':=p'+|\val(A)|$, and therefore $q'\in B_k$ holds due to $q\in B_k$.
    \cref{lem:recompr1} further implies $B_k\cap(p'\dd q') = \emptyset = B_k\cap (p\dd q)$.
    Consequently, $S(p'\dd q']$ is a phrase induced by $S_k$,
    and, since $p'<p$, it corresponds to $S_k[j']$ for some $j'<j$.
    By \cref{fct:cons}, we have $S_k[j']=S_k[j]=A$, which contradicts the choice of $S_k[j]$
    as the leftmost occurrence of $A$ in $S_k$.
  \end{proof}

  Consequently, it remains to prove that $\Exp[|B_k\cap P|]  = \Oh(\delta)$.
  For this, we characterize $P$ as follows.

  \begin{claim}\label{clm:cov}
    The set $P$ can be covered by $\Oh(\delta)$ intervals of total length $\Oh(m\delta)$.
  \end{claim}
  \begin{proof}
    Note that $P' = \bigcup_{i\in P} (i-\alpha\dd i+m-\alpha]$ is a superset of $P$. 
    Moreover, each position $j\in P'\cap [m\dd n-m]$ is contained in the leftmost occurrence of a length-$m$
    substring of $S$ and then  $S(j-m\dd j+m]$ is the leftmost occurrence of a length-$2m$ substring of $S$.
    Consequently, $|P'\cap [m\dd n-m]| \le 2m\delta$.
    Since $P'\setminus [m\dd n-m] = (1-\alpha \dd m)\cup (n-m\dd n+m-\alpha]$ is of size $\Oh(m)$,
    we conclude that $|P'|=\Oh(m\delta)$.
    Next, recall that $P'$ is a union of length-$m$ integer intervals.
    Merging overlapping intervals, we get a decomposition into disjoint intervals
    of length at least $m$. The number of intervals does not exceed $\frac1m |P'| = \Oh(\delta)$.
  \end{proof}

  Now, let $\mathcal{I}$ be the family of intervals covering $P$ obtained using \cref{clm:cov}. 
  For each $I\in \mathcal{I}$, \cref{lem:rand} implies $\Exp[|B_k\cap I|]\le 1+\frac{4|I|}{\ell_{k+1}}$.
  By linearity of expectation and the bounds in \cref{clm:cov}, this yields the claimed result:
  \[\Exp[|B_k\cap P|] \le |\mathcal{I}|+\tfrac{4}{\ell_{k+1}}\sum_{I\in \mathcal{I}}|I|=\Oh(\delta + \tfrac{4\delta m }{\ell_{k+1}})=\Oh(\delta).\qedhere\]
\end{proof}

The proof of our main bound $\Exp[|\Pres|]=\Oh(\delta\log \frac{n}{\delta})$ combines \cref{lem:level,lem:rand0}.
\begin{corollary}\label{cor:all}
  For every string $S$ of length $n$ and measure $\delta$, we have
  $\Exp[|\Pres|]=\Oh(\delta \log\frac{n}{\delta})$.
\end{corollary}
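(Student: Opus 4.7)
The plan is to combine the inclusion $\Pres \subseteq \bigcup_{k\ge 0} \Pi_k$, where $\Pi_k := \Pres_k\cap \Symb_{k+1}$, with \cref{lem:level} and \cref{lem:rand0}. The inclusion follows from a short argument: if $A\in\Pres$ is consumed at some step $j+1$, then consumption requires $A\in\Symb_{j+1}$ (every collapsed symbol lies in $\Symb_{j+1}$), so $A\in\Pi_j$; otherwise $A$ persists in $S_k$ for every sufficiently large $k$, and eventually $\ell_{k+1}\ge|\val(A)|$ places $A$ in $\Symb_{k+1}$, again giving $A\in\Pi_k$. Hence
\[|\Pres|\le \Bigl|\bigcup_k \Pi_k\Bigr|\le \sum_{k=0}^\infty \bigl|\Pi_k\setminus\bigcup_{j<k}\Pi_j\bigr|.\]

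Next, I would choose the threshold $K:=\min\{k\in\Zz:\ell_{k+1}\ge n/\delta\}$. Since $\ell_k$ grows geometrically with base $\sqrt{8/7}$, we have $K=\Oh(\log\tfrac{n}{\delta})$. For $k\le K$, I would apply the crude bound $|\Pi_k\setminus\bigcup_{j<k}\Pi_j|\le|\Pi_k|$ and invoke \cref{lem:level} termwise, obtaining $\Exp\bigl[\sum_{k=0}^K|\Pi_k|\bigr]=\Oh(K\delta)=\Oh(\delta\log\tfrac{n}{\delta})$, which already matches the target.

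For $k>K$, the aim is to show $\Exp\bigl[\sum_{k>K}\bigl|\Pi_k\setminus\bigcup_{j<k}\Pi_j\bigr|\bigr]=\Oh(\delta)$. I would classify each such $A$ by whether (a) $A$ is newborn at step $k$, meaning $A\notin\Pres_j$ for all $j<k$, or (b) $A\in\Pres_{k-1}$ with $|\val(A)|\in(\ell_k,\ell_{k+1}]$. The ``consumed then reborn'' scenario is ruled out because any consumption step would place $A$ in $\Symb$ and hence into some $\Pi_j$ with $j<k$. The number of newborns at step $k$ is at most $|S_{k-1}|-|S_k|$, since each arises from a distinct collapsed block of length $\ge 2$; summing over $k>K$ telescopes to $|S_K|-\lim_{k\to\infty}|S_k|\le|S_K|$, whose expectation is $\Oh(\delta)$ by \cref{lem:rand0}. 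Case~(b) symbols correspond, by \cref{fct:cons}, to distinct phrases of $S_{k-1}$ of length in $(\ell_k,\ell_{k+1}]$; since the phrases of $S_{k-1}$ partition $S$, there are at most $n/\ell_k$ such phrases. The interval $(\ell_k,\ell_{k+1}]$ is empty for odd $k$, so the sum runs only over even $k>K$ and collapses geometrically to $\Oh(n/\ell_{K+1})=\Oh(\delta)$.

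The main obstacle will be the case analysis for $\Pi_k\setminus\bigcup_{j<k}\Pi_j$ together with the simultaneous handling of two infinite tails: the telescoping newborn sum relies on $|S_k|\to 1$ almost surely (\cref{cor:fin}), whereas the geometric case~(b) sum exploits the strict increase $\ell_{k+1}>\ell_k$ occurring only at even $k$. Summing the two regimes then yields $\Exp[|\Pres|]=\Oh(\delta\log\tfrac{n}{\delta})$.
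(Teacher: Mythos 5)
Your argument is correct, and it reaches the bound by a route that genuinely differs from the paper's. The paper bounds $|\Pres|$ by $1+\sum_k|\Pres_k\sm\Pres_{k+1}|$ (counting each symbol at the level where it \emph{last} appears), and for each $k$ takes the better of two bounds on $\Exp[|\Pres_k\sm\Pres_{k+1}|]$: $\Oh(\delta)$ via \cref{lem:level}, since $\Pres_k\sm\Pres_{k+1}\sub\Pres_k\cap\Symb_{k+1}$; and $\Oh\bigl((\tfrac78)^{k/2}n\bigr)$ via \cref{lem:rand0}, the latter obtained from the somewhat delicate manipulation $\Exp[|\Pres_k\sm\Pres_{k+1}|]\le 2\Exp[|S_k|]-2$, needed because the naive bound $\Exp[|S_k|]<1+\tfrac{4n}{\ell_{k+1}}$ does not decay to $0$. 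You instead count each symbol at the \emph{first} level $k$ at which it enters $\Pi_k=\Pres_k\cap\Symb_{k+1}$ (which coincides with the set appearing in \cref{lem:level}); for $k\le K$ you recover the same $\Oh(\delta K)=\Oh(\delta\log\tfrac{n}{\delta})$ contribution, and for $k>K$ your case analysis replaces the paper's Markov-inequality subtraction by two cleaner devices: the newborn count telescopes to $|S_K|$, whose expectation is $\Oh(\delta)$ by \cref{lem:rand0} with $\ell_{K+1}\ge n/\delta$, while surviving symbols with expansion length in $(\ell_k,\ell_{k+1}]$ are bounded \emph{deterministically} by $n/\ell_k$ via phrase counting in $S_{k-1}$ (and this part needs no randomness at all, only the geometric growth of $\ell_k$). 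Both arguments invoke the same two lemmas in the same regimes of $k$, so neither is a shortcut; yours is more explicit about \emph{which} symbols contribute in the tail and trades the $2\Exp[|S_k|]-2$ trick for a telescoping-plus-deterministic split. One small remark: in your case~(b), \cref{fct:cons} is not strictly needed---distinct symbols in $\Pres_{k-1}$ already occupy distinct positions of $S_{k-1}$ and hence induce distinct phrases---though citing it does no harm.
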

\begin{proof}
  Note that $|\Pres|\le 1 + \sum_{k=0}^\infty |\Pres_{k}\sm \Pres_{k+1}|$.
  We combine two upper bounds on $\Exp[|\Pres_{k}\sm \Pres_{k+1}|]$, following from \cref{lem:level,lem:rand0}, respectively. 

  First, we observe that \cref{constr:Jez} guarantees $\Pres_{k}\sm \Pres_{k+1}\sub \Pres_k \cap \Symb_{k+1}$ and thus $\Exp[|\Pres_{k}\sm \Pres_{k+1}|]\le \Exp[|\Pres_{k}\cap \Symb_{k+1}|]=\Oh(\delta)$ holds due to \cref{lem:level}.
  Secondly, we note that $|\Pres_k\sm \Pres_{k+1}| = 0$ if $|S_k|=1$
  and $|\Pres_k\sm \Pres_{k+1}| \le |S_k|$ otherwise.
  Consequently, Markov inequality and \cref{lem:rand0} yield
  \begin{multline*}
    \Exp[|\Pres_{k}\sm \Pres_{k+1}|] \le \Exp[|S_k|] - \mathbb{P}[|S_k|=1]
    = \Exp[|S_k|]-1 + \mathbb{P}[|S_k|\ge 2] 
    = \Exp[|S_k|]-1 + \mathbb{P}[|S_k|-1 \ge 1]\\
    \le \Exp[|S_k|] -1 +  \Exp[|S_k|-1]
     = 2\Exp[|S_k|]-2\le \tfrac{8n}{\ell_{k+1}}.
  \end{multline*}
  Thus, $ \Exp[|\Pres_{k}\sm \Pres_{k+1}|] = \Oh((\frac78)^{k/2}n)$.

  We apply the first or the second upper bound on $\Exp[|\Pres_{k}\sm \Pres_{k+1}|]$ depending on whether $k \ge 2\log_{8/7}\frac{n}{\delta}$.
  This yields\[
    \sum_{k = 0}^\infty \Exp[|\Pres_{k}\sm \Pres_{k+1}|] = \log_{8/7}\tfrac{n}{\delta}\cdot \Oh(\delta) + \sum_{i=0}^\infty \Oh\left((\tfrac78)^{i/2}\delta\right) = \Oh(\delta \log\tfrac{n}{\delta}).\]
  Consequently, $\Exp[|\Pres|] = 1 + \Oh(\delta \log\tfrac{n}{\delta}) = \Oh(\delta \log\tfrac{n}{\delta})$ holds as claimed.
\end{proof}

Finally, we note that \cref{cor:fin,cor:all} allow bounding the size of the smallest run-length grammar generating~$S$.

\begin{theorem}\label{thm:rlslp}
Every string $S[1\dd n]$ satisfies $g_{rl}=\Oh(\delta \log\frac{n}{\delta})$.
\end{theorem}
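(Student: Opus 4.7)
The plan is to combine the two main probabilistic results about restricted recompression established above: \cref{cor:fin}, which guarantees that almost surely the sequence $(S_k)_{k=0}^\infty$ eventually collapses to a single symbol, and \cref{cor:all}, which bounds $\Exp[|\Pres|] = \Oh(\delta\log\frac{n}{\delta})$. The deterministic claim $g_{rl}=\Oh(\delta\log\frac{n}{\delta})$ will then follow by a standard probabilistic-method existence argument.

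First, I would observe that any outcome of the random partitions $\Symb_k = \Left_k \cup \Right_k$ in which some $h\in\Zz$ satisfies $|S_h|=1$ gives rise to a bona fide run-length context-free grammar generating $S$: designate $S_h[1]\in\Pres$ as the start symbol and include, for every non-terminal $A\in\Pres\sm\Sigma$, the production $A\to A_1 A_2$ if $A = (A_1,A_2)$ and $A\to A_1^m$ if $A = (A_1,m)$. As noted just before \cref{cor:all}, the set $\Pres$ is closed under the right-hand sides of productions in $\Symb$, so this is a well-defined RLSLP; $\val(S_h[1])=\val(S_h)=S$ by the invariant $\val(S_k)=S$, and the grammar's size is $\Theta(|\Pres|)$.

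Next, I would apply a simple existence argument. Let $E$ denote the termination event $\{\exists\, h\in\Zz : |S_h|=1\}$. By \cref{cor:fin}, $\Pr[E] = 1$, so $\Exp[|\Pres| \mid E] = \Exp[|\Pres|] = \Oh(\delta\log\frac{n}{\delta})$ by \cref{cor:all}. Consequently there is at least one outcome in $E$ for which $|\Pres|=\Oh(\delta\log\frac{n}{\delta})$, and the RLSLP extracted from this outcome witnesses $g_{rl} = \Oh(\delta\log\frac{n}{\delta})$.

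The main obstacle has already been surmounted in the proofs of \cref{lem:level,cor:all}; the present theorem is essentially a packaging step. The only subtle point is ensuring that the expected-size bound is compatible with the termination requirement—this is exactly why we need $\Pr[E]=1$ rather than a mere positive probability, so that the conditional expectation $\Exp[|\Pres|\mid E]$ coincides with $\Exp[|\Pres|]$ and we may pass from an expected upper bound to the existence of a single good realization.
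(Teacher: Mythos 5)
Your proposal is correct and takes essentially the same approach as the paper: both fix the random choices in Construction~\ref{constr:Jez} using the probabilistic method (relying on \cref{cor:fin} for almost-sure termination and \cref{cor:all} for the expected bound on $|\Pres|$) and then read off the RLSLP with symbol set $\Pres$. You spell out the conditioning argument a bit more explicitly than the paper does, but the underlying idea and structure are identical.
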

\begin{proof}
  We apply \cref{constr:Jez} on top of the given string $S$.
  By \cref{cor:fin,cor:all}, the random choices within \cref{constr:Jez} can be fixed so that $|\Pres| = \Oh(\delta \log \frac{n}{\delta})$ and $|S_k|=1$ holds for sufficiently large $k$.
  We build a run-length grammar with symbols $\Pres$.
  Each symbol $A\in \Sigma$ is a terminal symbol, each symbol $A=(A_1,A_2) \in \Symb\times \Symb$
  is associated with a production $A \to A_1A_2$,
  and each symbol $A=(A_1,m)\in \Symb\times \mathbb{Z}_{\ge 2}$ is associated with a production $A \to A_1^m$.
  It easy to see that the auxiliary symbols $A_1,A_2$ belong to $\Pres$
  and that the expansion of each symbol $A$ (within the grammar) is $\exp(A)$.
  In particular, if we set the only symbol of $S_k$ for sufficiently large $k$ as the starting symbol,
  then the grammar generates~$S$.
\end{proof}

\subsection{Efficient construction of a small run-length grammar}\label{sec:alg}
In this section, we convert the proof of \cref{thm:rlslp} to a fast algorithm generating
a run-length grammar of size $\Oh(\delta\log \frac{n}{\delta})$.

\begin{proposition}\label{prp:rlslp}
  There exists a randomized algorithm that, given a string $S$ of length $n$ and measure $\delta$,
  in $\Oh(n)$ expected time constructs a run-length grammar of expected size $\Oh(\delta \log \frac{n}{\delta})$ generating $S$.
\end{proposition}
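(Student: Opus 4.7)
The plan is to directly simulate \cref{constr:Jez} on top of $S$, representing each string $S_k$ as an array of integer symbol identifiers and storing, for every identifier, both the production assigned to it and the integer $|\val(A)|$. The iteration stops as soon as $|S_k|=1$, which almost surely occurs after finitely many rounds by \cref{cor:fin}; the final $S_k[1]$ is then declared the start symbol and the grammar consists of all productions created during the process. Correctness is immediate: $\Pres$ is closed under productions, and $\val(S_k[1])=S$ by the invariant $\val(S_k)=S$ observed right after \cref{constr:Jez}. The expected grammar size is $O(\delta\log\tfrac{n}{\delta})$ by \cref{cor:all}.

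To execute round $k$ in time proportional to $|S_{k-1}|$, we first test in one left-to-right pass over $S_{k-1}$ whether each symbol lies in $\Symb_k$, which requires only the comparison $|\val(A)|\le \ell_k$. For odd $k$, we sweep $S_{k-1}$ again to delimit each maximal block of identical symbols in $\Symb_k$ and, for every block $A^m$ with $m\ge 2$, consult a dynamic hash table keyed by $(A,m)$ to either retrieve the identifier already allocated in this round or to create a fresh one together with its production $A\to A^m$. For even $k$ we first assign an independent uniform random bit to each symbol of $\Pres_{k-1}\cap \Symb_k$ (recording the choice next to the symbol), then scan $S_{k-1}$ once more and collapse each adjacent pair $(A,B)$ with $A\in \Left_k$ and $B\in \Right_k$ via another hash table keyed by $(A,B)$. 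Using standard universal or dynamic perfect hashing, each table operation takes $O(1)$ expected time, so the overall cost of round $k$ is $O(|S_{k-1}|)$ in expectation.

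The heart of the argument is showing that $\sum_{k\ge 0}\Exp[|S_k|]=O(n)$. Fix $k_0=\lceil 2\log_{8/7}(n+1)\rceil$. For $k\le k_0$, \cref{lem:rand0} gives $\Exp[|S_k|]<1+4n/\ell_{k+1}$, and since $(\ell_k)$ grows geometrically, summing over this range contributes $O(n)$. For $k>k_0$ we have $\ell_k>n$, so every symbol automatically belongs to $\Symb_k$; combined with \cref{cor:distinct}, this implies that during every even round past $k_0$ the set $J$ appearing in the proof of \cref{clm:rand} is empty, hence $\Exp[|S_k|-1\mid S_{k-1}]\le \tfrac34(|S_{k-1}|-1)$. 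Odd rounds past $k_0$ can only shrink the string further, so $\Exp[|S_k|-1]$ decays geometrically past $k_0$ and the tail contributes only $O(1)$. Putting the two ranges together yields the desired $O(n)$ bound on the expected total work, which dominates the cost of emitting the grammar.

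The main obstacle I expect is precisely this tail analysis past $k_0$: the direct bound from \cref{lem:rand0} alone is too weak to ensure termination in expected $O(n)$ time, so one has to exploit the fact that, once $\ell_k$ exceeds $n$, no boundary is forced by the restriction to $\Symb_k$ and, thanks to \cref{cor:distinct}, every interior pair consists of two distinct symbols and hence independently becomes a compressible block with probability $\tfrac14$, producing the required geometric decay and finite expected termination time.
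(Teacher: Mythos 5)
Your algorithmic plan is essentially the same as the paper's: simulate \cref{constr:Jez} level by level, maintain each $S_k$ as an array of integer identifiers with pointers to the productions and stored $|\val(A)|$, and bound each round by $O(|S_{k-1}|)$. Your choice of dynamic hashing for identifier assignment (where the paper radix-sorts pairs over $[1\dd|S_{k-1}|]^2$) is a valid alternative with the same $O(|S_{k-1}|)$ expected cost. The expected grammar size claim via \cref{cor:all} is exactly the paper's argument.

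There is, however, a gap in the stated time analysis. The claim ``$\sum_{k\ge 0}\Exp[|S_k|]=O(n)$'' is false: once $|S_h|=1$, every later $|S_k|$ equals $1$, so the series diverges. What you actually need to bound is $\Exp\bigl[\sum_{k=0}^h|S_k|\bigr]$, where $h$ is the stopping level. Writing $\sum_{k=0}^h|S_k| = 1+\sum_{k<h}|S_k| \le 1 + 2\sum_{k<h}(|S_k|-1) \le 1 + 2\sum_{k\ge 0}(|S_k|-1)$ (using $|S_k|\ge 2$ for $k<h$), this reduces to bounding $\sum_{k\ge 0}\Exp[|S_k|-1]$. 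That quantity is directly bounded by \cref{lem:rand0}: $\Exp[|S_k|]-1 < 4n/\ell_{k+1}$, and $\sum_k 4n/\ell_{k+1}=O(n)$ since $\ell_k$ grows geometrically. So the ``main obstacle'' you flag at the end is actually a misapprehension: \cref{lem:rand0} alone suffices once the running time is set up as above, and the whole tail analysis past $k_0$ (the $\ell_k>n \Rightarrow J$ empty $\Rightarrow$ geometric decay argument), while correct, is unnecessary. This is also what the paper does, modulo a terse ``$O(\sum_{k=0}^h n/\ell_{k+1})=O(n)$ by \cref{lem:rand}'' that implicitly uses the observation that the number of rounds with $|S_k|\ge 2$ is at most $\sum_k(|S_k|-1)$.
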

\begin{proof}
The algorithm simulates \cref{constr:Jez} constructing $\Pres$, which can be interpreted as a grammar generating $S$ (see the proof of \cref{thm:rlslp}).
Each symbol $A\in \Sigma$ is stored explicitly, each symbol $A=(A_1,A_2)\in \Symb\times \Symb$
keeps pointers to $A_1$ and $A_2$, and each symbol  $A=(A_1,m)\in \Symb\times \mathbb{Z}_{\ge 2}$
keeps $m$ and a pointer to $A_1$. Additionally, each symbol $A$ is augmented with $|\val(A)|$
and an \emph{identifier} $\id_k(A)$ for every $k$ such that $A\in \Pres_k$,
where $\id_k : \Pres_k \to [1\dd |\Pres_k|]$ is a bijection.
The inverse mappings $\id_k^{-1}$ are implemented as arrays.

The string $S_0 = S$ is given as input. In order to construct $\Pres_0$ and $\id_0$,
we sort the characters of $S$ and assign them consecutive positive integer identifiers.
This step takes $\Oh(n)$ time due to the assumption $\sigma = n^{\Oh(1)}$. 

In order to construct $S_k$, $\Pres_k$, and $\id_k$, we process $S_{k-1}$ depending on the parity of $k$.
If $k$ is odd, we scan $S_{k-1}$ from left to right outputting subsequent symbols of $S_{k}$.
Initially, each symbol $A\in \Pres_k$ is represented as $(\id_{k-1}(A_1),m)$ (if $A=(A_1,m)\in \Pres_{k}\sm \Pres_{k-1}$) or as $\id_{k-1}(A)$ (otherwise).
Suppose that $S_{k-1}[j\dd ]$ is yet to be processed.
If $|\val(S_{k-1}[j])| > \ell_{k}$ or $S_{k-1}[j]\ne S_{k-1}[j+1]$,
we output $\id_{k-1}(S_{k-1}[j])$ as the next symbol of $S_{k}$ and continue processing $S_{k-1}[j+1\dd ]$.
Otherwise, we determine the maximum integer $m\ge 2$ such that $S_{k-1}[j']=S_{k-1}[j]$ for $j'\in [j\dd j+m)$,
output $(\id_{k-1}(S_{k-1}[j]),m)$ as the next symbol of~$S_{k}$, and continue processing $S_{k-1}[j+m\dd ]$.
(By \cref{fct:cons}, $(S_{k-1}[j],m) \notin \Pres_{k-1}$ in this case.)
Note that the symbols in $\Pres_k$ are initially represented as elements of $[1\dd |S_{k-1}|]\cup [1\dd |S_{k-1}|]^2$. Sorting these values allows 
constructing symbols in $\Pres_k\sm \Pres_{k-1}$ and the identifier function $\id_k$ in $\Oh(|S_{k-1}|)$ time.

If $k$ is even, we first randomly partition $\Symb_k$ into $\Left_k$ and $\Right_k$.
Technically, this step consists in iterating over symbols $A\in \Pres_{k-1}$
and appropriately marking $A$ if $|\val(A)|\le \ell_k$.
Next, we scan $S_{k-1}$ from left to right outputting subsequent symbols of $S_{k}$.
Initially, each symbol $A$ in $S_{k}$ is represented as $(\id_{k-1}(A_1),\id_{k-1}(A_2))$ (if $A=(A_1,A_2)\notin \Pres_{k-1}$)
or as $\id_{k-1}(A)$ (otherwise).
Suppose that $S_{k-1}[j\dd ]$ is yet to be processed.
If $S_{k-1}[j]\notin \Left_k$ or $S_{k-1}[j+1]\notin \Right_k$,
we output $\id_{k-1}(S_{k-1}[j])$ as the next symbol of $S_{k}$ and continue processing $S_{k-1}[j+1\dd ]$.
Otherwise, we output $(\id_{k-1}(S_{k-1}[j]),\id_{k-1}(S_{k-1}[j+1]))$ as the next symbol of $S_{k}$ and continue processing $S_{k-1}[j+2\dd ]$.
(By \cref{fct:cons}, $(S_{k-1}[j],S_{k-1}[j+1])\notin \Pres_{k-1}$ in this case.)
Note that $|S_{k}|\le |S_{k-1}|$ and that the characters of $S_{k}$ are initially represented as elements of $[1\dd |S_{k-1}|]\cup [1\dd |S_{k-1}|]^2$. 
Sorting these values allows 
constructing symbols in $\Pres_k\sm \Pres_{k-1}$ and the identifier function $\id_k$ in $\Oh(|S_{k-1}|)$ time.

The algorithm terminates when it encounters a string $S_h$ with $|S_h|=1$, marking the only symbol of $S_h$ as the starting symbol of the grammar.
The overall running time is $\Oh(\sum_{k=0}^h |S_k|)$, which is $\Oh(\sum_{k=0}^h \frac{n}{\ell_{k+1}})=\Oh(n)$ in expectation by \cref{lem:rand}. The expected grammar size is $\Oh(\delta \log \frac{n}{\delta})$ by \cref{cor:all}.
\end{proof}

Finally, we adapt the algorithm to output a small grammar in the worst case.
\begin{theorem}\label{thm:grl}
  There exists a randomized algorithm that, given a string $S$ of length $n$ and measure $\delta$,
  constructs a run-length grammar of size $\Oh(\delta \log \frac{n}{\delta})$ generating $S$.
  The running time is $\Oh(n)$ in expectation and $\Oh(n\log n)$ w.h.p.
\end{theorem}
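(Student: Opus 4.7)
The plan is to convert the Las Vegas-style algorithm of \cref{prp:rlslp} into one that always outputs a grammar of size $O(\delta \log \frac{n}{\delta})$ by wrapping it in a standard restart-on-failure loop, using Markov's inequality to control both the output size and the per-attempt running time.

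First I would compute $\delta$ explicitly in $O(n)$ time via \cref{lem:compute delta}. Fix large enough constants $c_T$ and $c_G$ and set thresholds $T_0 = c_T\, n$ and $G_0 = c_G\,\delta\log\frac{n}{\delta}$. An \emph{attempt} consists of running the algorithm of \cref{prp:rlslp} with fresh random bits, interleaved with two termination checks performed after each elementary step: abort if the elapsed time exceeds $T_0$, and abort if the number of symbols added so far to $\Pres$ exceeds $G_0$. The second check is possible online because the algorithm of \cref{prp:rlslp} constructs $\Pres$ incrementally, so it suffices to maintain a counter. An attempt \emph{succeeds} if it terminates normally within both budgets; otherwise, I would discard its output, draw fresh randomness, and repeat.

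By \cref{prp:rlslp}, a single unrestricted run has expected time $O(n)$ and produces a grammar of expected size $O(\delta\log\frac{n}{\delta})$. Markov's inequality then gives $\Pr[\text{time}>T_0]\le \tfrac14$ and $\Pr[\text{size}>G_0]\le \tfrac14$ once $c_T$ and $c_G$ are chosen large enough, so a union bound shows that each attempt succeeds with probability at least $\tfrac12$. Since every attempt takes $O(n)$ time (being cut off at $T_0$), the total running time is $O(n\cdot N)$, where $N$ is the geometric number of attempts until the first success. Then $\Exp[N]\le 2$ yields expected total time $O(n)$, while $\Pr[N > c\log n]\le 2^{-c\log n} = n^{-c}$ yields $O(n\log n)$ time with high probability. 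Correctness is inherited from \cref{prp:rlslp} (a successful attempt outputs a grammar generating $S$), and the returned grammar has size at most $G_0 = O(\delta\log\frac{n}{\delta})$ by construction.

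I do not anticipate a serious obstacle here: the only non-routine point is the ability to monitor $|\Pres|$ online during an attempt, which is immediate from the incremental structure of \cref{prp:rlslp}. Everything else is the standard Las Vegas-to-bounded-worst-case conversion via Markov plus geometric tail bounds.
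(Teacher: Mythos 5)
Your proposal matches the paper's proof: compute $\delta$ via \cref{lem:compute delta}, repeatedly run the algorithm of \cref{prp:rlslp} with a time budget $O(n)$ and a grammar-size budget $O(\delta\log\frac{n}{\delta})$ fixed by Markov's inequality, and combine the constant per-attempt success probability with a geometric tail bound to get $O(n)$ expected and $O(n\log n)$ w.h.p.\ total time. The only cosmetic difference is that you check the size budget online during each attempt rather than after it finishes; this does not change the analysis.
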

\begin{proof}
  We compute $\delta$ using \cref{lem:compute delta}
  and determine an upper bound on the expected size of the grammar produced using \cref{prp:rlslp},
  as well as an upper bound on the expected running time of the algorithm of \cref{prp:rlslp}.

  Then, we repeatedly call the algorithm of \cref{prp:rlslp} with a time limit of 4 times the expected running time.
  If the call does not finish within the limit, we interrupt the execution and proceed to the next call.
  Similarly, we proceed to the next call if the size of the produced grammar exceeds 4 times the expected size.
  Otherwise, the grammar is of size $\Oh(\delta \log \frac{n}{\delta})$, so return it to the output.
  
  By Markov's inequality, each call is successful with probability at least $\frac12$.
  Consequently, the number of calls is constant in expectation and $\Oh(\log n)$ with high probability.
  Since each call has a time limit of $\Oh(n)$, this yields the claimed upper bounds on the overall running time.
\end{proof}

\section{Accessing and indexing in $\delta$-bounded space} \label{sec:bt}

Christiansen et al.~\cite[Appendix~A]{CEKNP19} showed how, given a run-length context-free grammar of size $g_{rl}$ generating a string~$S$, we can build a data structure of size $O(g_{rl})$ that supports access and indexed searches on $S$. A direct corollary of their results and Theorem~\ref{thm:grl} is that we can not only represent a string within $O(\delta\log\frac{n}{\delta})$ space, but also support fast access and indexed searches within that space. We can also support the computation of Karp--Rabin signatures \cite{KR87} on arbitrary substrings of $S$.


\begin{corollary}\label{cor:grl}
Given a string $S[1\dd n]$ with measure $\delta$, there exists a data structure of size $O(\delta\log\frac{n}{\delta})$ that can be built in $O(n\log n)$ expected time and (1) can retrieve any substring $S[i\dd i+\ell]$ in time $O(\ell+\log n)$, (2) can compute the Karp--Rabin fingerprint of any substring of $S$ in time $O(\log n)$, and (3) can report the $occ$ occurrences of any pattern $P[1\dd m]$ in $S$ in time $O(m\log n + occ \log^\epsilon n)$, for any constant $\epsilon>0$ fixed at construction time.
\end{corollary}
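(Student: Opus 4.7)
The plan is to derive the corollary by directly combining \cref{thm:grl} with a generic construction of Christiansen et al.~\cite[Appendix~A]{CEKNP19}. That work already establishes that, given any run-length context-free grammar of size $g_{rl}$ generating a string $S[1\dd n]$, one can build a data structure of size $O(g_{rl})$ that (i) extracts any substring $S[i\dd i+\ell]$ in time $O(\ell+\log n)$, (ii) computes the Karp--Rabin fingerprint of any substring in time $O(\log n)$, and (iii) reports the $occ$ occurrences of any pattern $P[1\dd m]$ in time $O(m\log n + occ \log^\epsilon n)$. Previously, this result was only as strong as the best known upper bound on $g_{rl}$; \cref{thm:grl} now supplies, for every input, a run-length grammar of size $O(\delta\log\frac{n}{\delta})$, which drives every quantity in the corollary.

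Concretely, I would first invoke \cref{thm:grl} on $S$ to obtain a run-length grammar $G$ of size $g_{rl}=O(\delta\log\frac{n}{\delta})$ in expected time $O(n)$ and time $O(n\log n)$ w.h.p.; the measure $\delta$ itself is obtained en route via \cref{lem:compute delta}. Then I would feed $G$ into the CEKNP construction to produce the three structures responsible for substring access, substring fingerprinting, and indexed pattern matching. Substituting the bound $g_{rl}=O(\delta\log\frac{n}{\delta})$ directly into their space and query-time guarantees yields exactly the three items of the corollary.

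For the construction time, note that the CEKNP preprocessing (balancing the grammar, attaching secondary access tables, precomputing fingerprints of the expansion of each nonterminal, and installing the sampled locate machinery backed by the $z=O(\delta\log\frac{n}{\delta})$ Lempel--Ziv phrases) runs in time near-linear in $n$ with a small polylogarithmic overhead; combined with the $O(n)$ expected cost of \cref{thm:grl}, this fits inside the claimed $O(n\log n)$ expected budget.

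The only real obstacle I anticipate is a citational one: one must confirm that all three functionalities are offered simultaneously by the single data structure of~\cite[Appendix~A]{CEKNP19} at the stated space, and that its construction time is at most $O(n\log n)$ in expectation when fed a grammar produced by \cref{thm:grl}. Since that appendix explicitly builds a single structure over an arbitrary input run-length grammar and analyses all three capabilities, no new combinatorial argument is required; the corollary is a direct specialization obtained by plugging \cref{thm:grl} into their black box.
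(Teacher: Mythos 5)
Your proposal is correct and follows essentially the same route as the paper: invoke \cref{thm:grl} to obtain a run-length grammar of size $O(\delta\log\frac{n}{\delta})$, then plug it into the grammar-agnostic structures of Christiansen et al.\ (Theorems A.1, A.3, and A.4 of~\cite{CEKNP19}) to inherit the access, fingerprinting, and locate functionality, with the $O(n\log n)$ expected preprocessing of those structures dominating the $O(n)$ expected grammar-construction time. The only cosmetic difference is that the paper cites three separate theorems from that appendix rather than a single unified structure, but this does not affect the argument.
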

\begin{proof}
Points (1), (2), and (3) follow from \cref{thm:grl} combined with Theorems A.1, A.3, and A.4 of Christiansen et al.~\cite{CEKNP19}, respectively. Those structures are built in $O(n\log n)$ expected time, which dominates the $O(n)$ expected time needed to build the run-length grammar.
\end{proof}

On the other hand, no known $O(g_{rl})$-space index can efficiently \emph{count} the number of times $P[1\dd m]$ occurs in $S$. Christiansen et al.~\cite[Appendix~A]{CEKNP19} instead show an index of size $O(g)$ that can count in time $O(m\log^{2+\epsilon} n)$ for any fixed constant $\epsilon>0$. We now show that the same can be obtained within space $O(\delta\log\frac{n}{\delta})$. 
Though this space is always $\Omega(g_{rl})$ (Theorem~\ref{thm:rlslp}), it can be $o(g)$ (\cref{thm:gvsdelta}).

Later, we show how the results of Corollary~\ref{cor:grl} can be improved in some cases by using block trees \cite{BGGKOPT15} instead of grammars.
The block tree is a data structure designed to represent repetitive strings $S[1\dd n]$ in $O(z\log\frac{n}{z})$ space while accessing individual symbols of $S$ and computing fingerprints in time $O(\log\frac{n}{z})$, that is, faster than in Corollary~\ref{cor:grl} when $z$ is not too small. We show that the block tree is easily tuned to use the worst-case-optimal $O(\delta\log\frac{n}{\delta})$ space while retaining its access time.

\subsection{Counting}

As explained, it is possible to count how many times $P[1\dd m]$ occurs in $S$ in space proportional to any context-free grammar generating $S$. The idea, however, has not been extended to handle run-length rules of the form $A \rightarrow B^t$. Christiansen et al.~\cite[Section~7]{CEKNP19} accomplished this only for their particular run-length context-free grammar, of size $O(\gamma\log\frac{n}{\gamma})$. We now show that their result can be carried over to our run-length grammar of Section~\ref{sec:ub-gram}.

We start proving a technical point about our grammar, which we will need to establish the result.

\begin{definition}[\cite{CR02}]
A \emph{period} of a string $S[1\dd n]$ is a positive integer $p$ such that $S[p+1\dd n] = S[1\dd n-p]$. We denote the smallest period of $S$ with $\per(S)$.
\end{definition}

\begin{lemma}[{cf.~\cite[Lemma~6.17]{KK19}}]\label{lem:runlen}
For every rule of the form $A \rightarrow B^t$ in our run-length grammar, $\per(\val(A))=|\val(B)|$.
\end{lemma}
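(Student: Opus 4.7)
The rule $A \to B^t$ in our run-length grammar arises at some odd level $k$ when $\rle_{\Symb_k}$ collapses a maximal block $B^t$ (with $t \ge 2$ and $B \in \Symb_k$) in $S_{k-1}$, so $\val(A) = \val(B)^t$ and $|\val(B)|$ is automatically a period of $\val(A)$. Since $t \ge 2$, the period $|\val(B)|$ is minimal if and only if $\val(B)$ is primitive. My plan is to establish primitivity of $\val(B)$ through a case analysis based on how $B$ was introduced into $\Pres$.

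The first step rules out the case where $B$ is itself RLE-produced. Suppose for contradiction that $B = (B', m')$ with $m' \ge 2$, introduced at some odd level $k' < k$. I claim by induction on $j \ge k'$ that $BB$ cannot appear in $S_j$, contradicting the assumption $B^t$ with $t \ge 2$ in $S_{k-1}$. For $j = k'$: two adjacent $B$'s in $S_{k'}$ would arise from two adjacent maximal $B'$-blocks of length $m'$ in $S_{k'-1}$, violating maximality. For $j > k'$: since $B$ has a unique creation as an RLE symbol, pair compression cannot produce $B$ freshly, but RLE at a later odd level may create $B$ again from another maximal $B'$-block of length $m'$. Consequently, $BB$ in $S_j$ reduces to one of: both $B$'s propagated from $S_{j-1}$ (handled by induction), two adjacent freshly created $B$'s (impossible by maximality of $B'$-blocks), or one propagated $B$ adjacent to $B'^{m'}$ in $S_{j-1}$. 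In the last case, the length-$1$ fragment $B$ and the length-$m'$ fragment $B'^{m'}$ of $S_{j-1}$ both expand to $\val(B')^{m'}$, contradicting \cref{fct:cons}.

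In the second step, I handle the remaining cases. If $B \in \Sigma$, then $\val(B)$ is trivially primitive. Otherwise, $B = (B_1, B_2)$ is pair-created at some even level $k''$ with $B_1 \ne B_2$; assume for contradiction that $\val(B) = C^s$ with $s \ge 2$ and $C$ primitive. An argument analogous to Step~1 propagates $BB$ in $S_{k-1}$ back to $BB$ in $S_{k''}$, arising from a substring $B_1 B_2 B_1 B_2$ in $S_{k''-1}$. If $|\val(B_1)|$ is a multiple of $|C|$, then both $\val(B_1)$ and $\val(B_2)$ are powers of $C$, so $\val(B_2 B_1) = C^s = \val(B_1 B_2)$, and \cref{fct:cons} forces $B_1 = B_2$, a contradiction. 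Otherwise, one uses \cref{lem:recompr1} to propagate the $C$-periodicity of $S$ across a sufficiently large context and derives a contradiction between the symbol structure of $S_{k''-1}$ and the shifts implied by $|C|$.

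The main obstacle is the last subcase of Step~2, where $|\val(B_1)|$ is not a multiple of $|C|$: here \cref{fct:cons} is not directly applicable because the $C$-period does not align with the phrase boundaries of $B_{k''-1}$. The argument requires carefully applying \cref{lem:recompr1} to transfer the $C$-periodicity of $S$ to the symbolic level of $S_{k''-1}$, combined with Fine--Wilf-type combinatorics on the resulting structure; alternatively one may induct on the creation levels of $B_1$ and $B_2$ and exploit that, by the inductive hypothesis, their expansions are themselves primitive (unless they are RLE symbols, a case treated by Step~1).
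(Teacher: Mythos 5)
Your proposal has a genuine gap exactly where you flag it: the subcase of Step~2 in which $|\val(B_1)|$ is not a multiple of $|C|$. Neither of your two suggested remedies clearly closes it. Applying \cref{lem:recompr1} with shift $|C|$ (to argue that phrase boundaries of $S_{k''-1}$ inside the $C^{st}$-periodic fragment recur with period $|C|$) needs a context radius $\alpha \ge 16\ell_{k''-1}$; but since $B_1,B_2\in\Symb_{k''}$ we have $|\val(B)|\le 2\ell_{k''}=2\ell_{k''-1}$, so the periodic fragment has length $t|\val(B)|\le 2t\ell_{k''-1}$, and for small $t$ (even $t=2$) there is no room to place $2\alpha+|C|$ inside it — the matching contexts you want to compare need not exist. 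The inductive alternative is also not well-posed: primitivity of $\val(D)$ is \emph{not} a property enjoyed by arbitrary symbols $D$ (an RLE component $D=(D',m')$ has $\val(D)=\val(D')^{m'}$, a proper power by construction), so an induction on the creation levels of $B_1,B_2$ has no hypothesis to hook onto. Your Step~1 and the aligned subcase of Step~2 are sound, but trying to settle everything at the single symbolic level $k''-1$ is the wrong frame for the unaligned case.

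The paper's proof sidesteps the case analysis on how $B$ was produced altogether. It tracks, for \emph{every} level $k\le\ell$, the fragment $x_k$ of $S_k$ expanding to $\val(A)=\val(B)^t$; writes $x_k=Y_k^t$ with $Y_0=\val(B)=Z_0^s$; fixes the largest $k$ with $Y_k=Z_k^s$; and then uses the purely local block-boundary rule of \cref{it:rle,it:pair} (a boundary's presence depends only on the two adjacent symbols) to conclude that the $st$ copies of $Z_k$ in $x_k$ are separated by block boundaries and collapse consistently into $st$ copies of some $Z_{k+1}$, giving $Y_{k+1}=Z_{k+1}^s$ and contradicting the choice of $k$, since $Y_{\ell-1}=B$ is a single symbol. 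This level-by-level propagation is precisely the ingredient your argument lacks: it lifts the $s$-fold periodicity from the text to the symbol level even when $|C|$ aligns with no phrase boundary of $S_{k''-1}$. As a bonus, your whole Step~1 (that $B$ cannot itself be an RLE symbol) falls out as an immediate corollary of the lemma rather than needing its own induction.
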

\begin{proof}
Observe that $|\val(B)|$ is a period of $\val(A)$ because $\val(A)=\val(B)^t$.
Thus, by the Periodicity Lemma~\cite{fine1965uniqueness}, $\per(\val(A))=\per(\val(B))=\frac1s|\val(B)|$ holds for some integer $s \ge 1$. For a proof by contradiction, suppose that $s>1$.

Let $\ell$ be the minimum level such that $A$ occurs in $S_\ell$,
and let us fix an occurrence of $A$ in~$S_\ell$.
In each string $S_{k}$ with $k\in [0\dd \ell]$, this occurrence corresponds to a fragment $x_k$ satisfying $\val(x_k)=\val(A)$. Note that $x_\ell = A$ and $x_{\ell-1}=B^t$. Consequently, for each $k\in [0\dd \ell)$,
we have $x_k = Y_k^t$ for some string $Y_k\in \Symb^+$; in particular, $Y_{0}=\val(B)$ and $Y_{\ell-1}=B$.

Let us fix the largest $k\in [0\dd \ell)$ such that $Y_k = Z_k^{s}$ for some string $Z_k\in \Symb^+$; note that $k$ is well defined due to $\per(Y_0)=\frac1s|Y_0|$ and $k < \ell-1$ due to $\per(Y_{\ell-1})=|Y_{\ell-1}|=1$.

Let us decompose $x_k$ into $st$ equal-length fragments $x_k = z_{k,1}\cdots z_{k,st}$, and note that $z_{k,i}=Z_k$ for $i\in [1\dd st]$.
Consider the partition of $S_k$ into blocks that are then collapsed to form $S_{k+1}$.
Since $x_{k}$ gets collapsed to $x_{k+1}$, block boundaries are placed before $z_{k,1}$ and after $z_{k,st}$.
Since $z_{k,1}\cdots z_{k,s}=Y_k$ gets collapsed to a fragment matching $Y_{k+1}$, 
a block boundary is also placed between $z_{k,s}$ and $z_{k,s+1}$. 
However, according to \cref{it:rle,it:pair}, block boundaries are placed solely based on the two adjacent symbols,
and therefore a block boundary is placed between every $Z_k[|Z_k|]$ and $Z_k[1]$ and, in particular,
between $z_{k,i}$ and $z_{k,i+1}$ for all $i\in [1\dd st)$.
Consequently, each fragment $z_{k,i}$ is collapsed to a fragment of $S_{k+1}$, which we denote $z_{k+1,i}$. Since $\val(z_{k+1,i})=\val(z_{k,i})=\val(Z_k)$ holds for all $i\in [1\dd st]$, \cref{fct:cons} yields a string $Z_{k+1}\in \Symb^+$ satisfying $z_{k+1,i}=Z_{k+1}$ for all $i\in [1\dd st]$.
This implies $Y_{k+1}=Z_{k+1}^s$, contradicting the choice of $k$. Hence, $s=1$.
\end{proof}

We are now ready to establish the main result.

\begin{theorem}
Given a string $S[1\dd n]$ with measure $\delta$, there exists a data structure of size $O(\delta\log\frac{n}{\delta})$ that can be built in $O(n\log n)$ expected time and can count the number of times any pattern $P[1\dd m]$ occurs in $S$ in time $O(m\log^{2+\epsilon}n)$, for any constant $\epsilon>0$ fixed at construction time.
\end{theorem}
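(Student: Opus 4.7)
The plan is to reduce the statement to the counting machinery developed by Christiansen et al.~\cite{CEKNP19} for their $O(\gamma\log\frac{n}{\gamma})$-space run-length grammar index, and to argue that it generalizes to any run-length grammar whose run-length productions satisfy the periodicity property of \cref{lem:runlen}. The starting point is \cref{thm:grl}, which constructs in $O(n\log n)$ expected time a run-length grammar of size $g_{rl} = O(\delta\log\frac{n}{\delta})$ generating $S$.

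The next step is to inspect the counting index of Christiansen et al.~\cite[Sec.~7]{CEKNP19}. Their structure, built on top of a grammar of size $g_{rl}$, counts occurrences of $P[1\dd m]$ in $O(m\log^{2+\epsilon} n)$ time by (i) classifying each occurrence according to the lowest nonterminal $A\to XY$ or $A\to B^t$ whose expansion fully contains it, (ii) counting primary occurrences, i.e., those that cross the boundary between $X$ and $Y$ (or between two consecutive copies of $B$ in a run-length rule), via a geometric data structure keyed on Karp--Rabin hashes of prefixes/suffixes of the expansions, and (iii) propagating occurrence counts upward using the grammar parse tree with standard weighted-ancestor tools. The only part of this machinery that is sensitive to the particular grammar used is the handling of run-length rules $A\to B^t$: in order to count occurrences of $P$ inside $\val(A)=\val(B)^t$ that cross copy boundaries, the structure must recognize such occurrences as a periodic window and count them via a closed-form formula based on $t$, $|\val(B)|$, and the occurrences of $P$ in a short context around a single boundary. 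This is correct exactly when $\per(\val(A)) = |\val(B)|$, which is precisely \cref{lem:runlen}.

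Thus I would plan to (a) restate Christiansen et al.'s counting construction in the abstract form that takes as input any run-length grammar $G$ satisfying $\per(\val(A)) = |\val(B)|$ for every rule $A\to B^t$ of $G$, verifying (by walking through their Section~7 step-by-step) that every use of their specific grammar reduces to this combinatorial property together with standard grammar-parsing utilities; (b) feed the grammar of \cref{thm:grl} into this black box, using \cref{lem:runlen} to discharge the required hypothesis; (c) observe that the preprocessing bound is dominated by $O(n\log n)$ (the expected construction time of the grammar plus the $O(g_{rl}\,\mathrm{polylog}\,n)$ overhead of building the auxiliary geometric and weighted-ancestor structures over a grammar of size $O(\delta\log\frac{n}{\delta})$), and that the space and query-time bounds translate directly with $g_{rl}$ replaced by $O(\delta\log\frac{n}{\delta})$.

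The main obstacle will be step (a): teasing apart the counting argument of~\cite{CEKNP19} to confirm that the periodicity property \cref{lem:runlen} is the only structural assumption their analysis makes on the underlying grammar, since that grammar was designed jointly with its index. One needs to check in particular that the Karp--Rabin-based disambiguation of primary occurrences, the navigation of runs $B^t$ by jumping to the right copy using $\per(\val(A))=|\val(B)|$, and the secondary-occurrence propagation all remain correct when applied to the grammar of \cref{sec:ub}; the rest is essentially bookkeeping that follows the template of \cref{cor:grl}.
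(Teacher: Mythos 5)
Your proposal is essentially the paper's proof: reduce to Christiansen et al.'s Section~7 counting machinery and identify their Lemma~7.2 (the periodicity property of run-length rules) as the sole grammar-specific hypothesis, which \cref{lem:runlen} supplies for the grammar of \cref{thm:grl}. One small imprecision worth fixing: the query-time bound does not ``translate directly''---Christiansen et al.\ achieve $O(m+\log^{2+\epsilon}n)$ counting precisely because their locally consistent grammar allows splitting $P$ in only $O(\log m)$ places, whereas the grammar of \cref{thm:grl} lacks that guarantee, so all $m-1$ splits of $P$ must be tried and the bound degrades to $O(m\log^{2+\epsilon}n)$; the paper states this explicitly and you should too.
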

\begin{proof}
We use the technique that Christiansen et al.~\cite[Section~7]{CEKNP19} developed for their particular run-length grammar. The only point where their structure requires some specific property of their grammar is their Lemma 7.2, which we have reproved for our grammar as \cref{lem:runlen}.

The total space is proportional to the size of the grammar, which in our case is $O(\delta\log\frac{n}{\delta})$. Their expected construction time is $O(n\log n)$, which dominates the time to build our run-length grammar. From the time analysis in~\cite{CEKNP19}, it follows that the query time is $O(m\log^{2+\epsilon} n)$ because we split $P$ in $m-1$ places, not just in $O(\log m)$ places as their special grammar allows.
\end{proof}

\subsection{Block trees}

Given integer parameters $\tau$ and $s$, the 
root 
of the block tree divides $S$ into $s$ equal-sized (that is, with the same number of characters) blocks (assume for simplicity that $n = s \cdot \tau^t$ for some integer $t$).%
\footnote{Otherwise, we simply pad $S$ with spurious symbols at the end; whole spurious blocks are not represented. The extra space incurred is only $O(\tau h)$ for a tree of height $h$.}
Blocks are then classified into \emph{marked} and \emph{unmarked}.
If two adjacent blocks $B', B''$ form the leftmost occurrence of the underlying substring $B'\cdot B''$, then both $B'$ and $B''$ are marked.
Blocks $B$ that remain unmarked are replaced by a pointer to the pair of adjacent blocks $B',B''$ that contains the leftmost occurrence of $B$, and the offset $\epsilon \ge 0$ where $B$ starts inside $B'$. Marked blocks are divided into $\tau $ equal-sized sub-blocks, which form the children of the current block tree's level, and processed recursively in the same way.
Let $\sigma$ be the alphabet size. 
The level where the blocks' lengths fall below $\log_\sigma n$ 
contains the leaves of the block tree,
whose blocks store their plain string content using $O(\log n)$ bits. The height of the block tree is then $h = O(\log_\tau \frac{n/s}{\log_\sigma n}) = O(\log_\tau \frac{n\log\sigma}{s\log n}) \subseteq O(\log \frac{n}{s})$.

The block tree construction guarantees that the blocks $B'$ and $B''$ to which any unmarked block points exist and are marked. Therefore, any access to a position $S[i]$ can be carried out in $O(h)$ time, by descending from the root to a leaf and spending $O(1)$ time in each level: To obtain $B[i]$ from a marked block $B$, we simply compute to which sub-block $B[i]$ belongs among the children of $B$.
To obtain $B[i]$ from an unmarked block $B$ pointing to $B',B''$ with offset $\epsilon$, we switch either to $B'[\epsilon+i]$ or to $B''[\epsilon+i-|B'|]$, which are marked blocks.

By storing further data associated with marked and unmarked blocks, the block tree offers the following functionality~\cite{BGGKOPT15}:

\begin{description}
\item[access:] any substring $S[i\dd i+\ell-1]$ is extracted in time $O(h \lceil \ell/\log_\sigma n\rceil)$;
\item[rank:] the number of times symbol $a$ occurs in $S[1\dd i]$, denoted $rank_a(S,i)$, is computed in time $O(h)$ by multiplying the space by $O(\sigma)$;
\item[select:] the position of the $j$th occurrence of symbol $a$ in $S$, denoted $select_a(S,j)$, is computed in time $O(pred(s,n) +h\,pred(\tau,n))$ by multiplying the space by $O(\sigma)$, where $pred(x,n)$ is the time of a predecessor query on a set of $x$ elements from the universe $[1\dd n]$.
\end{description}

It is shown that there are only $O(z\tau )$ blocks in each level of the block tree (except the first, which has $s$ block); therefore the block tree size is $O(s+z\tau \log_\tau \frac{n\log\sigma}{s\log n})$.

\subsubsection{Bounding the space in terms of $\delta$}

We now prove that there are only $O(\delta \tau )$ blocks in each level of the block tree except the root level, and therefore, choosing $s=\delta$ yields a structure of size $O(\delta \tau \log_\tau \frac{n\log\sigma}{\delta\log n})$ with height $O(\log_\tau \frac{n\log\sigma}{\delta\log n})$. For $\tau =O(1)$, the space is $O(\delta\log\frac{n}{\delta})$ and the height is $O(\log \frac{n}{\delta})$.

Let us call level $k$ of the block tree the one where blocks are of length $\tau ^k$ (recall that we assume $n = s \cdot \tau^t)$. 
In level $k$, then, $S$ is covered regularly with blocks $B=S[\tau ^k (i-1)+1\dd \tau ^k i]$ of length $\tau ^k$ (though not all of them are present in the block tree). 
Note that $k$ reaches its maximum in the root (where we have the largest blocks) and the minimum in the leaves of the block tree.

\begin{lemma}\label{lem:deltamarked}
The number of marked blocks of length $\tau ^k$ in the block tree is $O(\delta)$.
\end{lemma}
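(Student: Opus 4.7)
The plan is to bound the marked blocks at level $k$ by reducing to marked \emph{pairs} and then applying the substring-complexity bound $d_\ell(S)\le\ell\delta$ from \cref{def:delta}. First, each marked block lies in at least one marked pair $(B_{j-1},B_j)$ or $(B_j,B_{j+1})$ whose concatenation is the leftmost occurrence of its length-$2\tau^k$ content in $S$, so the number of marked blocks is at most twice the number $M_k$ of marked pairs. Two distinct marked pairs must have distinct contents, because only the earlier of two equal pair contents can be a leftmost occurrence. Hence $M_k$ injects into the set of distinct length-$2\tau^k$ substrings whose first occurrence in $S$ falls on a block-aligned position.

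The direct bound $M_k\le d_{2\tau^k}(S)\le 2\tau^k\delta$ is weaker than needed by a factor of $\tau^k$, so I plan to refine it by associating each marked pair with $\Omega(\tau^k)$ distinct length-$\tau^k$ first-occurrence positions of $S$. For a marked pair $(B_j,B_{j+1})$ at block-aligned position $p_j$, consider the $\tau^k+1$ sliding windows $W_t := S[p_j+t\dd p_j+t+\tau^k-1]$ for $t\in[0,\tau^k]$. I claim that all but $O(1)$ of the positions $p_j+t$ are first-occurrence positions of their window $W_t$: if $W_t$ had an earlier occurrence at some $q_t<p_j+t$, then either $q_t\in[p_j,p_j+t-1]$, creating an internal period of $B_j\cdot B_{j+1}$ of size $p_j+t-q_t$, or $q_t<p_j$, in which case the surrounding length-$2\tau^k$ window at $q_t-t$ must differ from $B_j\cdot B_{j+1}$ on its outer characters. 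The first case, aggregated over many $t$, forces $B_j\cdot B_{j+1}$ via the Periodicity Lemma~\cite{fine1965uniqueness} to admit a period short enough that the periodic extension would also appear strictly before $p_j$, contradicting the pair being the leftmost occurrence at $p_j$; the second case can only hold for $O(1)$ values of $t$ by a similar contextual constraint. Since marked pairs sit at block-aligned positions spaced at least $\tau^k$ apart, their witness ranges $[p_j,p_j+\tau^k]$ overlap in at most one position each, so summing over $M_k$ marked pairs gives $\Omega(\tau^k)\cdot M_k\le d_{\tau^k}(S)\le \tau^k\delta$, whence $M_k=O(\delta)$ and the number of marked blocks is likewise $O(\delta)$.

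The main obstacle is making the case analysis rigorous, especially for marked pairs with highly periodic content where many sliding windows can simultaneously have earlier occurrences; such ``periodic'' marked pairs must be counted separately, bounded by a union over possible minimal periods $p\le\tau^k$ combined with the substring-complexity bound $d_p(S)\le p\delta$ on the length-$p$ generators. I expect that combining the non-periodic witness argument with this direct bound on periodic pairs yields the desired $O(\delta)$ count, though the precise quantitative bookkeeping inside the Periodicity-Lemma step is the delicate part of the argument.
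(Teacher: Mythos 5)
Your high-level instinct — that the naive bound $M_k\le d_{2\tau^k}(S)\le 2\tau^k\delta$ is off by a factor of $\tau^k$, and that the fix is to amortize over the $\Theta(\tau^k)$ positions inside each marked structure — is exactly right and is the same intuition that drives the paper's proof. But your execution, via length-$\tau^k$ sliding windows $W_t$, leaves a real gap in the periodic case. Your claim that ``all but $O(1)$ of the positions $p_j+t$ are first-occurrence positions of their window'' is simply false when the marked pair is periodic: if $B_j\cdot B_{j+1}=\mathtt{a}^{2\tau^k}$ and $S$ already contains a run $\mathtt{a}^{2\tau^k-1}$ earlier, the pair can be the leftmost occurrence of its length-$2\tau^k$ content while \emph{every} length-$\tau^k$ window inside it has appeared before. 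You acknowledge this, but the proposed remedy (a union over minimal periods $p\le\tau^k$ paired with $d_p(S)\le p\delta$) does not obviously sum up: the bound $p\delta$ controls the number of distinct generators of period $p$, not the number of marked pairs, and different marked pairs at the same level can share a generator while still being distinct leftmost length-$2\tau^k$ substrings, so the bookkeeping you defer is not just delicate — it is where the argument currently breaks.

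The paper sidesteps periodicity entirely by choosing the window to be \emph{wider} rather than \emph{shorter}: for every position $p$ in a marked block $B$, it takes the length-$4\tau^k$ window $E=S[p-2\tau^k\dd p+2\tau^k-1]$ (after padding $S$ with $2\tau^k$ sentinel characters on each side). Any such $E$ is long enough to fully contain $B^-\cdot B\cdot B^+$, hence contains a leftmost occurrence of one of the two adjacent pairs, and therefore no two of these windows can be equal — if $E=E'$ with $E'$ appearing later, the leftmost occurrence contained in $E'$ would also occur inside $E$, contradiction. This gives at most $d_{4\tau^k}(S)+4\tau^k\le 4\tau^k\delta+4\tau^k$ distinct windows, and dividing by the $\tau^k$ positions per block yields $O(\delta)$ marked blocks with no case analysis and no appeal to the Periodicity Lemma. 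If you want to salvage your approach, the cleanest fix is to replace your length-$\tau^k$ windows with length-$\Theta(\tau^k)$ windows that are guaranteed to contain the entire marked pair plus a block of context on each side; at that point you recover the paper's argument and the periodic case disappears.
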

\begin{proof}
Any marked block $B$ must belong to a sequence of three blocks, $B^- \cdot B \cdot B^+$, such that $B$ is inside the leftmost occurrence of $B^- \cdot B$ or $B \cdot B^+$, or both ($B^-$ and $B^+$ do not exist for the first and last block, respectively).

For the sake of computing our bound, let $\#$ be a symbol not appearing in $S$ and let us add $2\cdot \tau ^k$ characters equal to $\#$ at the beginning of $S$ and $2\tau ^k$ characters equal to $\#$ at the end of $S$. We index the added prefix in negative positions (up to index 0), so that $S[-2\cdot \tau ^k+1 \dd 0] = \#^{2\cdot \tau ^k}$.
Now consider all the $\tau ^k$ text positions $p$ belonging to a marked block $B$.
The \emph{long} substring
$E = S[p-2\cdot \tau ^k\dd p+2\cdot \tau ^k-1]$ centered at $p$, of length $4\tau ^k$, contains $B^- \cdot B \cdot B^+$, and thus $E$ contains the leftmost
occurrence $L$ of $B^- \cdot B$ or $B \cdot B^+$. All those long substrings $E$ must then
be distinct: if two long substrings $E$ and $E'$ are equal, and $E'$ appears after $E$ in $S$, then $E'$ does not contain the leftmost occurrence of any substring $L$. 

Since we added a prefix of length $2\cdot \tau ^k$ and a suffix of length $2\tau ^k$ consisting of character $\#$ to $S$, the number of distinct substrings of length $4\tau ^k$ is at most $d_{4\tau ^k}(S) + 4\tau ^k$.
Therefore, there can be at most $d_{4\tau ^k}(S) + 4\tau ^k$ long substrings $E$ as well, because they must all be distinct. Since each position $p$ inside a block $B$ induces a distinct
long substring $E$, and each marked block
$B$ contributes $\tau ^k$ distinct positions $p$, there are at most $(d_{4\tau ^k}(S)+ 4\tau ^k)/\tau ^k$ marked blocks $B$ of
length $\tau ^k$. The total number of marked blocks of length $\tau ^k$ is thus at most $(d_{4\tau ^k}(S) + 4\tau ^k)/\tau ^k = 4 \cdot d_{4\tau ^k}(S)/(4\tau ^k) + 4\tau ^k / \tau ^k \le 4\delta + 4$.
\end{proof}

Since the block tree has at most $\Oh(\delta)$ marked blocks per level, it has $O(\delta \tau )$ blocks across all the levels except the root level. This yields the following result.

\begin{theorem}\label{thm:delta}
Let $S[1\dd n]$, over alphabet $[1\dd\sigma]$, have measure $\delta$. 
Then the block tree of $S$, with parameters $\tau $ and $s$, is of size $O(s+\delta \tau \log_\tau \frac{n\log\sigma}{s\log n})$ words and height $h = O(\log_\tau \frac{n\log\sigma}{s\log n})$.
\end{theorem}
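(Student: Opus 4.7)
The plan is to combine the per-level bound from \cref{lem:deltamarked} with a straightforward computation of the block tree's height. Conceptually, the block tree has a "heavy" top (the root level holds the initial partition of $s$ blocks and we cannot say more than $s$ there) and a "light" body (every level below is produced only by subdividing marked blocks of the previous level).

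First I would count the blocks per level. By construction, the root level holds exactly $s$ blocks, contributing $O(s)$ words to the total size. For every level below the root, the blocks present are precisely the $\tau$ children of the marked blocks at the level above. By \cref{lem:deltamarked}, the number of marked blocks of any given length is $O(\delta)$, so each non-root level contributes $O(\delta\tau)$ blocks (each block occupying $O(1)$ words, be it as a marked node with children pointers or as an unmarked node with a source pointer and offset).

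Next I would bound the height $h$. The root-level blocks have length $n/s$, each descent divides the block length by $\tau$, and we stop once the block length falls below $\log_\sigma n$. Hence $h$ is the smallest integer with $\tau^h \ge (n/s)/\log_\sigma n$, which yields $h = O(\log_\tau \frac{n/s}{\log_\sigma n}) = O(\log_\tau \frac{n\log\sigma}{s\log n})$, matching the claimed height.

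Finally, summing the per-level space across all levels gives
\[
O(s) + h \cdot O(\delta\tau) \;=\; O\!\left(s + \delta\tau \log_\tau \tfrac{n\log\sigma}{s\log n}\right),
\]
as stated. I do not anticipate any real obstacle: \cref{lem:deltamarked} already does the only nontrivial work, and the rest is bookkeeping. The one subtlety worth being explicit about is that unmarked blocks are bounded automatically (they are children of marked blocks of the parent level, so they are at most $\tau$ times the $O(\delta)$ marked count), and that the $O(s)$ term of the root absorbs the case where $s$ may be much larger than $\delta\tau$.
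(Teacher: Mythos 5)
Your proposal is correct and follows essentially the same route as the paper: the space bound is the root level's $O(s)$ plus $O(\delta\tau)$ per non-root level via \cref{lem:deltamarked} (each non-root block being a child of a marked block above), and the height is computed exactly as in the paper's block-tree description by repeatedly dividing block length by $\tau$ from $n/s$ down to $\log_\sigma n$. The paper states the theorem immediately after \cref{lem:deltamarked} with the same one-line accounting, so there is nothing to distinguish.
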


%

\subsubsection{Operations on block trees}

By properly parameterizing the block tree, we obtain a structure that uses the same asymptotic space and, in some cases, extracts substrings faster than the result of Corollary~\ref{cor:grl}.

\begin{corollary}
Let $S[1\dd n]$, over alphabet $[1\dd\sigma]$, have measure $\delta$. 
Then a block tree of $S$ can use $O(\delta \log \frac{n}{\delta})$ space and extract a substring of length $\ell$ from $S$ in time $O(\lceil \ell/\log_\sigma n\rceil \log\frac{n}{\delta})$.
\end{corollary}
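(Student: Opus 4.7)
The plan is to derive the corollary by a straightforward application of \cref{thm:delta} with a suitable choice of parameters, together with the access time bound $O(h \lceil \ell/\log_\sigma n\rceil)$ recalled from the description of block tree operations.

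First I would fix $\tau$ to be any constant (e.g., $\tau=2$) and set $s = \delta$. With these choices, \cref{thm:delta} immediately gives a block tree of size $O\bigl(\delta + \delta \cdot \log \tfrac{n\log\sigma}{\delta \log n}\bigr)$ and height $O\bigl(\log \tfrac{n\log\sigma}{\delta \log n}\bigr)$. The key simplification is that, under the standing assumption $\sigma = n^{O(1)}$ (stated in \cref{sec:delta}), we have $\log\sigma = O(\log n)$, so $\tfrac{n\log\sigma}{\delta \log n} = O\bigl(\tfrac{n}{\delta}\bigr)$, and therefore both the space and the height collapse to $O(\delta \log \tfrac{n}{\delta})$ and $O(\log \tfrac{n}{\delta})$ respectively.

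Then, substituting this height $h = O(\log \tfrac{n}{\delta})$ into the access-time formula $O(h \lceil \ell/\log_\sigma n\rceil)$ yields the claimed extraction time $O(\lceil \ell/\log_\sigma n\rceil \log \tfrac{n}{\delta})$. There is essentially no obstacle here beyond verifying that the $\log\sigma$ and $\log n$ factors in the bound of \cref{thm:delta} combine correctly; the asymptotic simplifications are safe precisely because the paper restricts attention to polynomial-size alphabets.
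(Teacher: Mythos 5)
Your proposal matches the paper's proof exactly: set $s=\delta$ and $\tau=O(1)$ in \cref{thm:delta}, then use $\log\sigma = O(\log n)$ to simplify $O\bigl(\delta\log\tfrac{n\log\sigma}{\delta\log n}\bigr)$ to $O\bigl(\delta\log\tfrac{n}{\delta}\bigr)$ for both space and height, and plug the height into the stated access-time formula. You just spell out slightly more explicitly why the $\log\sigma/\log n$ factor is harmless, which the paper leaves implicit in the containment $O(\delta\log\tfrac{n\log\sigma}{\delta\log n})\subseteq O(\delta\log\tfrac{n}{\delta})$.
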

\begin{proof}
We obtain the desired space $O(\delta\log\frac{n\log\sigma}{\delta\log n}) \subseteq O(\delta\log\frac{n}{\delta})$ by using Theorem~\ref{thm:delta} with $s=\delta$ and $\tau=O(1)$.
The height is $O(\log\frac{n}{\delta})$, and thus the substring extraction costs $O(\lceil \ell/\log_\sigma n\rceil \log\frac{n}{\delta})$.
\end{proof}

Navarro and Prezza \cite{NP18} show how the Karp--Rabin signature of any $S[i\dd j]$ can be computed in time $O(\log\frac{n}{\gamma})$ on their $\Gamma$-tree variant of the block tree, which is of size $O(\gamma\log\frac{n}{\gamma})$. We now extend their result so as to compute the {\em fingerprint} $\phi(S[i\dd j]) = \phi(S[i])\, \circ\, \phi(S[i+1]) \circ \cdots \circ \,
\phi(S[j])$ for any group operator $\circ$ within $O(\delta\log\frac{n}{\delta})$ space and using $O(\log\frac{n}{\delta})$ group operations, by enhancing the original block trees. This includes computing the Karp--Rabin signature of $S[i\dd j]$ in time $O(\log\frac{n}{\delta})$, because all the required group operations can be supported in constant time on those signatures \cite{NP18}.

\begin{definition}
Let $\Sigma$ be an alphabet and $(\mathbb{G},\circ,^{-1},0)$ a group. A function $\phi : \Sigma^* \rightarrow \mathbb{G}$ is a {\em fingerprint} on $\Sigma^*$ if $\phi(\varepsilon)=0$ and $\phi(S \cdot a) = \phi(S) \circ \phi(a)$ for every $S \in \Sigma^*$ and $a\in\Sigma$.
\end{definition}

\begin{theorem}\label{thm:KR}
Let $S[1\dd n]$ be a string on alphabet $\Sigma$, $(\mathbb{G},\circ,^{-1},0)$ a group where $\log |\mathbb{G}| = O(\log n)$, and $\phi:\Sigma^* \rightarrow \mathbb{G}$ a fingerprint on $\Sigma^*$. Then there is a data structure of size $O(\delta\log\frac{n}{\delta})$ that can compute any $\phi(S[i\dd j])$ in time $O(\log\frac{n}{\delta})$.
\end{theorem}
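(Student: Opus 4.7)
The plan is to enhance the block tree of \cref{thm:delta} so that it supports fingerprint queries, generalising the approach Navarro and Prezza~\cite{NP18} developed for Karp--Rabin signatures on $\Gamma$-trees. First, I instantiate \cref{thm:delta} with $s=\delta$ and $\tau=\Oh(1)$, truncated at leaves of constant length; this gives a block tree of size $\Oh(\delta \log \frac{n}{\delta})$ and height $h=\Oh(\log \frac{n}{\delta})$. I augment every block $B$ (at every level) with the value $\phi(\val(B))\in \mathbb{G}$, which fits in $\Oh(\log n)$ bits and thus $\Oh(1)$ words per block, preserving the asymptotic space. At the root level, I additionally precompute the $\delta+1$ prefix fingerprints $\phi(S[1\dd k\cdot n/\delta])$ for $k\in[0\dd \delta]$, using another $\Oh(\delta)$ words.

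The query algorithm is a recursive routine $F(B,a,b)$ that returns $\phi(\val(B)[a\dd b])$. On a leaf, the answer is computed directly in $\Oh(1)$ time. On a marked internal block $B$ with children $B_1,\dots,B_\tau$, the routine locates the two boundary children that host $a$ and $b$, composes the stored fingerprints of the interior children using $\circ$, and recurses on the (at most two) partial boundary fragments. On an unmarked block $B$ whose pointer $(B',B'',\varepsilon)$ targets a pair of same-level marked blocks, the query range $\val(B)[a\dd b]$ decomposes into at most two fragments lying inside $\val(B')$ and $\val(B'')$, and the routine recurses on those via $F$. An internal substring query $\phi(\val(B')[a'\dd b'])$ whose endpoints do not align with sub-block boundaries is reduced to two prefix queries through the group identity
\[\phi(\val(B')[a'\dd b']) = \phi(\val(B')[1\dd a'-1])^{-1} \circ \phi(\val(B')[1\dd b']),\]
which is available precisely because $\mathbb{G}$ is a group and $\phi$ is a fingerprint. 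The top-level query $\phi(S[i\dd j])$ is handled analogously: the root-level prefix-fingerprint array collapses all wholly contained root blocks into a single group element, and $F$ is invoked on the two boundary root blocks.

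The main obstacle is arguing that the recursion terminates in $\Oh(h)$ steps despite the ``same-level'' detours introduced by unmarked blocks. Each unmarked block is resolved via $\Oh(1)$ calls on same-level marked blocks, and each marked block immediately triggers calls that descend exactly one level in the tree; hence any root-to-leaf path in the call tree contributes $\Oh(1)$ group operations per level. Summing over the $h=\Oh(\log \frac{n}{\delta})$ levels yields total query time $\Oh(\log \frac{n}{\delta})$ within the target space, establishing the claim.
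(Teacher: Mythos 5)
Your overall plan matches the paper's: a block tree with $s=\delta$, $\tau=\Oh(1)$, per-block fingerprints, a root-level prefix table, and the reduction $\phi(S[i\dd j]) = \phi(S[1\dd i-1])^{-1}\circ\phi(S[1\dd j])$ so that after the top level one only ever computes prefix fingerprints of blocks. The gap is in the analysis of the recursion: your time bound is not established, and the data you store is not sufficient to fix it.

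The problem arises at unmarked blocks. When a prefix query $\phi(\val(B)[1\dd l])$ hits an unmarked block $B$ pointing into $B'\cdot B''$ with offset $\varepsilon>0$, and the prefix lies entirely inside $B'$, you obtain a \emph{substring} query $\phi(\val(B')[\varepsilon+1\dd \varepsilon+l])$ on the marked block $B'$. Your reduction turns this into \emph{two} prefix queries $\phi(\val(B')[1\dd\varepsilon])^{-1}\circ\phi(\val(B')[1\dd\varepsilon+l])$, each of which descends one level. (The same doubling happens when the prefix spills from $B'$ into $B''$: the suffix of $B'$ again costs a prefix query on $B'$, and there is a second prefix query on $B''$.) So every time the recursion passes through an unmarked block, the branching factor is $2$. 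Your concluding step sums ``$\Oh(1)$ operations per level along a root-to-leaf path,'' but the object you are summing over is a call \emph{tree}, not a path; with branching factor $2$ persisting across $h$ levels, the number of calls can be $\Theta(2^h)=n^{\Theta(1)}/\delta^{\Theta(1)}$, not $\Oh(h)$. Nothing in your argument rules this out.

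The paper's proof avoids this by storing, at each unmarked block $B$ pointing to $B'[\varepsilon+1\dd\tau^k]\cdot B''[1\dd\varepsilon]$, the extra fingerprint $\phi(B'[\varepsilon+1\dd\tau^k])$. With that value and the stored $\phi(B')$, both subcases at an unmarked block collapse to a \emph{single} new prefix query on a same-level marked block: the stored $\phi(B'[\varepsilon+1\dd\tau^k])$ absorbs what would otherwise have been the ``left endpoint'' prefix query, and the remaining term is one prefix query on $B'$ or $B''$. This preserves branching factor $1$, so the call tree is a path of length $\Oh(h)$, giving the claimed $\Oh(\log\frac{n}{\delta})$ group operations. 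To repair your proof, you should add this per-unmarked-block fingerprint to your stored data (a constant number of words per block, so space is unaffected) and redo the unmarked-block case so that it issues exactly one recursive prefix query.
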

\begin{proof}
We use a block tree for $S$ where the leaves handle only one symbol, and we augment it as follows. Together with every stored block $B=S[\tau^k(i-1)+1\dd \tau^k i]$ at every level $k$, we store its fingerprint $\phi(B)$ (using constant space). Furthermore, at the top level, say level $k=\kappa$, we store the fingerprint $\phi(S[1\dd\tau^\kappa (i-1)])$ at the block-tree node corresponding to the block $B=S[\tau^\kappa(i-1)+1\dd \tau^\kappa i]$. In addition, let $B_1,\ldots,B_\tau$ be the children at level $k-1$ of a marked block $B$ of level $k$. We store, at each such child $B_j$, the fingerprint $\phi(B_1\cdots B_{j-1})$. 

We will compute any fingerprint $\phi(S[i\dd j])$ as $\phi(S[1\dd i-1])^{-1} \circ \phi(S[1\dd j])$; therefore we focus on computing only fingerprints of the form $\phi(S[1\dd i])$ for arbitrary $i$.
%
At the top level $\kappa$, the prefix $S[1\dd i]$ spans a sequence $B_1\cdots B_t$ of blocks followed by a (possibly empty) prefix $C$ of block $B_{t+1}$. Since $\phi(B_1\cdots B_t)$ is explicitly stored at block $B_{t+1}$, the problem reduces to computing $\phi(C)$ and then returning $\phi(B_1\cdots B_t) \circ \phi(C)$. The following is needed only if $C \neq \varepsilon$.

To compute the fingerprint of a prefix $B[1\dd l]$ of an explicit block $B$ at level $k \le \kappa$ (so $1 \le l \le \tau^k$), we distinguish two cases.

\begin{enumerate}
\item $B$ is a marked block, with children $B_1,\ldots,B_\tau$ at level $k-1$, so that $B[l]$ belongs to the child $B_j$ (i.e., $j=\lceil l/\tau^{k-1}\rceil$). We then return $\phi(B_1 \cdots B_{j-1}) \circ \phi(B_j[1\dd l \!\!\mod \tau^{k-1}])$, where the first term is stored at $B_j$ and the second is computed from the next level (only necessary if $l\!\!\mod \tau^{k-1} \neq 0$).

\item $B$ is an unmarked block, pointing to a previous occurrence inside $B' \cdot B''$ at the same level $k$, with both $B'$ and $B''$ marked. If the occurrence of $B$ spans only one marked block, $B'$, then we replace $B$ by $B'$ in our query and we are back in case (1). Otherwise, let $B[1\dd \tau^k] = B'[i\dd \tau^k]\cdot B''[1\dd i-1]$. For each pointer of this kind in the block tree, we store the fingerprint $\phi(B'[i\dd \tau^k])$ at $B$. We consider two sub-cases. 

\begin{enumerate}
\item If $l\geq \tau^k-i+1$, then $B[1\dd l] = B'[i\dd\tau^k]\cdot B''[1\dd l-(\tau^k-i+1)]$. We then return $\phi(B'[i\dd \tau^k]) \circ \phi(B''[1\dd l-(\tau^k-i+1)])$, where the first term is stored at $B$ and the second is a new prefix problem on level $k$, but now of case (1).

\item If $l < \tau^k-i+1$, then $B[1\dd l] = B'[i\dd i+l-1]$. Although this
is neither a prefix nor a suffix of a block, note that $B[1\dd l]\cdot B'[i+l\dd \tau^k] = B'[i\dd i+l-1] \cdot B'[i+l\dd \tau^k] = B'[i\dd \tau^k]$. We then return $\phi(B'[i\dd \tau^k]) \circ \allowbreak \phi(B'[i+l\dd \tau^k])^{-1}$. The first fingerprint is stored at $B$, whereas the second is the fingerprint of the suffix of the marked block $B'$. We compute it as $\phi(B'[i+l\dd \tau^k]) = \phi(B'[1\dd i+l-1])^{-1} \circ \phi(B')$, where $\phi(B')$ is stored at $B'$ and the first term is again the fingerprint of a prefix at the same level, but now of case (1).
\end{enumerate}
\end{enumerate}

To sum up, computing a prefix of an explicit block at level $k$
reduces to the problem of computing a prefix of an explicit block at level
$k-1$ plus a constant amount of group operations to combine values. In
the worst case, we navigate down to the leaves, where fingerprints of single characters can be computed for free. Since the height of this block tree is $\log\frac{n}{\delta}$, the whole fingerprint is computed at cost $O(\log\frac{n}{\delta})$, counting group operations and other costs.
\end{proof}

We note that it is unknown if a result like Theorem~\ref{thm:KR} can be obtained on a semigroup. For example, it is not known how to compute the minimum of a substring in polylogarithmic time on block trees \cite{CN19,BGGKOPT15}.

Though our times for accessing and fingerprinting seem to be larger than those obtained on other block tree variants, which are of height $O(\log\frac{n}{z})$ \cite{BGGKOPT15} or $O(\log\frac{n}{\gamma})$ \cite{NP18}, we next show that $\log\frac{n}{\delta}$ is asymptotically equal to $\log\frac{n}{g}$, which also encompasses all the intermediate measures, $\delta \le \gamma \le b \le c \le z \le g_{rl} \le g$.

\begin{lemma}
Let $x = O(\delta\log^c \frac{n}{\delta})$ for some constant $c>0$. Then
$\log\frac{n}{\delta} = O(\log\frac{n}{x})$. As a consequence, $\log\frac{n}{\delta}=\Theta(\log\frac{n}{g})$.
\end{lemma}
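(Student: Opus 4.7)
The plan is to reduce everything to routine algebra by taking logarithms of the hypothesis $x \le C\delta\log^c\tfrac{n}{\delta}$ (where $C$ is the implicit constant in the big-$O$). This yields
\[
\log\tfrac{n}{x} \;\ge\; \log\tfrac{n}{\delta} \;-\; c\log\log\tfrac{n}{\delta} \;-\; \log C.
\]
The key observation is that $c\log\log\tfrac{n}{\delta} = o(\log\tfrac{n}{\delta})$, so the additive error gets absorbed into a constant factor of $\log\tfrac{n}{\delta}$ as soon as $\log\tfrac{n}{\delta}$ exceeds a threshold depending only on $c$ and $C$.

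Concretely, I would split into two cases. In the main case, $\log\tfrac{n}{\delta} \ge T$ for a threshold $T=T(c,C)$ chosen so that $c\log\log\tfrac{n}{\delta} + \log C \le \tfrac12\log\tfrac{n}{\delta}$; then the displayed inequality gives $\log\tfrac{n}{x} \ge \tfrac12\log\tfrac{n}{\delta}$, hence $\log\tfrac{n}{\delta} \le 2\log\tfrac{n}{x}$. In the opposite case $\log\tfrac{n}{\delta} < T$, both sides are $O(1)$; here one just needs to verify $\log\tfrac{n}{x} = \Omega(1)$ in the nondegenerate subcase $\delta < n$, which follows from the hypothesis itself (since $x \le C\delta\log^c\tfrac{n}{\delta}$ is strictly below $n$ whenever $\log\tfrac{n}{\delta}$ is bounded), while in the degenerate subcase $\delta = n$ both quantities vanish.

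The corollary $\log\tfrac{n}{\delta} = \Theta(\log\tfrac{n}{g})$ then drops out in two steps. The easy direction $\log\tfrac{n}{g} \le \log\tfrac{n}{\delta}$ follows from $\delta \le \gamma \le g$ (\cref{lem:delta} and the measure hierarchy). For the reverse direction, combining \cref{lem:z} with the classical $g = O(z\log\tfrac{n}{z})$ gives $g = O(\delta\log^2\tfrac{n}{\delta})$, so the first part applies with $c=2$ and yields $\log\tfrac{n}{\delta} = O(\log\tfrac{n}{g})$. The same argument with the appropriate exponent $c$ handles any intermediate measure $x \in \{\gamma,b,c,z,g_{rl}\}$, since each satisfies an upper bound of the required form.

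There is no serious obstacle; the only subtle point is ensuring the ``bounded'' case $\log\tfrac{n}{\delta} = O(1)$ is handled cleanly, which amounts to observing that the hypothesis itself prevents $x$ from saturating $n$ unless $\delta$ already does.
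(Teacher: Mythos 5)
Your approach is essentially the paper's: both exploit that the polylogarithmic factor $\log^c\tfrac{n}{\delta}$ is subpolynomial in $\tfrac{n}{\delta}$, so it vanishes after taking logarithms. The paper compresses your case analysis into a single inequality $\log^c\tfrac{n}{\delta}=O\bigl(\sqrt{\tfrac{n}{\delta}}\bigr)$, from which $\sqrt{\tfrac{n}{\delta}}=O\bigl(\tfrac{n}{x}\bigr)$ and the conclusion follow at once; your version takes logarithms first and then absorbs the $c\log\log\tfrac{n}{\delta}$ term via a threshold, which is the same fact in dual form. Your treatment of the corollary via $g=O(\delta\log^2\tfrac{n}{\delta})$ and $c=2$ also matches the paper.

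One small slip: in the bounded case $\log\tfrac{n}{\delta}<T$, the claim that the hypothesis forces $x$ to stay strictly below $n$ does not hold. Writing $t=\tfrac{n}{\delta}$, the hypothesis only gives $x\le Cn\cdot\tfrac{(\log t)^c}{t}$, and the factor $\tfrac{(\log t)^c}{t}$ need not be small on a bounded range of $t$ (for instance $t=4$, $c=2$, $C=10$ yields the useless bound $x\le 10n$). The fix is simply to drop this step: in the bounded regime both $\log\tfrac{n}{\delta}$ and $\log\tfrac{n}{x}$ are $O(1)$ (using $x\le n$, which holds for every measure under consideration), so the asserted $O(\cdot)$ bound is immediate. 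The paper's one-liner silently relies on the same additive slack, since $\sqrt{\tfrac{n}{\delta}}=O\bigl(\tfrac{n}{x}\bigr)$ likewise gives $\log\tfrac{n}{\delta}\le 2\log\tfrac{n}{x}+O(1)$ rather than a clean multiplicative bound.
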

\begin{proof}
From the hypohesis it follows that
$\frac{n}{\delta} = O(\frac{n}{x}\log^c \frac{n}{\delta})$. Since $\log^c\frac{n}{\delta}=O(\sqrt{\frac{n}{\delta}})$, it holds that $\sqrt{\frac{n}{\delta}}=O(\frac{n}{x})$,
and thus $\log \frac{n}{\delta} = O(\log \frac{n}{x})$.
The final consequence follows from $\delta \le g = O(z\log\frac{n}{z}) = O(\delta\log\frac{n}{\delta}\log\frac{n}{z}) = O(\delta\log^2\frac{n}{\delta})$ \cite{Ryt03,CLLPPSS05,RRRS13}.
\end{proof}

Hence, the times we obtain using $O(\delta \log \frac{n}{\delta})$ space, not only for access but also for rank and select, and for computing fingerprints, are asymptotically the same as those obtained in $O(\gamma\log \frac{n}{\gamma})$ space~\cite{NP18,prezza2019optimal}
or in $O(z \log \frac{n}{z})$ space~\cite{BGGKOPT15}.

Finally, it is also possible to obtain the same result as point (3) of Corollary~\ref{cor:grl} using block trees; see the conference version of this article \cite{KNP20}.



\section{Conclusions}

We have made a step towards establishing the right measure of repetitiveness for a string $S[1\dd n]$. Compared with the most principled prior measure, the size $\gamma$ of the smallest attractor, the proposed measure $\delta$ has several important advantages:
\begin{enumerate}
\item It can be computed in linear time, while finding $\gamma$ is NP-hard. It is also insensitive to simple string transformations (reversals, alphabet permutations) and, unlike $\gamma$, monotone with respect to appending symbols.
\item It lower bounds the previous measure, $\delta \le \gamma$, with up to a logarithmic-factor separation. For every $n$ and $2 \le \delta \le n$, there are string families where $\gamma = \Omega(\delta\log\frac{n}{\delta})$.
\item We can always encode $S$ in $O(\delta\log\frac{n}{\delta})$ space, and this is
worst-case optimal in terms of $\delta$: for any length $n$ and any value $2\le \delta \le n$, there are string families needing $\Omega(\delta\log\frac{n}{\delta})$ space. Thus, $o(\delta\log n)$ space is unreachable in general. Instead,
no string family is known to require $\omega(\gamma)$ space, nor it is known if $o(\gamma\log n)$ space can always be reached.
\item We can build a run-length context-free grammar of size $O(\delta\log\frac{n}{\delta})$, which then upper bounds the size $g_{rl}$ of the smallest such grammar, and transitively the measures $\gamma$, $b$, $c$, $v$, and $z$. At the same time, there are string families where the smallest
context-free grammar is of size $g=\Omega(\delta \log^2 \frac{n}{\delta} / \log\log \frac{n}{\delta})$. No such separation is known for $\gamma$.
\item There are encodings using $O(\delta\log\frac{n}{\delta})$ space and supporting direct access and indexed searches, with the same complexities obtained within attractor-bounded space, $O(\gamma\log\frac{n}{\gamma})$~\cite{NP18}. An exception is a very recent faster index~\cite{CEKNP19}.
\end{enumerate}

An ideal compressibility measure capturing repetitiveness should be reachable, monotone, invariant to simple string transformations, efficient to compute, and optimal within a hopefully refined partition of the strings. The measure $\delta\log\frac{n}{\delta}$ is reachable, monotone, invariant, fast to compute, and optimal within the class of all the strings with the same $n$ and $\delta$ values.

In comparison, measure $b$ (the size of the smallest bidirectional macro scheme) is reachable and invariant, but it is non-monotonic and NP-hard to compute. On the other hand, it is optimal within a more refined partition, since it is always $O(\delta\log\frac{n}{\delta})$. The size $\gamma \le b$ of the smallest attractor is unknown to be reachable, and it is non-monotone and NP-hard to compute, yet invariant. If it turns out that one can always encode a string within $O(\gamma)$ space, then $\gamma$ would be a reachable measure even more refined than $b$.

The measure $\delta\log\frac{n}{\delta}$ is then a good candidate in the fascinating quest for an ideal measure of repetitiveness. Its main weakness is that it is optimal within a partition of the strings that, though reasonably refined, is improved by other measures (which have other weaknesses).

On the more algorithmic side, it would be useful to
compute $\delta$ within little space. Its $O(n)$-time computation~\cite{CEKNP19} requires also $O(n)$ space, which can be unaffordable for very large text collections. Bernardini et al.~\cite{BFGP20}, for example, show how to compute it in time $O(n^3/s^2)$ and $O(s)$ space. It would also be worth obtaining faster indexes of size $O(\delta\log\frac{n}{\delta})$. Our index requires $O(m\log n + occ \log^\epsilon n)$ search and $O(m\log^{2+\epsilon} n)$ count time,
while in $O(\gamma\log\frac{n}{\gamma})$ space it is possible to search in $O(m+(occ+1)\log^\epsilon n)$ and count in $O(m+\log^{2+\epsilon}n)$ time~\cite{CEKNP19}.

\section*{Acknowledgements}
Part of this work was carried out during the Dagstuhl Seminar 19241, ``25
Years of the Burrows-Wheeler Transform''.
We also thank Travis Gagie for pointing us the early reference related to $\delta$~\cite{RRRS13}.

TK was supported by ISF grants no. 1278/16, 824/17, and 1926/19, a BSF grant no. 2018364, and an ERC grant MPM (no. 683064) under the EU's Horizon 2020 Research and Innovation Programme. GN was supported in part by ANID -- Millennium Science Initiative Program -- Code
ICN17\_002, and Fondecyt Grant 1-200038, Chile.

\bibliographystyle{IEEEtran}
\bibliography{paper}

\appendix
\section{Proof from~\cite{IPM}}\label{app:proofs}
In this Appendix, we reproduce proofs from an unpublished manuscript~\cite{IPM}.

\fctcons*
\begin{proof}
    We proceed by induction on $k$. 
    Let $x,x'$ be fragments of $S_{k}$ satisfying $\val(x)=\val(x')$.
    If $k=0$, then $x= x'$ holds due to $\val(x)=x$ and $\val(x')=x'$.
    Otherwise, let $\bar{x}$ and $\bar{x}'$ be the fragments of $S_{k-1}$ obtained from $x$ and $x'$, respectively, by expanding collapsed blocks.
    Note that $\val(\bar{x})=\val(x)=\val(x')=\val(\bar{x}')$, so the inductive assumption guarantees
    $\bar{x}= \bar{x}'$.
    Inspecting \cref{it:rle,it:pair}, observe that if $S_{k-1}[i]=S_{k-1}[i']$ and $S_{k-1}[i+1]=S_{k-1}[i'+1]$, 
    then block boundaries after positions $i,i'\in [1\dd |S_{k-1}|)$ are placed consistently:
    either after both of them or after neither of them.
    Consequently, block boundaries within $\bar{x}$ and $\bar{x}'$ are placed consistently.
    Moreover, both $\bar{x}$ and $\bar{x}'$ consist of full blocks (since they are collapsed to $x$ and $x'$, respectively). Thus, $\bar{x}$ and $\bar{x}'$ are consistently partitioned into full blocks. 
    Matching blocks get collapsed to matching symbols both in \cref{it:rle,it:pair}, so we derive $x= x'$.
\end{proof}

\cordistinct*
\begin{proof}
    For a proof by contradiction, suppose that $S_k[j]=S_k[j+1]\in  \Symb_{k+1}$ holds for some $j\in [1\dd |S_k|)$.
    Let $x=S_{k-1}(i-\ell\dd i]$ and $x'=S_{k-1}(i\dd i+\ell']$ be blocks of $S_{k-1}$ collapsed to $S_k[j]$ and $S_k[j+1]$, respectively. Due to $\val(x)=\val(S_k[j])=\val(S_k[j+1])=\val(x')$, \cref{fct:cons} guarantees $x = x'$ and, in particular, $\ell=\ell'$.
    If $\ell=1$, then $S_{k-1}[i]=S_k[j]=S_k[j+1]=S_{k-1}[i+1] \in \Symb_{k+1}$.  
    Otherwise, $x= x' = A^\ell$ for some symbol $A\in \Symb_{k}$,
    which means that $S_{k-1}[i]=S_{k-1}[i+1]=A$.
    In either case, $S_{k-1}[i]=S_{k-1}[i]\in \Symb_{k}=\Symb_{k+1}$, which means that $\rle_{\Symb_k}(S_{k-1})$ does not place a block
    boundary after position $i$ in $S_{k-1}$. This contradicts the choice of $i$
    as the boundary between blocks $x$ and~$x'$.
  \end{proof}

\lemrecompr*
  \begin{proof}
    We proceed by induction on $k$, with a weaker assumption $\alpha \ge 15\ell_k$ for odd $k$.
    In the base case of $k=0$, the claim is trivial due to $B_k = [1\dd n)$.
    Next, we shall prove that the claim holds for integers $k>0$ and $\alpha > \ell_k$
    assuming that it holds for $k-1$ and $\alpha - \floor{\ell_k}$.
    This is sufficient for the inductive step: If $\alpha \ge 16\ell_k$ for even $k$, then $\alpha - \floor{\ell_k} \ge 15\ell_k = 15\ell_{k-1}$. Similarly, if $\alpha\ge 15\ell_k$ for odd $k$, then $\alpha - \floor{\ell_k}  \ge 14\ell_k = 16\ell_{k-1}$.
    
    For a proof by contradiction, suppose that $S(i-\alpha\dd i+\alpha]= S(i'-\alpha\dd i'+\alpha]$ yet $i\in B_{k}$ and $i'\notin B_{k}$ for some $i,i'\in [\alpha\dd n-\alpha]$.
    By the inductive assumption (applied to positions $i,i'$), $i\in B_{k} \sub B_{k-1}$ implies $i'\in B_{k-1}$.
    Let us set $j,j'$ so that  $i = |\val(S_{k-1}[1\dd j])|$ and $i' = |\val(S_{k-1}[1\dd j'])|$.
    Since a block boundary was not placed between $S_{k-1}[j']$ and $S_{k-1}[j'+1]$,
    we have $S_{k-1}[j'],S_{k-1}[j'+1]\in \Symb_k$ (see \cref{it:rle,it:pair}). 
    Consequently, the phrases $S(i'-\ell\dd i']= \val(S_{k-1}[j'])$ and $S(i'\dd i'+r]= \val(S_{k-1}[j'+1])$ around position $i'$ are of length at most $\floor{\ell_k}$. 
    By the inductive assumption (applied to positions $i+\delta,i'+\delta$ for $\delta \in [-\ell\dd r]\sub [-\floor{\ell_k}\dd \floor{\ell_k}]$),
    there are matching phrases $S(i-\ell\dd i]$ and $S(i\dd i+r]$ around position $i$.
    Due to \cref{fct:cons}, this yields $S_{k-1}[j]=S_{k-1}[j']$ and $S_{k-1}[j+1]=S_{k-1}[j'+1]$.
    Consequently, a block boundary was not placed between $S_{k-1}[j]$ and $S_{k-1}[j+1]$,
    which contradicts $i\in B_{k}$.
    \end{proof}

  \lemrand*
    \begin{proof}
        We proceed by induction on $k$. For $k=0$, we have $|S_0|=n < 1+4n = 1+\frac{4n}{\ell_1}$.
        If $k$ is odd, we note that $|S_{k}|\le |S_{k-1}|$ and therefore $\Exp[|S_{k}|] \le \Exp[|S_{k-1}|]
        < 1+\frac{4n}{\ell_{k}} = 1+\frac{4n}{\ell_{k+1}}$.
        Thus, it remains to consider even integers $k>0$.
      
        \begin{claim}\label{clm:rand0}
      For every even $k>0$, conditioned on any fixed $S_{k-1}$, we have $\Exp\left[|S_{k}| \bigm| S_{k-1}\right]< \tfrac14+\tfrac{n}{2\ell_k}+\tfrac34|S_{k-1}|$.
        \end{claim}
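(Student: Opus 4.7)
The plan is to mirror the proof of Claim~\ref{clm:rand} almost verbatim, exploiting the identity $|S_k| = 1 + |B_k|$ (which follows directly from the definition $B_k = \{|\val(S_k[1\dd j])| : j\in [1\dd |S_k|)\}$) together with a tighter counting step than the interval bound used there. Applying the earlier claim as a black box with $I = [1\dd n)$ would yield only $\tfrac12 + \tfrac{n}{2\ell_k} + \tfrac34|S_{k-1}|$ because of the additive $+1$ in $|J_I| < 1 + \tfrac{2|I|}{\ell_k}$; to reach the sharper constant $\tfrac14$, I would redo the $|J|$ bound globally rather than going through a generic interval.

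First I would introduce, exactly as in Claim~\ref{clm:rand}, the set
\[ J = \{j\in [1\dd |S_{k-1}|) : S_{k-1}[j]\notin\Symb_k \text{ or } S_{k-1}[j+1]\notin\Symb_k\}.\]
The key observation is that every symbol $A\in \Pres_{k-1}\sm \Symb_k$ has $|\val(A)|>\ell_k$, and these expansions form a partition of $\val(S_{k-1})=S$ of total length $n$. Hence there are strictly fewer than $n/\ell_k$ positions of $S_{k-1}$ carrying such a symbol, and each such position lies in at most two pairs $(j,j+1)$, giving $|J| < 2n/\ell_k$ with no spurious additive constant.

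Next I would handle the positions $j \in [1\dd |S_{k-1}|) \setminus J$: both $S_{k-1}[j]$ and $S_{k-1}[j+1]$ lie in $\Symb_k$, and since $k-1$ is odd, Corollary~\ref{cor:distinct} forces $S_{k-1}[j]\ne S_{k-1}[j+1]$. Because the partition $\Symb_k=\Left_k\cup\Right_k$ is uniform and independent of $S_{k-1}$, the probability that $S_{k-1}[j]\in\Left_k$ and $S_{k-1}[j+1]\in\Right_k$ (the only case in which $\pc_{\Left_k,\Right_k}$ merges the two symbols) is exactly $\tfrac14$, so a block boundary survives after position $j$ with probability $\tfrac34$.

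Combining these by linearity of expectation over the $|S_{k-1}|-1$ boundary candidates gives
\[\Exp[|S_k| \mid S_{k-1}] = 1 + |J| + \tfrac34(|S_{k-1}|-1-|J|) = \tfrac14 + \tfrac14|J| + \tfrac34|S_{k-1}| < \tfrac14 + \tfrac{n}{2\ell_k} + \tfrac34|S_{k-1}|,\]
which is the claimed bound. The only real pitfall is the arithmetic with the constants: one has to count the forced boundaries contributed by $J$ with weight $1$ and the free boundaries with weight $\tfrac34$, and then use the sharpened bound $|J|<2n/\ell_k$ rather than the $|J_I|$ estimate from the proof of Claim~\ref{clm:rand}.
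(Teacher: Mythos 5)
Your proof is correct and takes essentially the same route as the paper's: both bound the size of the forced-boundary set $J$ via the observation that every symbol outside $\Symb_k$ has expansion longer than $\ell_k$, and both use the $\tfrac34$ survival probability for boundaries between distinct symbols of $\Symb_k$ (Corollary~\ref{cor:distinct}). The only difference is cosmetic bookkeeping: the paper defines $J\subseteq[1\dd|S_{k-1}|]$ to also include the terminator $j=|S_{k-1}|$ (hence its bound $|J|<1+\tfrac{2n}{\ell_k}$), whereas you keep $J\subseteq[1\dd|S_{k-1}|)$ with $|J|<\tfrac{2n}{\ell_k}$ and recover the missing $1$ from the identity $|S_k|=1+|B_k|$ — the arithmetic lands on the same bound.
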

        \begin{proof}
          Let us define $J = 
          \{j\in [1\dd |S_{k-1}|] :j=|S_{k-1}|\text{ or }S_{k-1}[j]\notin\Symb_k\text{ or }S_{k-1}[j+1]\notin\Symb_k\}$.
        Since $A\notin \Symb_k$ yields $|\val(A)|>\ell_k$, we have $|J| < 1+\frac{2n}{\ell_k}$.
        Moreover, observe that if $j\in [1\dd |S_{k-1}|]\sm J$, then $S_{k-1}[j]$ and $S_{k-1}[j+1]$ are,
        by \cref{cor:distinct}, distinct symbols in $\Symb_k$. Consequently, $\Pr[S_{k-1}[j]\in \Left_k \text{ and } S_{k-1}[j+1]\in \Right_k] = \tfrac14$.
        Thus, the probability that $\pc_{\Left_k,\Right_k}(S_{k-1})$ places a block boundary after position $j\in [1\dd |S_{k-1}|]\sm J$ is $\frac34$. Therefore,
        \[\Exp\left[|S_{k}| \bigm| S_{k-1}\right] = |J| + \tfrac34(|S_{k-1}|-|J|) = \tfrac14|J|+\tfrac34|S_{k-1}|<\tfrac14+\tfrac{n}{2\ell_k}+\tfrac34|S_{k-1}|.\qedhere\]
        \end{proof}
      
        Since the partition $\Symb_k=\Left_k\cup\Right_k$ is independent of $S_{k-1}$,
        \cref{clm:rand} and the inductive assumption yield
        \[\Exp[|S_k|] < \tfrac14 + \tfrac{n}{2\ell_k}+\tfrac34\Exp[|S_{k-1}|] < \tfrac14+\tfrac{n}{2\ell_k} + \tfrac34+\tfrac{3n}{\ell_k}=1+\tfrac{7n}{2\ell_k}=1+\tfrac{4n}{\ell_{k+1}}.\qedhere\]
      \end{proof}

\end{document}